\def\ruleone#1#2{\prooftree #1 \justifies #2 \endprooftree}
\def\ruletwo#1#2#3{\ruleone{#1\quad #2}{#3}}
\def\ruledot#1{\prooftree \proofdotseparation=1.2ex\proofdotnumber=3
               \leadsto #1 \justifies #1 \endprooftree}
\theoremstyle{plain}
\newtheorem{theorem}{Theorem}[section]
\newtheorem{fact}[theorem]{Fact}
\newtheorem{corollary}[theorem]{Corollary}
\newtheorem{lemma}[theorem]{Lemma}
\newtheorem{proposition}[theorem]{Proposition}
\theoremstyle{definition}
\newtheorem{definition}[theorem]{Definition}
\newtheorem{convention}[theorem]{Convention}
\newtheorem{note}[theorem]{Note}
\newtheorem{example}[theorem]{Example}
\numberwithin{equation}{section}
\newcommand{\tri}{\triangleright}
\newcommand{\x}{\mathsf{x}}
\newcommand{\y}{\mathsf{y}}
\newcommand{\z}{\mathsf{z}}
\newcommand{\Up}{\Uparrow\!}
\newcommand{\D}{\mathsf{D}}
\newcommand{\un}{\underline{0}}
\newcommand{\ulam}{\boldsymbol{\lambda}}
\begin{document}
\title{Is $\alpha$-conversion easy?}
\author{George Cherevichenko}
\date{}
\maketitle
We present a new $\lambda$-calculus  with explicit
substitutions and named variables.  Renaming of bound variables in this calculus is explicit (there is a special rewrite rule) and can be delayed. Contexts (environments) are not sets or lists without multiplicity, but have a more complicated structure. There is a natural order on the set of contexts. A ``set" of free variables is not a set, but a context in the new sense. New definitions simplify working with $\alpha$-conversion.

\newpage

\section{Introduction, $\alpha$-conversion for the usual $\lambda$-terms}

\begin{figure}
\begin{framed}
\noindent
\textbf{Syntax.}
\begin{align*}
\x::&= x\mid y\mid z\mid\ldots \tag{Variables}\\
A,B::&= \x \mid  AB \mid \lambda\x.A   \tag{Terms}
\end{align*}

\textbf{Inference rules.}
\begin{align*}
&R1 && G\vdash\x && (\x\in G)\\[5pt]
&R2&& \Gamma,\x\vdash \x &\\[5pt]
&R3&& \ruleone{\Gamma\vdash
\x}{\Gamma,\y\vdash \x} && (\x\neq \y) \\[10pt]
&R4&& \ruletwo{\Gamma\vdash A}{\Gamma\vdash B}{\Gamma\vdash
AB} &\\[10pt]
&R5&& \ruleone{\Gamma,\x\vdash A}{\Gamma\vdash\lambda \x.A}
\end{align*}

\textbf{Free variables.}
\begin{align*}
FV(\x)&=\{\x\}\\
FV(AB)&=FV(A)\cup FV(B)\\
FV(\lambda\x.A)&=FV(A)-\{\x\}
\end{align*}

\end{framed}
\caption{Terms and inference rules}\label{inff1}
\end{figure}

\begin{convention}The symbols $x,y,z\ldots$ are \emph{variables}. The symbols $\x,\y,\z$ range over variables. The inequality $\x\neq\y$ means that $\x$ and $\y$ denote different variables.
\end{convention}
\begin{definition} A \emph{global context} is a possibly empty, finite set of variables.
\end{definition}
\begin{definition}
A \emph{local context} is a possibly empty, finite list of variables
with multiplicity (i.e., repetitions are permitted).
\end{definition}
\begin{example}
The list $x,x,y$ is a local context.
\end{example}
\begin{definition}A \emph{context} is a pair $G,L$, where $G$ is a global context  and $L$ is a local context. The symbols
$\Gamma,\Delta,\Sigma$ range over contexts. If $\Gamma$ is $G,L$ then $\x\in\Gamma$ is shorthand for $\x\in G\vee\x\in L$.
\end{definition}
\begin{example} $\{x,z\},x,x,y$ is a context, where $\{x,z\}$ is a global context  and $x,x,y$ is a local context.
\end{example}
\begin{convention}If $\,\Gamma$ is  $G,L$ then   $\Gamma,\x$ is shorthand for $G,L,\x$ (the global context is $G$ and the local context is $L,\x$).
\end{convention}

\begin{definition}\emph{Terms}  are defined  on Figure~\ref{inff1}.\\
A \emph{judgement} is an expression of the form $\Gamma\vdash A$.\\ Inference rules for judgements are shown on Figure~\ref{inff1} where $G\vdash \x$ is shorthand for $G,nil\vdash \x$  and $nil$ is the empty list.
\end{definition}

\begin{example}
$$
\ruleone{\ruleone{\emptyset,x,x\vdash x}{\emptyset,x\vdash\lambda x.x}}{\emptyset\vdash\lambda
x.\lambda x.x}
$$
\end{example}
\begin{example}
$$\ruleone{\ruleone{\ruletwo{\ruleone{\emptyset,x\vdash x}{\emptyset,x,y\vdash x}}{\emptyset,x,y\vdash y}{\emptyset,x,y\vdash xy}}{\emptyset,x\vdash\lambda y.xy}}{\emptyset\vdash\lambda x.\lambda y.xy}$$
\end{example}
\begin{example}
$$\ruleone{\ruletwo{\ruleone{\{x\}\vdash x}{\{x\},y\vdash x}}{\{x\},y\vdash y}{\{x\},y\vdash xy}}{\{x\}\vdash\lambda y.xy}$$
\end{example}

In the similar calculus with types the judgement $\Gamma\vdash \x:T$ means ``the rightmost occurrence of $\x$ in $\Gamma$ has  type $T\,$". Hence, we can derive\\
$\{x: \mathbb{N}, y: \mathbb{N}\},y: \mathbb{R}\vdash x: \mathbb{N}$

$$\ruleone{\{x: \mathbb{N}, y: \mathbb{N}\}\vdash x: \mathbb{N}} {\{x: \mathbb{N}, y: \mathbb{N}\}, y: \mathbb{R} \vdash x: \mathbb{N}}$$
and $\{x: \mathbb{N}, y: \mathbb{N}\},y: \mathbb{R}\vdash y: \mathbb{R}$\\
but we can not derive\\
$\{x: \mathbb{N}, y: \mathbb{N}\},y: \mathbb{R}\vdash y: \mathbb{N}$

 \begin{lemma}[Generation lemma]$ $\\
Each derivation of $\,G\vdash \x$ is an application of the
rule~$R1$.\\
Each derivation of $\,\Gamma,\x\vdash \x$ is an application of the
rule~$R2$.\\
Each derivation of $\,\Gamma,\y\vdash \x$ $($where $\x\neq \y)$ is
an application of the rule~$R3$ to some derivation of
$\,\Gamma\vdash
\x$.\\
Each derivation of $\,\Gamma\vdash AB$ is an application of the
rule~$R4$ to some  derivations of $\,\Gamma\vdash A$ and
$\,\Gamma\vdash B$.\\
Each derivation of $\,\Gamma\vdash \lambda \x.A$  is an
application of the rule~$R5$ to some derivation of
$\,\Gamma,\x\vdash A$.\\
\end{lemma}
\begin{proof}
The proof is straightforward.\\
\end{proof}

\begin{lemma}\label{lemmaax}$\Gamma\vdash\x$ is derivable iff $\x\in\Gamma$.
 \end{lemma}
 \begin{proof}Induction over the length of local part of $\Gamma$. If this length is equal to $0$, then $\Gamma$ has the form $G$ and $G\vdash\x$ is derivable iff $\x\in G$. If $\Gamma$ has the form $\Delta,\y$ then either $\x=\y$  or $\x\neq\y$. In the first case $\Gamma\vdash\x$ is derivable and $\x\in\Gamma$. In the last case $\Delta,\y\vdash\x$ is derivable iff $\Delta\vdash\x$ is derivable and we use the induction hypothesis.\\
 \end{proof}

\begin{proposition} For any derivable judgement, its
derivation is unique.
\end{proposition}
\begin{proof} We  construct the derivation from the bottom up, using Generation lemma.
\end{proof}

We obtain this result because there are no weakening rules except of $R3$.

\begin{fact} $FV(A)\vdash A$ is derivable for each term $A$.
\end{fact}

\begin{figure}
\begin{framed}
\noindent

\textbf{Syntax.}
\begin{align*}
\x::&= x\mid y\mid z\mid\ldots \tag{Variables}\\
a,b::&= \x \mid \un \mid a[\uparrow] \mid  ab \mid \lambda a   \tag{``De Brujn's terms''}
\end{align*}

\textbf{Correspondence.}
\begin{align*}
& &&\|G\vdash \x\|=  \x & (\x\in G)\\[15pt]
&&&\|\Gamma,\x\vdash \x\|=  \un \\[15pt]
&&&\ruleone{\|\Gamma\vdash
\x\|=  a}{\|\Gamma,\y\vdash \x\|=   a[\uparrow]} & (\x\neq \y)\\[20pt]
&&&\ruletwo{\|\Gamma\vdash A\|=   a}{\|\Gamma\vdash B\|=  b}{\|\Gamma\vdash
AB\|=   ab}\\[15pt]
&&&\ruleone{\|\Gamma,\x\vdash A\|=   a}{\|\Gamma\vdash\lambda \x.A\|=    \lambda a}
\end{align*}
\end{framed}
\caption{Correspondence}\label{correspondencee}
\end{figure}

\begin{definition}We associate with every derivable judgement $\Gamma\vdash A$  some
``De Brujn's term'' $\|\Gamma\vdash A\|$ as it is shown on Figure~\ref{correspondencee} (by recursion over the unique
derivation of $\Gamma\vdash A$).
\end{definition}

\begin{example}
$$
\ruleone{\|\emptyset,x\vdash x\|=\un}{\|\emptyset\vdash\lambda x.x\|=\lambda\un}
$$
\end{example}

\begin{example}
$$\ruleone{\ruletwo{\ruleone{\|\{x\}\vdash x\|= x}{\|\{x\},y\vdash x\|= x[\uparrow]}}{\|\{x\},y\vdash
y\|=\un}{\|\{x\},y\vdash xy\|= (x[\uparrow])\un}}{\|\{x\}\vdash\lambda y.xy\|=
\lambda(x[\uparrow])\un}$$
\end{example}

   \begin{definition}We write $A\equiv_{\alpha}B$ iff
 $\|FV(A)\vdash A\|$ is the same ``De Brujn's term'' as $\|FV(B)\vdash B\|$.
  \end{definition}

  \begin{example} $\lambda y.xy\equiv_{\alpha}\lambda z.xz$,  because\\
  $\|\{x\}\vdash\lambda y.xy\|=
\lambda(x[\uparrow])\un=
\|\{x\}\vdash\lambda z.xz\|$
  \end{example}

 Now we define a partial order on the set of contexts. We want $FV(A)$ to be ``the smallest" context $\Gamma$ such that $\Gamma\vdash A$ is derivable. We want $FV(x)=\{x\}$. But $\emptyset,x\vdash x$ is derivable too, hence we want $\{x\}<\emptyset,x$

 \begin{definition}The order $\leqslant$  on the set of all contexts is the smallest partial order with the following properties:
 \begin{enumerate}
   \item $G,L< G\cup\{\x\},L\qquad(\x\not\in G)$
   \item $G,L< (G-\{\x\}),\x,L$
    \end{enumerate}
 \end{definition}

 \begin{example}$\{z\},y<\{z,x\},y<\{z\},x,y<\{z,x\},x,y$
 \end{example}

 \begin{proposition}$G_1,L_1\leqslant G_2,L_2$ iff $L_2=LL_1$ for some $L$ and\\ $\forall\x\in G_1(\x\in G_2\vee\x\in L)$.
 \end{proposition}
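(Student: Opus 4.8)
The plan is to prove the two implications separately. Write $P(G_1,L_1,G_2,L_2)$ for the right-hand condition: there is a list $L$ with $L_2=LL_1$ such that every $\x\in G_1$ satisfies $\x\in G_2$ or $\x\in L$. For the forward direction ($\leqslant$ implies $P$) I would argue by minimality: I show that $P$ is itself a partial order on contexts satisfying the two generating inequalities of the definition, so that the least such order $\leqslant$ is contained in $P$. For the backward direction ($P$ implies $\leqslant$) I would exhibit an explicit chain of generator steps leading from $G_1,L_1$ up to $G_2,L_2$, each step being one of the two generating inequalities and hence lying in $\leqslant$.

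For the forward direction I first check that $P$ is a partial order. Reflexivity uses the empty prefix $L=nil$. For transitivity, given $P(G_1,L_1,G_2,L_2)$ via a prefix $L'$ and $P(G_2,L_2,G_3,L_3)$ via a prefix $L''$, the prefix $L''L'$ witnesses $L_3=(L''L')L_1$, and for $\x\in G_1$ a short case split (on whether $\x\in L'$, or $\x\in G_2$ and then whether $\x\in G_3$ or $\x\in L''$) shows $\x\in G_3$ or $\x\in L''L'$. For antisymmetry, prefixes with $L_2=L'L_1$ and $L_1=L''L_2$ give $L_1=L''L'L_1$, which forces $L'=L''=nil$ by comparing lengths; the two global conditions then degenerate to $G_1\subseteq G_2$ and $G_2\subseteq G_1$, so the contexts coincide. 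Finally I verify that the two generators lie in $P$: for $G,L< G\cup\{\x\},L$ take the empty prefix, and for $G,L<(G-\{\x\}),\x,L$ take the prefix $\x$; in each case the global condition is immediate. By minimality of $\leqslant$ this yields $\leqslant\ \subseteq P$.

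For the backward direction, suppose $P(G_1,L_1,G_2,L_2)$ with $L_2=LL_1$ and write $L=\x_1\ldots\x_k$. I build the chain in two phases. In Phase 1 I prepend the variables of $L$ one at a time, applying generator 2 with $\x_k,\x_{k-1},\ldots,\x_1$ in turn; after these $k$ steps the local context is $\x_1\ldots\x_k L_1=L_2$, while the global context has become $G_1$ with every variable occurring in $L$ deleted. In Phase 2 I raise the global context to $G_2$ by adding the missing variables one at a time via generator 1. The crucial point is that this is possible using only additions: if $\x$ lies in $G_1$ but is not deleted in Phase 1, then $\x\notin L$, so the hypothesis $\x\in G_2\vee\x\in L$ forces $\x\in G_2$; hence the global context reached after Phase 1 is contained in $G_2$, and generator 1 alone completes the chain. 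By transitivity of $\leqslant$ we conclude $G_1,L_1\leqslant G_2,L_2$.

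The main obstacle is organizing the backward construction correctly, and in particular noticing that generator 2 necessarily removes the prepended variable from the global context — including variables one ultimately wants back in $G_2$. This is what forces the two-phase order (prepend first, then restore the global context by additions), and it pinpoints exactly where the hypothesis $\forall\x\in G_1(\x\in G_2\vee\x\in L)$ is used: it is precisely the condition guaranteeing that the post-Phase-1 global context is a subset of $G_2$, so that no deletions from the global context are ever needed and generator 1 suffices to close the gap.
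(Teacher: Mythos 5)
Your proof is correct. The paper offers no argument here (it simply declares the proposition ``straightforward''), so there is nothing to compare against in detail; your two-direction scheme --- showing the right-hand condition $P$ is itself a partial order containing the two generators (so the least such order is contained in $P$), and then realizing any instance of $P$ as an explicit chain of generator steps (prepend the prefix via generator 2, then enlarge the global part via generator 1) --- is the natural way to make the claim precise, and your observation that the hypothesis $\forall\x\in G_1(\x\in G_2\vee\x\in L)$ is exactly what guarantees the post-prepending global context sits inside $G_2$ is the one point where care is genuinely needed.
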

 \begin{proof}Straightforward.
 \end{proof}

 \begin{proposition}$ $\\
 \begin{enumerate}
   \item $\Gamma,\x\leqslant\Delta,\x$ iff $\,\Gamma\leqslant\Delta$;
   \item $\Gamma,\x\leqslant\Delta$ implies $\Delta$ has the form $\Sigma,\x$;
   \item $\Gamma\leqslant\Delta,\x$ implies $\Gamma$ has the form $\Sigma,\x$ or $\Gamma$ has the form $G$;
    \item $\Gamma\leqslant G$ implies $\Gamma$ is also a set;
    \item $G\leqslant\Gamma,\x$ iff $G-\{\x\}\leqslant\Gamma$
\end{enumerate}
 \end{proposition}
 \begin{proof}Straightforward.
 \end{proof}

  \begin{theorem}\label{increasse} If $\,\Gamma\vdash A$ is derivable and $\,\Gamma\leqslant\Sigma$ then $\Sigma\vdash A$ is derivable.
 \end{theorem}
 \begin{proof}Induction over the structure of $A$.\\[5pt]
 Case 1. $A$ is $\x$. Then $\Gamma\vdash\x$ is derivable iff $\x\in\Gamma$ (Lemma~\ref{lemmaax}). It is easy to prove that $\x\in\Gamma$ and $\Gamma\leqslant\Sigma$ imply $\x\in\Sigma$, hence $\Sigma\vdash\x$ is derivable.\\[5pt]
 Case 2. $A$ is $B_1B_2$. By Generation lemma $\Gamma\vdash B_1$ and $\Gamma\vdash B_2$ are derivable. By induction hypothesis $\Sigma\vdash B_1$ and $\Sigma\vdash B_2$ are derivable, hence $\Sigma\vdash B_1B_2$ is derivable.\\[5pt]
 Case 3. $A$ is $\lambda\x.B$. By Generation lemma $\Gamma,\x\vdash B$ is derivable. $\Gamma\leqslant\Sigma$ implies $\Gamma,\x\leqslant\Sigma,\x$. By induction hypothesis $\Sigma,\x\vdash B$ is derivable, hence $\Sigma\vdash\lambda\x.B$ is derivable.
  \end{proof}

  \begin{definition}$\Gamma$ and $\Delta$ are \emph{compatible} iff $\Gamma\leqslant\Sigma$ and $\Delta\leqslant\Sigma$ for some $\Sigma$.
 \end{definition}

 \begin{proposition}\label{cup} Any set $G$ is compatible with any $\Gamma$. \\
 $G_1,L_1$ and $G_2,L_2$ are compatible iff $L_1=LL_2$ or $L_2=LL_1$ for some $L$.\\
 If $\Gamma$ and $\Delta$ are compatible, there exists their supremum $\Gamma\sqcup\Delta$.
 \end{proposition}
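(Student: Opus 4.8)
The plan is to discharge all three assertions by reducing $\leqslant$ to its explicit characterization proved just above, namely that $G_1,L_1\leqslant G_2,L_2$ holds exactly when $L_2=LL_1$ for some list $L$ and every variable of $G_1$ lies in $G_2$ or occurs in $L$. Under this description the local components contribute a list-suffix condition and the global components contribute a set-membership condition, so each claim becomes an elementary statement about suffixes of lists together with bookkeeping on finite sets of variables. Throughout I write $\mathrm{vars}(L)$ for the finite set of variables occurring in a list $L$, and I record the elementary fact that two lists which are both suffixes of a common list are comparable in the suffix order: the shorter is a suffix of the longer, since a suffix is determined by its length.

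For the first assertion I would exhibit an explicit common upper bound. Writing $\Gamma=G',L'$, I claim $\Sigma=(G\cup G'),L'$ dominates both $G$ (that is, $G,nil$) and $\Gamma$. Indeed $\Gamma\leqslant\Sigma$ because $L'=nil\cdot L'$ and every variable of $G'$ lies in $G\cup G'$; and $G,nil\leqslant\Sigma$ because $L'=L'\cdot nil$ and every variable of $G$ lies in $G\cup G'$. Hence $G$ and $\Gamma$ are compatible.

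The second assertion splits into two directions. For the forward direction, if $\Sigma=G_\Sigma,L_\Sigma$ dominates both $G_1,L_1$ and $G_2,L_2$, then by the characterization $L_1$ and $L_2$ are both suffixes of $L_\Sigma$, so by the recorded fact one is a suffix of the other, i.e. $L_1=LL_2$ or $L_2=LL_1$. For the converse I may assume by symmetry $L_2=LL_1$, and then take $\Sigma=(G_1\cup G_2),L_2$: one checks directly from the characterization that $G_2,L_2\leqslant\Sigma$ (prefix $nil$, and $G_2\subseteq G_1\cup G_2$) and $G_1,L_1\leqslant\Sigma$ (prefix $L$, and $G_1\subseteq G_1\cup G_2$), so the two contexts are compatible.

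For the supremum I would, again assuming $L_2=LL_1$, propose the candidate $\Gamma\sqcup\Delta=(G_2\cup(G_1\setminus\mathrm{vars}(L))),L_2$, whose local part is forced to be $L_2$ (the longer, hence dominant, suffix) and whose global part includes $G_2$ together with exactly those variables of $G_1$ not already supplied by the prefix $L$; since least upper bounds in a poset are unique, this asymmetric-looking description is unambiguous. Checking that it is an upper bound is immediate from the characterization. The real work, and the step I expect to be the main obstacle, is minimality: given any upper bound $\Sigma'=G',L'$, I must show $\Gamma\sqcup\Delta\leqslant\Sigma'$. Here $\Delta\leqslant\Sigma'$ gives $L'=ML_2$ with the required control of $G_2$, while $\Gamma\leqslant\Sigma'$ gives $L'=NL_1$; the key observation is that cancelling the common suffix $L_1$ from $L'=NL_1=(ML)L_1$ forces $N=ML$, so $\mathrm{vars}(N)=\mathrm{vars}(M)\cup\mathrm{vars}(L)$. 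A variable of $G_1\setminus\mathrm{vars}(L)$ therefore lies in $G'$ or in $M$ (it cannot lie in $L$), which is exactly what is needed to verify $\Gamma\sqcup\Delta\leqslant\Sigma'$ via the characterization. The care required in matching the two suffix decompositions and tracking which variables the prefix $L$ absorbs is the delicate part; everything else is routine.
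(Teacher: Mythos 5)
Your proof is correct, but it takes a different route from the paper's. The paper's entire proof consists of a structural recursion that \emph{computes} $\Gamma\sqcup\Delta$ by peeling off the rightmost local variable: matching variables are peeled from both sides, and when one side has been reduced to a set $G$, each further local variable $\x$ of the other side is peeled while being deleted from $G$, until the base case $G_1\sqcup G_2=G_1\cup G_2$. Unwinding that recursion on $(G_1,L_1)\sqcup(G_2,L_2)$ with $L_2=LL_1$ yields exactly your closed form $(G_2\cup(G_1\setminus\mathrm{vars}(L))),L_2$, so the two constructions agree; but the paper stops at the recursive definition and leaves the verification that it is an upper bound, that it is least, and that the two compatibility criteria hold entirely implicit. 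You instead work from the explicit characterization of $\leqslant$ in the preceding proposition (suffix condition on local parts, membership condition on global parts) and check everything directly: the common-upper-bound witnesses for the first two claims, the suffix-comparability argument for the forward direction, and the cancellation $L'=NL_1=(ML)L_1\Rightarrow N=ML$ that drives the minimality check. Your version is longer but actually discharges the proof obligations the paper waves at; the paper's recursion has the advantage of doubling as an algorithm and of being the form in which $\sqcup$ is reused later (e.g.\ in Lemma~\ref{lamcup}, whose proof refers back to this recursion to commute $O_{\lambda\x}$ with $\sqcup$). No gaps.
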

 \begin{proof} $\Gamma\sqcup\Delta$, if exists, can be calculated recursively using the following rules:
 \begin{itemize}
   \item []$(\Gamma,\x)\sqcup(\Delta,\x)=(\Gamma\sqcup\Delta),\x$
   \item []$(\Gamma,\x)\sqcup G=(\Gamma\sqcup (G-\{\x\})),\x$
   \item []$G\sqcup(\Gamma,\x)=((G-\{\x\})\sqcup\Gamma),\x$
   \item []$G_1\sqcup G_2=G_1\cup G_2$
       \end{itemize}
    \end{proof}

    \begin{example}$(\{x\},z)\sqcup\{y,z\}=(\{x\}\cup\{y\}),z=\{x,y\},z$
    \end{example}

\begin{theorem}If $\Gamma\vdash A$ is derivable, then  $FV(A)\leqslant\Gamma$
\end{theorem}
\begin{proof} Induction over the structure of $A$.\\[5pt]
Case 1. $A$ is $\x$. Then $FV(A)=\{\x\}$. If $\Gamma\vdash\x$ is derivable, then $\x\in\Gamma$ (Lemma~\ref{lemmaax}), hence $\{\x\}\leqslant\Gamma$\\[5pt]
Case 2. $A$ is $B_1B_2$. Then $\Gamma\vdash B_1$ and $\Gamma\vdash B_2$ are derivable by Generation lemma. By induction hypothesis  $FV(B_1)\leqslant\Gamma$ and $FV(B_2)\leqslant\Gamma$. Hence   $FV(B_1B_2)=FV(B_1)\sqcup FV(B_2)\leqslant\Gamma$ \\[5pt]
Case 3. $A$ is $\lambda\x.B$. Then $\Gamma,\x\vdash B$ is derivable by Generation lemma. By induction hypothesis  $FV(B)\leqslant\Gamma,\x$. Hence\\
  $FV(\lambda\x.B)=FV(B)-\{\x\}\leqslant\Gamma$
   \end{proof}

   \begin{theorem} $FV(A)\vdash A$ is derivable for each $A$.
\end{theorem}
\begin{proof} Induction over the structure of $A$.\\[5pt]
Case 1. $A$ is $\x$. Then $FV(A)=\{\x\}$ and $\{\x\}\vdash\x$ is derivable.\\[5pt]
Case 2. $A$ is $B_1B_2$.  $FV(B_1B_2)=FV(B_1)\sqcup FV(B_2)$ \\
 $FV(B_1)\vdash B_1$ and $FV(B_2)\vdash B_2$ are derivable by  induction hypothesis.\\  $FV(B_1)\sqcup FV(B_2)\vdash B_1$ and $FV(B_1)\sqcup FV(B_2)\vdash B_2$ are derivable by Theorem~\ref{increasse}. Further
 $$\ruletwo{FV(B_1)\sqcup FV(B_2)\vdash B_1}{FV(B_1)\sqcup FV(B_2)\vdash B_2}{FV(B_1)\sqcup FV(B_2)\vdash B_1B_2}$$
   \\[5pt]
  Case 3. $A$ is $\lambda\x.B$. $FV(\lambda\x.B)=FV(B)-\{\x\}$\\
  $FV(B)\vdash B$ is derivable by induction hypothesis.
   Note that\\ $FV(B)<(FV(B)-\{\x\}),\x$\\
    Hence $(FV(B)-\{\x\}),\x\vdash B$ is derivable (Theorem~\ref{increasse}).
Further
 $$\ruleone{(FV(B)-\{\x\}),\x\vdash B}{FV(B)-\{\x\}\vdash\lambda\x.B}$$

\end{proof}

\begin{corollary} $FV(A)$ is the smallest context $\Gamma$ such that $\Gamma\vdash A$ is derivable.
\end{corollary}

\newpage

\section{Terms with explicit substitutions}
\begin{figure}
\begin{framed}
\noindent
\textbf{Syntax.} $\Lambda\alpha$ is the set of terms inductively defined by the following BNF:
\begin{align*}
\x::&= x\mid y\mid z\mid\ldots \tag{Variables}\\
A,B::&= \x \mid  AB \mid \lambda\x.A  \mid S\circ A \tag{Terms}\\
S::&=  [B/\x] \mid W_\x  \mid \{\y\x\} \mid\,\, \Up S_\x  \tag{Substitutions}
\end{align*}

\textbf{Inference rules.}
\begin{align*}
&R1 && G\vdash\x && (\x\in G)\\[5pt]
&R2&& \Gamma,\x\vdash \x &\\[5pt]
&R3&& \ruleone{\Gamma\vdash
\x}{\Gamma,\y\vdash \x} && (\x\neq \y) \\[10pt]
&R4&& \ruletwo{\Gamma\vdash A}{\Gamma\vdash B}{\Gamma\vdash
AB} &\\[10pt]
&R5&& \ruleone{\Gamma,\x\vdash A}{\Gamma\vdash\lambda \x.A} &\\[10pt]
&R6&& \ruletwo{\Gamma\vdash S\tri\Delta}{\Delta\vdash
A}{\Gamma\vdash
 S\circ A} &\\[10pt]
 &R7&& \ruleone{\Gamma\vdash
B}{\Gamma\vdash
[B/\x]\tri\Gamma ,\x} &\\[10pt]
 &R8&& \Gamma,\x\vdash W_\x \tri\Gamma &\\[5pt]
 &R9&& \Gamma,\y\vdash \{\y\x\} \tri\Gamma,\x &\\[5pt]
 &R10&& \ruleone{\Gamma\vdash S\tri\Delta}{\Gamma,\x\vdash\,\,
\Up S_\x\tri\Delta ,\x} &
\end{align*}
\end{framed}
\caption{Terms and substitutions}\label{inf1}
\end{figure}

\begin{definition}\emph{Terms} and \emph{substitutions}  are defined on Figure~\ref{inf1}.\\
A \emph{judgement} is an expression of the form $\Gamma\vdash A$
or of the form $\Gamma\vdash S\tri\Delta$.\\
 Inference rules for judgements are shown on Figure~\ref{inf1}.\\
  A term $A$ is \emph{well-formed} iff $\Gamma\vdash A$ is derivable for some $\Gamma$.\\
  Informally, \\
  $\begin{array}{lll}
S\circ A & \text{corresponds to}
 & A[S]\\
 W_\x & \text{corresponds to}
 & \uparrow\\
 \{\y\x\} & \text{corresponds to}
 & \un\,\cdot\!\uparrow\\
  \Up S_\x & \text{corresponds to}
 & \Up S
 \end{array}$
\end{definition}

\begin{convention}$ $\\[5pt]
$\begin{array}{lll}
S\circ AB_1\ldots B_k & \text{is shorthand for}
 & S\circ(AB_1\ldots B_k)\\
 \lambda \x.S\circ A & \text{is shorthand for}
 & \lambda \x.(S\circ A)\\
 S\circ \lambda \x.A & \text{is shorthand for}
 & S\circ(\lambda \x.A)\\
  S_1\circ \ldots\circ S_k\circ A & \text{is shorthand for}
 & S_1\circ(\ldots\circ(S_{k-1}\circ(S_k\circ A))\ldots)\\
\Up\ldots\Up S_{\x\y\ldots\z} & \text{is shorthand for}
 & \Up(\ldots(\Up(\Up S_{\x})_{\y})\ldots)_{\z}
 \end{array}$
\end{convention}

\begin{example}$ $\\[5pt]
$\begin{array}{lll}
W_x\circ W_y\circ\lambda z.W_z\circ xy & \text{is shorthand for} & W_x\circ (W_y\circ(\lambda z.(W_z\circ (xy))))
\end{array}$
\end{example}

\begin{example}\label{strange}
$$\ruleone{\ruletwo{\{x\},x\vdash W_x\tri\{x\}}{\{x\}\vdash x}{\{x\},x\vdash W_x\circ x}}{\{x\}\vdash\lambda x.W_x\circ x}$$
\end{example}

\begin{proposition}The following rules are admissible:\\[10pt]
$$\ruleone
{\Gamma\vdash A}
{\Gamma,\x\vdash W_\x\circ A}\qquad
\ruleone
{\Gamma,\x\vdash A}
{\Gamma,\y\vdash\{\y\x\}\circ A}\qquad
\ruletwo
{\Gamma,\x\vdash A}
{\Gamma\vdash B}
{\Gamma\vdash[B/\x]\circ A}$$
\end{proposition}
\begin{proof}
$$\ruletwo
{\Gamma,\x\vdash W_\x\tri\Gamma}
{\ruledot{\Gamma\vdash A}}
{\Gamma,\x\vdash W_\x\circ A}\qquad
\ruletwo
{\Gamma,\y\vdash\{\y\x\}\tri\Gamma,\x}
{\ruledot{\Gamma,\x\vdash A}}
{\Gamma,\y\vdash\{\y\x\}\circ A}\qquad
\ruletwo
{\ruleone
{\ruledot{\Gamma\vdash B}}
{\Gamma\vdash[B/\x]\tri\Gamma,\x}}
{\ruledot{\Gamma,\x\vdash A}}
{\Gamma\vdash[B/\x]\circ A}$$

\end{proof}

\begin{example}\label{notwellformterm} A
judgement of the form $\Gamma\vdash\lambda \x.W_\y\circ A $ is not derivable
 if $\x\neq \y$.
$$
\ruleone{\ruletwo{\Gamma,\y\vdash
W_\y\tri\Gamma}{\ruledot{\Gamma\vdash A}}{\Gamma,\y\vdash W_\y\circ
A }}{\Gamma\vdash\lambda \x.W_\y\circ A \using(?)}
$$
Hence, a term of the form $\lambda \x.W_{\y}\circ A $ is not well-formed
 if $\x\neq \y$.
\end{example}

\begin{example}A term of the form $(W_\x\circ A)(W_\y\circ B)$ is not
well-formed  if $\x\neq \y$.
$$
\ruletwo{\ruletwo{\Gamma,\x\vdash W_\x\tri\Gamma}{\ruledot{\Gamma\vdash
A}}{\Gamma,\x\vdash W_\x\circ A}}
{\ruletwo{\Delta,\y\vdash W_\y\tri\Delta}{\ruledot{\Delta\vdash
B}}{\Delta,\y\vdash W_\y\circ B}} {?\vdash(W_\x\circ A)(W_\y\circ
B)}
$$
\end{example}

\begin{lemma}[Generation lemma]$ $\\
Each derivation of $\,G\vdash \x$ is an application of the
rule~$R1$.\\
Each derivation of $\,\Gamma,\x\vdash \x$ is an application of the
rule~$R2$.\\
Each derivation of $\,\Gamma,\y\vdash \x$ $($where $\x\neq \y)$ is
an application of the rule~$R3$ to some derivation of
$\,\Gamma\vdash
\x$.\\
Each derivation of $\,\Gamma\vdash AB$ is an application of the
rule~$R4$ to some  derivations of $\,\Gamma\vdash A$ and
$\,\Gamma\vdash B$.\\
Each derivation of $\,\Gamma\vdash \lambda \x.A$  is an
application of the rule~$R5$ to some derivation of
$\,\Gamma,\x\vdash A$.\\
Each derivation of $\,\Gamma\vdash S\circ A$ is an application of
the rule~$R6$ to some derivations of $\,\Gamma\vdash S\tri\Delta$
and $\,\Delta\vdash A$ for some
$\Delta$.\\
Each derivation of $\,\Gamma\vdash [B/\x]\tri\Delta$ is an application
of the
rule~$R7$ to some derivation of $\,\Gamma\vdash B$, where $\Delta$ is $\Gamma,\x$.\\
Each derivation of $\,\Delta\vdash W_\x\tri\Gamma$ is an
application of the rule~$R8$, where $\Delta$ is $\Gamma,\x$.\\
Each derivation of $\,\Delta\vdash\{\y\x\}\tri\Sigma$
 is an application of the rule~$R9$, where  $\Delta$ is $\Gamma,\y$ and $\Sigma$ is $\Gamma,\x$ for some $\Gamma$.\\
Each derivation of $\,\Sigma\vdash \,\,\Up S_\x\tri\Psi$ is an
application of the rule~$R10$ to some derivation of
$\,\Gamma\vdash S\tri\Delta$, where $\Sigma$ is $\Gamma,\x$ and $\Psi$ is $\Delta,\x$.
\end{lemma}
\begin{proof}
The proof is straightforward.\\
\end{proof}

\begin{corollary}Subterms of well-formed terms are well-formed.
\end{corollary}

\begin{lemma}\label{lemmax}$\Gamma\vdash\x$ is derivable iff $\x\in\Gamma$.
 \end{lemma}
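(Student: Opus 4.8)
The statement is a biconditional, so the plan is to prove the two implications separately, in each case by induction on the length of the local context $L$ when $\Gamma$ is written as the pair $G,L$. Decomposing the context into its global and local layers is the key organizing idea, since the three rules that can conclude a judgement with a bare variable as its term, namely $R1$, $R2$ and $R3$, act precisely on these layers: $R1$ reads off membership in $G$, while $R2$ and $R3$ inspect the rightmost entry of $L$.

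For the direction $\x\in\Gamma\Rightarrow\Gamma\vdash\x$, I would construct a derivation explicitly. In the base case $L$ is empty, so $\x\in\Gamma$ forces $\x\in G$, and rule $R1$ immediately gives $G\vdash\x$, i.e. $\Gamma\vdash\x$. In the inductive step write $L$ as $L',\z$, so that $\Gamma$ is $\Gamma',\z$ with $\Gamma'$ the context $G,L'$. If $\z=\x$, then $R2$ yields $\Gamma',\x\vdash\x$ directly. If $\z\neq\x$, then from $\x\in\Gamma$ together with $\x\neq\z$ we get $\x\in\Gamma'$; the induction hypothesis provides a derivation of $\Gamma'\vdash\x$, and a single application of $R3$, whose side condition $\x\neq\z$ is met, produces $\Gamma',\z\vdash\x$, that is $\Gamma\vdash\x$.

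For the converse $\Gamma\vdash\x\Rightarrow\x\in\Gamma$, I would lean on the Generation Lemma, again splitting on the shape of $L$. When $L$ is empty the Generation Lemma says the derivation of $G\vdash\x$ must be an instance of $R1$, whose side condition is exactly $\x\in G$, hence $\x\in\Gamma$. When $L$ is $L',\z$, so that $\Gamma$ is $\Gamma',\z$, the Generation Lemma leaves two possibilities: either the last rule is $R2$, which forces the term and the rightmost local variable to coincide, giving $\x=\z\in\Gamma$; or the last rule is $R3$ applied to a derivation of $\Gamma'\vdash\x$ with $\x\neq\z$, and then the induction hypothesis gives $\x\in\Gamma'$, whence $\x\in\Gamma$. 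No further cases arise, since $R4$, $R5$ and $R6$ cannot conclude a judgement whose term is a single variable, and this is already reflected in the clauses of the Generation Lemma.

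I do not expect a serious obstacle here; the argument is essentially bookkeeping on the two-layered context structure $G,L$. The one point that needs care is keeping the case analysis on the rightmost local variable $\z$, namely whether or not $\z=\x$, in exact correspondence with the $R2$ versus $R3$ dichotomy delivered by the Generation Lemma, and checking that membership decomposes correctly as $\x\in\Gamma'\vee\x=\z$ when $\Gamma$ is $\Gamma',\z$. Because the global component $G$ is consulted only in the base case while every inductive step manipulates the local list, induction on the length of $L$ is the natural measure and makes both implications run uniformly.
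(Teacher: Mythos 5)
Your proof is correct and follows essentially the same route as the paper: induction on the length of the local context, with the base case handled by $R1$ and the inductive step splitting on whether $\x$ equals the rightmost local variable, matching the $R2$/$R3$ dichotomy. The paper merely compresses the two directions into a single chain of ``iff''s, while you spell them out separately; the content is identical.
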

 \begin{proof}Induction over the length of local part of $\Gamma$. If this length is equal to $0$, then $\Gamma$ has the form $G$ and $G\vdash\x$ is derivable iff $\x\in G$. If $\Gamma$ has the form $\Delta,\y$ then either $\x=\y$  or $\x\neq\y$. In the first case $\Gamma\vdash\x$ is derivable and $\x\in\Gamma$. In the last case $\Delta,\y\vdash\x$ is derivable iff $\Delta\vdash\x$ is derivable and we use the induction hypothesis.\\
 \end{proof}

\begin{proposition}
 If a judgement of the form $\Gamma\vdash S\tri\Delta$ is
derivable, then  $\Delta$ is uniquely determined by $\Gamma$
and $S$.
\end{proposition}
\begin{proof}The proof is by induction over the structure of
$S$.\\
Case 1. $S$ has the form $[B/\x]$. Then $\Delta$ is
$\Gamma,\x$.\\
Case 2. $S$ has the form $W_\x$. Then
 $\Gamma$
is $\Delta,\x$.\\
Case 3. $S$ has the form $\{\y\x\}$. Then $\Gamma$ has the form $\Sigma,\y$ and $\Delta$ is $\Sigma,\x$.\\
 Case 4. $S$ has the form $\Up S'_\x$. By Generation
lemma, we can derive $\Sigma\vdash S'\tri\Psi$, where $\Gamma$ is $\Sigma,\x$ and $\Delta$ is $\Psi,\x$.
By the induction hypothesis, $\Psi$ is uniquely determined by $\Sigma$ and $S'$.\\
\end{proof}

\begin{proposition}\label{unique}For any derivable judgement, its
derivation is unique.
\end{proposition}
\begin{proof} We  construct the derivation from the bottom up, using Generation lemma and the previous proposition.\\
\end{proof}

 \begin{theorem}\label{increase}If $\,\Gamma\vdash A$ is derivable and $\,\Gamma\leqslant\Sigma$ then $\Sigma\vdash A$ is derivable. If $\,\Gamma\vdash S\tri\Delta$ is derivable and $\,\Gamma\leqslant\Sigma$ then $\Sigma\vdash S\tri\Psi$ is derivable for some $\Psi\geqslant\Delta$.
 \end{theorem}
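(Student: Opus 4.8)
The plan is to prove both statements simultaneously by induction on the structure of the derivation (equivalently, by Proposition~\ref{unique}, on the structure of the term $A$ or the substitution $S$), keeping the enlarged context $\Sigma$ universally quantified so that the induction hypothesis may be applied at contexts other than the original one. The two assertions are genuinely intertwined: the term-forming rule $R6$ is built from a substitution premise, and the substitution rule $R10$ rests on a substitution premise, so neither half can be established without the other. Throughout I will lean on the order Propositions just proved — in particular that $\Gamma,\x\leqslant\Delta,\x$ iff $\Gamma\leqslant\Delta$ (their first part), and that $\Gamma,\x\leqslant\Delta$ forces $\Delta$ to have the form $\Theta,\x$ (their second part) — to transport context manipulations across the order.

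For the variable cases ($R1$, $R2$, $R3$) it is cleanest to bypass the case analysis entirely: by Lemma~\ref{lemmax}, $\Gamma\vdash\x$ is derivable iff $\x\in\Gamma$, and from the characterization of $\leqslant$ (the Proposition stating that $G_1,L_1\leqslant G_2,L_2$ iff $L_2=LL_1$ for some $L$ and every variable of $G_1$ lies in $G_2$ or in $L$) one checks at once that $\x\in\Gamma$ together with $\Gamma\leqslant\Sigma$ gives $\x\in\Sigma$, hence $\Sigma\vdash\x$. The application case $R4$ follows immediately by applying the induction hypothesis to both premises at the same $\Sigma$. For the abstraction case $R5$, from $\Gamma\leqslant\Sigma$ the first part of the order Proposition gives $\Gamma,\x\leqslant\Sigma,\x$; applying the induction hypothesis to the premise $\Gamma,\x\vdash A$ at the context $\Sigma,\x$ yields $\Sigma,\x\vdash A$, and $R5$ then delivers $\Sigma\vdash\lambda\x.A$.

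The composition case $R6$ is where the two halves meet, and it dictates the shape of the substitution conclusion. Here $\Gamma\vdash S\circ A$ comes from $\Gamma\vdash S\tri\Delta$ and $\Delta\vdash A$. Applying the substitution induction hypothesis to the first premise gives $\Sigma\vdash S\tri\Psi$ with $\Psi\geqslant\Delta$; since the target may strictly grow, one then applies the term induction hypothesis to $\Delta\vdash A$ at the enlarged context $\Psi$ to obtain $\Psi\vdash A$, and $R6$ reassembles $\Sigma\vdash S\circ A$. For the substitution rules one uses the second part of the order Proposition to invert the order. In $R7$ ($[B/\x]$, with $\Delta=\Gamma,\x$) the term induction hypothesis gives $\Sigma\vdash B$, whence $\Sigma\vdash[B/\x]\tri\Sigma,\x$, and $\Sigma,\x\geqslant\Gamma,\x$ by the first part. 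In $R8$, $R9$, and $R10$ the source context ends in a single variable, so the enlarged context $\Sigma$ must end in that same variable and its prefix must dominate the original prefix; the corresponding rule, applied over this enlarged prefix, then yields a target that dominates the original target, again by the first part — and in $R10$ one first feeds the enlarged prefix back through the substitution induction hypothesis on the premise $\Gamma\vdash S\tri\Delta$ to obtain a new inner target $\Psi'\geqslant\Delta$ before appending the bound variable.

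The one genuinely delicate point — and the reason the substitution statement is phrased with an inequality $\Psi\geqslant\Delta$ rather than an equality — is the feedback between two phenomena: enlarging the source of a weakening or renaming (rules $R8$, $R9$) necessarily enlarges its target, and this slack must then be absorbed by re-invoking the term induction hypothesis inside $R6$. Making sure the quantification over $\Sigma$ (and over the freshly produced $\Psi$) is strong enough to support this double use of the hypothesis is the main thing to get right; once the statement is set up with $\Sigma$ universally quantified on both halves, every case closes mechanically using only the order Propositions.
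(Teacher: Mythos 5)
Your proposal is correct and follows essentially the same route as the paper's proof: a simultaneous structural induction on terms and substitutions (the paper's induction over the structure of $A$ and $S$ is the same thing as your induction on the unique derivation), with the variable case handled via Lemma~\ref{lemmax}, the abstraction case via $\Gamma,\x\leqslant\Sigma,\x$, the composition case via the double application of the induction hypothesis (first to the substitution to get $\Psi\geqslant\Delta$, then to the term at the enlarged target $\Psi$), and the substitution cases via the order propositions. Nothing essential differs.
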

 \begin{proof}Induction over the structure of $A$ and $S$.\\[5pt]
 Case 1. $A$ is $\x$. Then $\Gamma\vdash\x$ is derivable iff $\x\in\Gamma$ (Lemma~\ref{lemmax}). It is easy to prove that $\x\in\Gamma$ and $\Gamma\leqslant\Sigma$ imply $\x\in\Sigma$, hence $\Sigma\vdash\x$ is derivable.\\[5pt]
 Case 2. $A$ is $B_1B_2$. By Generation lemma $\Gamma\vdash B_1$ and $\Gamma\vdash B_2$ are derivable. By induction hypothesis $\Sigma\vdash B_1$ and $\Sigma\vdash B_2$ are derivable, hence $\Sigma\vdash B_1B_2$ is derivable.\\[5pt]
 Case 3. $A$ is $\lambda\x.B$. By Generation lemma $\Gamma,\x\vdash B$ is derivable. $\Gamma\leqslant\Sigma$ implies $\Gamma,\x\leqslant\Sigma,\x$. By induction hypothesis $\Sigma,\x\vdash B$ is derivable, hence $\Sigma\vdash\lambda\x.B$ is derivable.\\[5pt]
 Case 4. $A$ is $S\circ B$. By Generation lemma $\Gamma\vdash S\tri\Delta$ and $\Delta\vdash B$ are derivable for some $\Delta$. By induction hypothesis $\Sigma\vdash S\tri\Psi$ and $\Psi\vdash B$ are derivable for some $\Psi\geqslant\Delta$, hence $\Sigma\vdash S\circ B$ is derivable.\\[5pt]
 Case 5. $S$ is $[B/\x]$. $\Gamma\vdash[B/\x]\tri\Gamma,\x$ is derivable, hence $\Gamma\vdash B$ is derivable. By induction hypothesis $\Sigma\vdash B$ is derivable, hence $\Sigma\vdash[B/\x]\tri\Sigma,\x$ is derivable. $\Gamma\leqslant\Sigma$ implies $\Gamma,\x\leqslant\Sigma,\x$.\\[5pt]
 Case 6. $S$ is $W_\x$. Then $\Gamma$ has the form $\Delta,\x$. $\Delta,\x\leqslant\Sigma$ implies $\Sigma$ has the form $\Psi,\x$. $\Psi,\x\vdash W_\x\tri\Psi$ is derivable. $\Gamma\leqslant\Sigma$ implies $\Delta\leqslant\Psi$.\\[5pt]
 Case 7. $S$ is $\{\y\x\}$. Then $\Gamma$ has the form $\Gamma',\y$ and $\Delta$ has the form $\Gamma',\x$. $\Gamma',\y\leqslant\Sigma$ implies $\Sigma$ has the form $\Sigma',\y$. Put $\Psi=\Sigma',\x$, then $\Sigma\vdash\{\y\x\}\tri\Psi$ is derivable and $\Delta\leqslant\Psi$.\\[5pt]
 Case 8. $S$ is $\Up S'_\x$. Then $\Gamma$ has the form $\Gamma',\x$ and $\Delta$ has the form $\Delta',\x$ where $\Gamma'\vdash S'\tri\Delta'$ is derivable. $\Gamma',\x\leqslant\Sigma$ implies $\Sigma$ has the form $\Sigma',\x$ and $\Gamma'\leqslant\Sigma'$. By induction hypothesis $\Sigma'\vdash S'\tri\Psi'$ is derivable for some $\Psi'\geqslant\Delta'$. Put $\Psi=\Psi',\x$. Then $\Sigma\vdash\,\,\Up S'_\x\tri\Psi$ is derivable and $\Psi\geqslant\Delta$.\\
 \end{proof}

 \newpage

\section{Free variables}
\begin{figure}
\begin{framed}
\noindent
\begin{align*}
FV(\x)&=\{\x\}\\
FV(AB)&=FV(A)\sqcup FV(B)\\
FV(\lambda\x.A)&= O_{\lambda\x}(FV(A))\\
FV(W_\x\circ A)&=FV(A),\x\\
FV([B/\x]\circ A])&=FV ((\lambda\x.A)B)\\
FV(\{\y\x\}\circ A)&=FV(W_\y\circ\lambda\x.A)\\
FV(\Up S_\x\circ A)&= FV(W_\x\circ S\circ\lambda\x.A)
\end{align*}

\begin{align*}
O_{\lambda\x}(\Gamma,\x)&=\Gamma\\
O_{\lambda\x}(G)&=G-\{\x\}
\end{align*}
\end{framed}
\caption{Free variables}\label{elim}
\end{figure}

\begin{definition}The definition of free variables is shown on Figure~\ref{elim}. $FV(A)$ is  not a set, but a context (the smallest context $\Gamma$ such that $\Gamma\vdash A$ is derivable). \end{definition}

\begin{example}$FV(\lambda x.xy)= O_{\lambda x}(FV(xy))= O_{\lambda x}(FV(x)\sqcup FV(y))\\= O_{\lambda x} (\{\x\}\cup\{\y\})=O_{\lambda x}(\{x,y\})=\{y\}$
 \end{example}

 \begin{example}$FV(W_x\circ z)=FV(z),x=\{z\},x$
 \end{example}

 \begin{example}$FV(\lambda x.W_x\circ x)= O_{\lambda x}(FV(W_x\circ x))=O_{\lambda x}(FV(x),x)\\=O_{\lambda x}(\{x\},x)=\{x\}$
 \end{example}

  Note that $x\in FV(\lambda x.W_x\circ x)$.  It is possible now that $\x\in FV(\lambda\x.A)$\\ See Example~\ref{strange} to understand.

  Note that $FV(A)$ does not always exist. For example,\\ $FV(\lambda\x.W_\y\circ A)= O_{\lambda\x}(FV(W_\y\circ A))=O_{\lambda\x}(FV(A),\y)$ does not exist if $\x\neq\y$ (but such term is not well-formed).

  Note that

  \begin{itemize}
  \item[] $O_{\lambda\x}(\Delta)\leqslant\Sigma$ iff $\Delta\leqslant\Sigma,\x$
  \end{itemize}

 \begin{lemma}\label{lamcup}If $O_{\lambda\x}(\Gamma\sqcup\Delta)$ exists, then \begin{itemize}
 \item[] $O_{\lambda\x}(\Gamma\sqcup\Delta)=O_{\lambda\x}(\Gamma)\sqcup O_{\lambda\x}(\Delta)$
 \end{itemize}
\end{lemma}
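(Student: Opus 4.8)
The plan is to prove the identity by a direct case analysis on the forms of $\Gamma$ and $\Delta$, reading off each side from the defining clauses of $\sqcup$ in Proposition~\ref{cup} and the two clauses defining $O_{\lambda\x}$ in Figure~\ref{elim}. First I would record the shape of the operands that the hypothesis forces. Recall that $O_{\lambda\x}(\Sigma)$ is defined exactly when $\Sigma$ is a set $G$ (and then equals $G-\{\x\}$) or has a nonempty local part ending in $\x$ (and then strips that trailing $\x$). The four rules computing $\Gamma\sqcup\Delta$ all append a trailing variable except the rule $G_1\sqcup G_2=G_1\cup G_2$; hence $\Gamma\sqcup\Delta$ is a set precisely when both $\Gamma$ and $\Delta$ are sets, and otherwise its local part is the longer of the two local parts, the shorter being a suffix of it by the compatibility criterion of Proposition~\ref{cup}.

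From the hypothesis that $O_{\lambda\x}(\Gamma\sqcup\Delta)$ is defined, $\Gamma\sqcup\Delta$ is either a set or ends in $\x$. If it is a set, the previous remark forces $\Gamma=G_1$ and $\Delta=G_2$ to be sets, and both sides evaluate to $(G_1\cup G_2)-\{\x\}$. If it ends in $\x$, then the longer local part ends in $\x$, and the shorter one is a suffix of it, hence either empty or also ending in $\x$; so each of $\Gamma,\Delta$ ends in $\x$ or is a set, and not both are sets. This splits into the clause $(\Gamma',\x)\sqcup(\Delta',\x)=(\Gamma'\sqcup\Delta'),\x$ and the two symmetric clauses of the form $(\Gamma',\x)\sqcup G=(\Gamma'\sqcup(G-\{\x\})),\x$. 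In each clause I would strip the trailing $\x$ from the left-hand side and check it coincides termwise with $O_{\lambda\x}(\Gamma)\sqcup O_{\lambda\x}(\Delta)$: e.g. in the mixed clause the stripped left side is $\Gamma'\sqcup(G-\{\x\})$, while the right side is $O_{\lambda\x}(\Gamma',\x)\sqcup O_{\lambda\x}(G)=\Gamma'\sqcup(G-\{\x\})$. The same bookkeeping disposes of the remaining clauses, and along the way it shows that $O_{\lambda\x}(\Gamma)$ and $O_{\lambda\x}(\Delta)$ are both defined, so the right-hand side makes sense.

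The only step that is more than bookkeeping is the structural observation in the first paragraph: that the hypothesis rules out an operand whose local part ends in some $\y\neq\x$. I expect this to be the main obstacle, and I would settle it using the suffix characterization of compatibility from Proposition~\ref{cup}, namely that for compatible contexts one local list is a suffix of the other, together with the fact that a nonempty suffix of a list ending in $\x$ again ends in $\x$. Given that, the last element of $\Gamma\sqcup\Delta$ equals the last element of the longer local part, which pins down the admissible forms of $\Gamma$ and $\Delta$ and makes the case analysis exhaustive.
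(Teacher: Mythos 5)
Your proof is correct and takes essentially the same route as the paper: the paper's entire proof is a pointer to the four recursive clauses computing $\sqcup$ in Proposition~\ref{cup}, and you simply carry out that clause-by-clause verification explicitly. The extra care you take in showing that the hypothesis excludes an operand whose local part ends in some $\y\neq\x$ (via the suffix characterization of compatibility) is exactly the detail the paper leaves implicit, and it is handled correctly.
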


\begin{proof}See the proof of Proposition~\ref{cup}.\\
\end{proof}

\begin{corollary}If $O_{\lambda\x}(\Gamma)$ exists and $\Gamma\geqslant\Delta$, then
 $O_{\lambda\x}(\Delta)$ exists\\ and $O_{\lambda\x}(\Gamma)\geqslant O_{\lambda\x}(\Delta)$
     \end{corollary}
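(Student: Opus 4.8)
The plan is to derive the statement directly from Lemma~\ref{lamcup}, exploiting the fact that under $\Gamma\geqslant\Delta$ the supremum collapses. First I would observe that $\Gamma\geqslant\Delta$ means $\Delta\leqslant\Gamma$, so $\Gamma$ is a common upper bound of $\Gamma$ and $\Delta$; hence the two contexts are compatible and, since $\Gamma$ is itself an upper bound while $\Gamma\leqslant\Gamma\sqcup\Delta$, antisymmetry of $\leqslant$ together with Proposition~\ref{cup} gives $\Gamma\sqcup\Delta=\Gamma$.

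Next, since $O_{\lambda\x}(\Gamma)$ is assumed defined and $\Gamma=\Gamma\sqcup\Delta$, the hypothesis of Lemma~\ref{lamcup} is met: $O_{\lambda\x}(\Gamma\sqcup\Delta)$ is defined. The lemma then yields
$$O_{\lambda\x}(\Gamma)=O_{\lambda\x}(\Gamma\sqcup\Delta)=O_{\lambda\x}(\Gamma)\sqcup O_{\lambda\x}(\Delta).$$
Because the right-hand side is a supremum, both of its operands must be defined; in particular $O_{\lambda\x}(\Delta)$ is defined, which is the first conclusion. Reading the same equation as the assertion that $O_{\lambda\x}(\Gamma)$ is an upper bound of $O_{\lambda\x}(\Delta)$ gives $O_{\lambda\x}(\Gamma)\geqslant O_{\lambda\x}(\Delta)$, the second conclusion.

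The only delicate point is the bookkeeping about definedness: I must make sure that invoking Lemma~\ref{lamcup} is legitimate (its hypothesis is exactly that the combined elimination is defined, which holds here because it equals $O_{\lambda\x}(\Gamma)$) and that the appearance of $O_{\lambda\x}(\Delta)$ inside a supremum genuinely forces it to be defined. If one prefers to avoid this subtlety, the alternative is a short case analysis following the definition of $O_{\lambda\x}$: either $\Gamma$ is a set $G$, in which case the fourth part of the preceding proposition forces $\Delta$ to be a set and the claim reduces to monotonicity of $G\mapsto G-\{\x\}$ under inclusion; or $\Gamma$ has the form $\Sigma,\x$, in which case the third part forces $\Delta$ to be either $\Delta',\x$ or a set, and in each subcase the inequality follows from the first part of that proposition. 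I expect the supremum argument to be the cleaner route, with the case analysis serving only as a sanity check.
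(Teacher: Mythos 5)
Your argument is correct and matches the paper's intent: the corollary is stated without proof immediately after Lemma~\ref{lamcup}, precisely as the instance $\Gamma\sqcup\Delta=\Gamma$ that you spell out, and your observation that the equation $O_{\lambda\x}(\Gamma)=O_{\lambda\x}(\Gamma)\sqcup O_{\lambda\x}(\Delta)$ forces both definedness and the inequality is the whole content. Your fallback case analysis via the proposition on $\leqslant$ is also sound, but the supremum route is the one the paper implicitly takes.
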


\begin{lemma}\label{lemma1}$\Gamma\vdash[B/\x]\circ A$ is derivable iff $\,\Gamma\vdash(\lambda\x.A)B$ is derivable.
 \end{lemma}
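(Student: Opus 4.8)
The plan is to reduce both derivability statements, via the Generation lemma, to one and the same pair of conditions, namely that $\Gamma,\x\vdash A$ and $\Gamma\vdash B$ are both derivable. Since both the left-hand and the right-hand judgement turn out to hold exactly under this common condition, the biconditional follows immediately, and no explicit translation of derivations from one form to the other is needed.

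First I would analyze the left-hand side. By the Generation lemma, any derivation of $\Gamma\vdash[B/\x]\circ A$ is an application of R6 to derivations of $\Gamma\vdash[B/\x]\tri\Delta$ and $\Delta\vdash A$ for some context $\Delta$. Applying the Generation lemma again to the substitution judgement $\Gamma\vdash[B/\x]\tri\Delta$, it must be an application of R7 to a derivation of $\Gamma\vdash B$, and this forces $\Delta$ to be exactly $\Gamma,\x$. Thus $\Gamma\vdash[B/\x]\circ A$ is derivable if and only if both $\Gamma\vdash B$ and $\Gamma,\x\vdash A$ are derivable.

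Next I would treat the right-hand side symmetrically. By the Generation lemma, $\Gamma\vdash(\lambda\x.A)B$ is an application of R4 to derivations of $\Gamma\vdash\lambda\x.A$ and $\Gamma\vdash B$; and $\Gamma\vdash\lambda\x.A$ is in turn an application of R5 to a derivation of $\Gamma,\x\vdash A$. Hence $\Gamma\vdash(\lambda\x.A)B$ is derivable if and only if both $\Gamma,\x\vdash A$ and $\Gamma\vdash B$ are derivable, which is precisely the condition obtained above.

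The only subtlety, and the step I would be most careful about, is that in decomposing the substitution judgement $\Gamma\vdash[B/\x]\tri\Delta$ the intermediate context $\Delta$ is not free to vary: the Generation lemma pins it down to $\Gamma,\x$. This is exactly what guarantees that the middle context appearing in the R6 step for $[B/\x]\circ A$ coincides with the context $\Gamma,\x$ produced by the R5 step for $\lambda\x.A$. Everything else is a routine, purely structural matching of inference rules, with no genuine obstacle; the rigidity of derivations established in Proposition~\ref{unique} is what makes the argument so direct.
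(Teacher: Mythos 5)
Your proposal is correct and is essentially the paper's own argument: the paper simply displays the two derivation shapes, each bottoming out in the same pair of premises $\Gamma\vdash B$ and $\Gamma,\x\vdash A$, which is exactly the reduction you carry out explicitly via the Generation lemma. Your added remark that R7 pins the intermediate context down to $\Gamma,\x$ is a correct and worthwhile elaboration of what the paper leaves implicit.
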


 \begin{proof}$ $\\
  \ruletwo{\ruleone{\ruledot{\Gamma\vdash B}}{\Gamma\vdash[B/\x]\tri\Gamma,\x}}{\ruledot{\Gamma,\x\vdash A}}{\Gamma\vdash[B/\x]\circ A}
  \qquad\qquad\ruletwo{\ruleone{\ruledot{\Gamma,\x\vdash A}}{\Gamma\vdash\lambda\x.A}}{\ruledot{\Gamma\vdash B}}{\Gamma\vdash(\lambda\x.A)B}
  \end{proof}

  \begin{lemma}\label{lemma2}$\Gamma\vdash\{\y\x\}\circ A$ is derivable iff $\,\Gamma\vdash W_\y\circ\lambda\x.A$ is derivable.
  \end{lemma}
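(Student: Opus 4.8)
The plan is to prove Lemma 2.17 (that $\Gamma\vdash\{\y\x\}\circ A$ is derivable iff $\Gamma\vdash W\y\circ\lambda\x.A$ is derivable) by the same strategy used for Lemma 2.16: exhibit two derivation skeletons, one in each direction, showing that a derivation of one judgement can be transformed into a derivation of the other, using the Generation lemma to justify that the constraints on contexts match up exactly. Let me check the key steps first.

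First I would analyze the left-hand side. By the Generation lemma, a derivation of $\Gamma\vdash\{\y\x\}\circ A$ must end with $R6$ applied to derivations of $\Gamma\vdash\{\y\x\}\tri\Delta$ and $\Delta\vdash A$ for some $\Delta$. The Generation lemma clause for $R9$ forces $\Gamma$ to have the form $\Gamma',\y$ and $\Delta$ to be $\Gamma',\x$. So the left side is derivable exactly when $\Gamma$ is $\Gamma',\y$ and $\Gamma',\x\vdash A$ is derivable.

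Next I would analyze the right-hand side. A derivation of $\Gamma\vdash W\y\circ\lambda\x.A$ must end with $R6$ applied to $\Gamma\vdash W\y\tri\Delta$ and $\Delta\vdash\lambda\x.A$. The $R8$ clause of the Generation lemma forces $\Gamma$ to be $\Delta,\y$, i.e. $\Gamma$ has the form $\Gamma',\y$ with $\Delta=\Gamma'$. Then $R5$ applied to $\Gamma'\vdash\lambda\x.A$ requires $\Gamma',\x\vdash A$. So the right side is derivable under exactly the same condition: $\Gamma$ is $\Gamma',\y$ and $\Gamma',\x\vdash A$ is derivable. Since the two conditions coincide, the equivalence follows.

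The cleanest presentation mirrors Lemma 2.16: I would simply display the two prooftree skeletons side by side. The left tree applies $R9$ to get $\Gamma',\y\vdash\{\y\x\}\tri\Gamma',\x$, combines it with a (dotted) subderivation $\Gamma',\x\vdash A$ via $R6$. The right tree applies $R5$ to the same dotted subderivation $\Gamma',\x\vdash A$ to get $\Gamma'\vdash\lambda\x.A$, applies $R8$ to get $\Gamma',\y\vdash W\y\tri\Gamma'$, and combines them via $R6$. I expect no real obstacle here: the only thing to be careful about is that the intermediate context matches ($\Gamma'$ on one side as the codomain of $W\y$, and $\Gamma',\x$ on the other as the codomain of $\{\y\x\}$), and that both constructions share the identical premise $\Gamma',\x\vdash A$. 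The fact that $\Gamma$ must have the form $\Gamma',\y$ for either judgement to be derivable is exactly what makes both trees applicable under the same hypothesis, so the two skeletons establish both implications at once.
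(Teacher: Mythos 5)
Your proposal is correct and follows essentially the same route as the paper: the paper's proof consists precisely of the two derivation skeletons you describe, with $\Gamma$ of the form $\Delta,\y$, the substitution rules $R9$ and $R8$ at the left premises, and the shared dotted subderivation of $\Delta,\x\vdash A$ feeding into $R6$ (via $R5$ on the right-hand side). Your additional Generation-lemma analysis just makes explicit what the paper leaves to the reader.
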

  \begin{proof}$ $\\
  \ruletwo{\Delta,\y\vdash\{\y\x\}\tri\Delta,\x}{\ruledot{\Delta,\x\vdash A}}{\Delta,\y\vdash\{\y\x\}\circ A}\\[20pt]

  \ruletwo{\Delta,\y\vdash W_\y\tri\Delta}
  {\ruleone{\ruledot{\Delta,\x\vdash A}}{\Delta\vdash\lambda\x.A}}
  {\Delta,\y\vdash W_\y\circ\lambda\x.A}

  \end{proof}

  \begin{lemma}\label{lemma3}$\Gamma\vdash\,\,\Up S_\x\circ A$ is derivable iff $\,\Gamma\vdash  W_\x\circ S\circ\lambda\x.A$ is derivable.
  \end{lemma}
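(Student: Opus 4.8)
The plan is to follow exactly the template of Lemmas~\ref{lemma1} and~\ref{lemma2}: for each direction I will exhibit the unique canonical derivation and read off that both judgements rest on precisely the same premises, namely that $\Gamma$ has the form $\Sigma,\x$ and that both $\Sigma\vdash S\tri\Delta$ and $\Delta,\x\vdash A$ are derivable. Since, by the Generation lemma together with Proposition~\ref{unique}, every derivation of either judgement is forced into a single shape, it suffices to show the two shapes share their hypotheses; the biconditional then follows immediately.

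First I would unfold the left-hand judgement. By the Generation lemma, any derivation of $\Gamma\vdash S_\x\circ A$ ends in $R6$, so $\Gamma\vdash S_\x\tri\Psi$ and $\Psi\vdash A$ are derivable for some $\Psi$; a second appeal to the Generation lemma forces the substitution step to be $R10$, whence $\Gamma=\Sigma,\x$, $\Psi=\Delta,\x$, and $\Sigma\vdash S\tri\Delta$ is derivable. Thus the left-hand side is derivable exactly when $\Sigma\vdash S\tri\Delta$ and $\Delta,\x\vdash A$ are, as displayed by
$$\ruletwo{\ruleone{\ruledot{\Sigma\vdash S\tri\Delta}}{\Sigma,\x\vdash S_\x\tri\Delta,\x}}{\ruledot{\Delta,\x\vdash A}}{\Sigma,\x\vdash S_\x\circ A}$$

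Next I would unfold the right-hand judgement $\Gamma\vdash W\x\circ S\circ\lambda\x.A$, read under the convention as $W\x\circ(S\circ(\lambda\x.A))$. The outermost $R6$ together with the Generation lemma forces the leading substitution to be $R8$, so $\Gamma=\Sigma,\x$ and $\Sigma\vdash S\circ\lambda\x.A$ is derivable; decomposing this once more gives $\Sigma\vdash S\tri\Delta$ and $\Delta\vdash\lambda\x.A$, and the final $R5$ step yields $\Delta,\x\vdash A$. This is recorded by
$$\ruletwo{\Sigma,\x\vdash W\x\tri\Sigma}{\ruletwo{\ruledot{\Sigma\vdash S\tri\Delta}}{\ruleone{\ruledot{\Delta,\x\vdash A}}{\Delta\vdash\lambda\x.A}}{\Sigma\vdash S\circ\lambda\x.A}}{\Sigma,\x\vdash W\x\circ S\circ\lambda\x.A}$$
Comparing the two displays, both judgements are derivable precisely under the premises $\Sigma\vdash S\tri\Delta$ and $\Delta,\x\vdash A$, with the same $\Sigma$ and the same codomain $\Delta$.

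I do not expect a genuine obstacle here; the content is the bookkeeping of matching the two canonical trees. The only point requiring care is confirming that the codomain $\Delta$ produced by $S$ is literally the same context on both sides, which is automatic because $\Delta$ is uniquely determined by $\Sigma$ and $S$, so the two decompositions cannot diverge. One should also double-check the reading of the shorthand $W\x\circ S\circ\lambda\x.A$ so that the associativity of $\circ$ is grouped as the convention prescribes, since a different grouping would not even be well-formed.
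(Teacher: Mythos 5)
Your proposal is correct and matches the paper's own proof, which likewise just exhibits the two canonical derivation trees and observes that both rest on the same premises $\Sigma\vdash S\tri\Delta$ and $\Delta,\x\vdash A$. The extra remarks about the Generation lemma forcing the shape of each derivation and the uniqueness of the codomain $\Delta$ are sound but only make explicit what the paper leaves implicit.
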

  \begin{proof}$ $\\
  \ruletwo{\ruleone{\ruledot{\Delta\vdash S\tri\Sigma}}{\Delta,\x\vdash\,\,\Up S_\x\tri\Sigma,\x}}
  {\ruledot{\Sigma,\x\vdash A}}
  {\Delta,\x\vdash\,\,\Up S_\x\circ A}

  \ruletwo{\Delta,\x\vdash W_\x\tri\Delta}
  {\ruletwo{\ruledot{\Delta\vdash S\tri\Sigma}}
  {\ruleone{\ruledot{\Sigma,\x\vdash A}}{\Sigma\vdash\lambda\x.A}}
  {\Delta\vdash S\circ\lambda\x.A}}
  {\Delta,\x\vdash W_\x\circ S\circ\lambda\x.A}

\end{proof}
 \begin{theorem}If $\Gamma\vdash A$ is derivable, then $FV(A)$ exists and $FV(A)\leqslant\Gamma$.
\end{theorem}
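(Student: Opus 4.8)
The plan is to prove the three assertions---definedness of $FV(A)$, derivability of $FV(A)\vdash A$, and the bound $FV(A)\leqslant\Gamma$---simultaneously by one well-founded induction, with the statement universally quantified over $\Gamma$. The difficulty is that the clauses defining $FV$ on $[B/\x]\circ C$, $\{\y\x\}\circ C$ and $S_\x\circ C$ are not structural: each rewrites its argument into a larger-looking term built from $\lambda$, application and $W$. So first I would equip terms with a size measure $|\cdot|$ that strictly decreases under all three rewrites. A workable choice is $|\x|=1$, $|AB|=|A|+|B|+1$, $|\lambda\x.A|=|A|+1$, and $|S\circ A|=|A|+d(S)$ with weights $d(W\x)=1$, $d(\{\y\x\})=3$, $d([B/\x])=|B|+3$, $d(S_\x)=d(S)+3$. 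A short computation gives $|[B/\x]\circ C|>|(\lambda\x.C)B|$, $|\{\y\x\}\circ C|>|W\y\circ\lambda\x.C|$ and $|S_\x\circ C|>|W\x\circ S\circ\lambda\x.C|$ (each dropping by exactly one), while the genuinely structural clauses pass to strictly smaller terms as well. The induction is then course-of-values induction on $|A|$.

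For the base and structural cases I would argue directly. The case $A=\x$ is immediate from Lemma~\ref{lemmax} and the description of $\leqslant$. For $A=BC$ the Generation lemma gives $\Gamma\vdash B$ and $\Gamma\vdash C$; the induction hypothesis yields $FV(B),FV(C)\leqslant\Gamma$, so they are compatible and $FV(B)\sqcup FV(C)=FV(BC)$ exists by Proposition~\ref{cup}; lifting $FV(B)\vdash B$ and $FV(C)\vdash C$ to the join by Theorem~\ref{increase} and applying $R4$ gives derivability, and the bound holds because the join is the supremum. For $A=\lambda\x.B$ the Generation lemma gives $\Gamma,\x\vdash B$, hence $FV(B)\leqslant\Gamma,\x$ by induction; the structural properties of $\leqslant$ then force $FV(B)$ to be of the form $\Sigma,\x$ or a set, which is exactly the condition making $O_{\lambda\x}(FV(B))$ defined; derivability follows by $R5$ (in the set case $G$ one first passes from $G\vdash B$ to $(G\setminus\{\x\}),\x\vdash B$ via Theorem~\ref{increase}), and the bound $FV(\lambda\x.B)\leqslant\Gamma$ comes from the monotonicity corollary to Lemma~\ref{lamcup}. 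For $A=W\x\circ B$ the Generation lemma forces $\Gamma=\Delta,\x$ with $\Delta\vdash B$; induction gives $FV(B)\leqslant\Delta$, the admissible weakening rule yields $FV(B),\x\vdash W\x\circ B$, and $FV(B),\x\leqslant\Delta,\x$ closes it.

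The three remaining cases are where the measure pays off, and they all have the same shape. For $A=[B/\x]\circ C$ the definition sets $FV(A)=FV((\lambda\x.C)B)$, Lemma~\ref{lemma1} makes $\Gamma\vdash A$ and $\Gamma\vdash(\lambda\x.C)B$ equivalent, and $|(\lambda\x.C)B|<|A|$; so the induction hypothesis applied to $(\lambda\x.C)B$ delivers all three conclusions, which transport back to $A$ through the definitional equality and Lemma~\ref{lemma1}. The cases $A=\{\y\x\}\circ C$ and $A=S_\x\circ C$ are handled identically using Lemmas~\ref{lemma2} and~\ref{lemma3}. The main obstacle I anticipate is not any individual case but the fact that definedness must be proved jointly with the other two claims: $FV$ is only a partial operation, and at each step the join $\sqcup$ and the operation $O_{\lambda\x}$ are defined precisely because the induction hypothesis supplies $FV(\text{subterm})\leqslant\Gamma'$. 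Hence the three statements cannot be separated and must travel through the induction as a single invariant, and the one quantitative point needing care is the choice $d(S_\x)=d(S)+3$ that keeps the subscript-unfolding rewrite strictly decreasing.
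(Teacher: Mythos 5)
Your proposal is correct and follows essentially the same route as the paper: a well-founded induction whose structural cases (variable, application, abstraction, $W\x\circ B$) are argued directly via the Generation lemma, Lemma~\ref{lemmax}, Proposition~\ref{cup} and Theorem~\ref{increase}, and whose three non-structural cases are discharged by unfolding through Lemmas~\ref{lemma1}--\ref{lemma3}. The only difference is bookkeeping: the paper uses the lexicographic pair (number of $[\,/\,]$, $\{\}$, $S_\x$ occurrences, then term length) where you pack the same decrease into a single weighted size $|\cdot|$, and your weights do verify the required strict drops.
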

\begin{proof}Induction over

 \begin{enumerate}
   \item the total number of  $[\,]$,  $\{\}$, and $\Up\,$ in $A$;
   \item the length of $A$.
 \end{enumerate}
 There are seven possible cases:\\[5pt]
 Case 1. $A$ is $\x$. Then $FV(A)=\{\x\}$. If $\Gamma\vdash\x$ is derivable, then $\x\in\Gamma$ (Lemma~\ref{lemmax}), hence $\{\x\}\leqslant\Gamma$.\\[5pt]
Case 2. $A$ is $B_1B_2$. Then $\Gamma\vdash B_1$ and $\Gamma\vdash B_2$ are derivable by Generation lemma. By induction hypothesis  $FV(B_1)\leqslant\Gamma$ and $FV(B_2)\leqslant\Gamma$. Hence $FV(B_1)\sqcup FV(B_2)$ exists (Proposition~\ref{cup}) and  $FV(B_1)\sqcup FV(B_2)\leqslant\Gamma$.\\[5pt]
Case 3. $A$ is $\lambda\x.B$. Then $\Gamma,\x\vdash B$ is derivable by Generation lemma. By induction hypothesis $FV(B)\leqslant\Gamma,\x$. Hence $FV(B)$ has the form $\Delta,\x$ or $FV(B)$ has the form $G$.
In both cases $FV(\lambda\x.B)$ exists and\\ $FV(\lambda\x.B)=O_{\lambda\x}(FV(B))\leqslant O_{\lambda\x}(\Gamma,\x)=\Gamma$.\\[5pt]
Case 4. $A$ is $W_\x\circ B$. Then $\Gamma$ has the form $\Delta,\x$ and $\Delta\vdash B$ is derivable by Generation lemma. By induction hypothesis   $FV(B)\leqslant\Delta$, hence\\  $FV(B),\x\leqslant\Delta,\x$. Further, $FV(W_\x\circ B)=FV(B),\x$, hence $FV(W_\x\circ B)\leqslant\Gamma$.\\[5pt]
Case 5. $A$ has the form $[B_1/\x]\circ B_2$. Use Lemma~\ref{lemma1}.\\[5pt]
Case 6. $A$ has the form $\{\y\x\}\circ B$. Use Lemma~\ref{lemma2}.\\[5pt]
Case 7. $A$ has the form $\Up S_\x\circ B$. Use Lemma~\ref{lemma3}.\\
\end{proof}

\begin{theorem} If $FV(A)$ exists, then $FV(A)\vdash A$ is derivable.
\end{theorem}

\begin{proof}Induction over

 \begin{enumerate}
   \item the total number of  $[\,]$,  $\{\}$, and $\Up\,$ in $A$;
   \item the length of $A$.
 \end{enumerate}
 There are seven possible cases:\\[5pt]
 Case 1. $A$ is $\x$. Then $FV(A)=\{\x\}$ and $\{\x\}\vdash\x$ is derivable.\\[5pt]
Case 2. $A$ is $B_1B_2$. $FV(B_1B_2)=FV(B_1)\sqcup FV(B_2)$. \\
 $FV(B_1)\vdash B_1$ and $FV(B_2)\vdash B_2$ are derivable by  induction hypothesis.\\  $FV(B_1)\sqcup FV(B_2)\vdash B_1$ and $FV(B_1)\sqcup FV(B_2)\vdash B_2$ are derivable by Theorem~\ref{increase}.\\
   Hence $FV(B_1)\sqcup FV(B_2)\vdash B_1B_2$ is derivable.\\[5pt]
Case 3. $A$ is $\lambda\x.B$. $FV(\lambda\x.B)=O_{\lambda\x}(FV(B))$.  Hence $FV(B)$ has the form $\Delta,\x$ or $FV(B)$ has the form $G$. By induction hypothesis $FV(B)\vdash B$ is derivable.\\
If $FV(B)=\Delta,\x$ then $\Delta,\x\vdash B$, hence $\Delta\vdash\lambda\x.B$\\ Note that  $FV(\lambda\x.B)=\Delta$, hence $FV(\lambda\x.B)\vdash\lambda\x.B$\\
If $FV(B)=G$ then $G\vdash B$ is derivable.\\ Note that $G<(G-\{\x\}),\x$. Hence $(G-\{\x\}),\x\vdash B$ is derivable (Theorem~\ref{increase})\\
Further $G-\{\x\}\vdash\lambda\x.B$ is derivable and $FV(\lambda\x.B)=G-\{\x\}$\\[5pt]
Case 4. $A$ is $W_\x\circ B$. $FV(W_\x\circ B)=FV(B),\x$. By induction hypothesis   $FV(B)\vdash B$ is derivable, hence  $FV(B),\x\vdash W_\x\circ B$ is derivable.\\[5pt]
Case 5. $A$ has the form $[B_1/\x]\circ B_2$. Use Lemma~\ref{lemma1}.\\[5pt]
Case 6. $A$ has the form $\{\y\x\}\circ B$. Use Lemma~\ref{lemma2}.\\[5pt]
Case 7. $A$ has the form $\Up S_\x\circ B$. Use Lemma~\ref{lemma3}.\\
\end{proof}

\begin{corollary}\label{FreV} $FV(A)$ exists iff $A$ is well-formed. In this case $FV(A)$ is the smallest context $\Gamma$ such that $\Gamma\vdash A$ is derivable.
\end{corollary}

\newpage

\section{The calculus $\lambda\alpha$}
\begin{figure}
\begin{framed}
\noindent
\begin{align*}
&(Beta) &  (\lambda \x.A)B &\to [B/\x]\circ A  \\
&(App) & S\circ AB &\to (S\circ A)(S\circ B) \\
&(Lambda) & S\circ\lambda\x. A &\to\lambda\x.\!\Up S_\x\circ A \\
&(Var) & [ B/\x]\circ \x &\to B \\
&(Shift) & [B/\x]\circ W_\x\circ A &\to A \\
&(Shift') & [ B/\x]\circ\z &\to \z \tag{$\x\neq \z$}\\
&(IdVar) & \{\y \x\}\circ \x &\to \y \\
&(IdShift) & \{\y\x\}\circ W_\x\circ A &\to W_\y\circ A \\
&(IdShift') & \{\y \x\} \circ\z &\to W_\y\circ  \z \tag{$\x\neq \z$}\\
&(LiftVar) & \Up S_\x\circ \x &\to \x \\
&(LiftShift) & \Up S_\x\circ W_\x\circ A &\to W_\x\circ S\circ A \\
&(LiftShift') & \Up S_\x \circ\z &\to W_\x\circ S\circ \z \tag{$\x\neq \z$}\\
&(W) & W_\x\circ\z &\to \z \tag{$\x\neq \z$}\\
&(\alpha) & \lambda\x. A &\to \lambda \y.\{\y \x\} \circ A \tag{$*$}
\end{align*}

Here $(*)$ is the condition $\x\in FV(\lambda\x.A)\,\&\, \y\not\in FV(\lambda\x.A)$\\

\end{framed}
\caption{The calculus $\lambda\alpha$}\label{lamalp}
\end{figure}

\begin{figure}
\begin{framed}
\noindent
$\begin{array}{ccc}
& \ruleone{ A\to A'}{ \lambda
\x.A\to\lambda \x.A'}\\[20pt]
 \ruleone{ A\to A'}{ AB\to  A'B} && \ruleone{
B\to B'}{
AB\to  AB'}\\[20pt]
 \ruleone{  S\to S'}{S\circ A\to S'\circ A} && \ruleone{
A\to
A'}{S\circ A\to S\circ A'}\\[20pt]
& \ruleone{ B\to B'}{ [B/\x]\to[B'/\x]}\\[20pt]
& \ruleone{ S\to S'}{ \Up S_\x\to\,\, \Up S'_\x}
\end{array}$
\end{framed}
\caption{Compatible closure}\label{closure}
\end{figure}

\begin{definition}The calculus $\lambda\alpha$ is shown on  Figure~\ref{lamalp} and Figure~\ref{closure}.
\end{definition}

The meaning of the rule $W$ is as follows: if $\Gamma,\x\vdash W_\x\circ\z$ is derivable, then $W_\x\circ\z$ denote the rightmost $\z$ in $\Gamma$.

 $$\ruleone{\Gamma\vdash\z}{\Gamma,\x\vdash W_{\x}\circ\z}$$
 But if $\x\neq\z$, the rightmost $\z$ in $\Gamma$ is the same as the rightmost $\z$ in $\Gamma,\x$. Hence $\Gamma,\x\vdash\z$ is the same as $\Gamma,\x\vdash W_\x\circ\z$. The idea is not new, see \cite{Granstrom} for example.
\newpage
The rules $Shift'$, $IdShift'$, and $LiftShift'$ provide confluence in the following cases:

$$\begin{tikzcd}
 [B/\x]\circ W_\x\circ\z  \arrow{rr}{Shift} \arrow{dd}[swap]{W} && \z \\
  & & \\
 {[} B/\x]\circ \z  &&
 \end{tikzcd}$$

$$\begin{tikzcd}
 \{\y\x\}\circ W_\x\circ\z  \arrow{rr}{IdShift} \arrow{dd}[swap]{W} && W_\y\circ\z \\
  & & \\
   \{\y\x\}\circ\z  &&
 \end{tikzcd}$$

$$\begin{tikzcd}
 \Up S_{\x}\circ W_\x\circ\z  \arrow{rr}{LiftShift} \arrow{dd}[swap]{W} && W_\x\circ S\circ\z \\
  & & \\
   \Up S_{\x}\circ\z  &&
 \end{tikzcd}$$

\begin{example}
\begin{align*}
&(\lambda xy.x)\,y && \\
&\to [y/x]\circ\lambda y.x && (Beta)\\
&\to \lambda y.\!\Up[y/x]_y\circ x && (Lambda)\\
&\to \lambda y.W_y\circ[y/x]\circ x && (LiftShift')\\
&\to \lambda y.W_y\circ y && (Var)\\
&\to \lambda z.\{zy\}\circ W_y\circ y && (\alpha)\\
&\to \lambda z.W_z\circ y && (IdShift)\\
&\to \lambda z.y && (W)
\end{align*}
\end{example}

\begin{example}
\begin{align*}
&(\lambda xyz.xz(yz))(\lambda xy.x) && \\
&\to [\lambda xy.x/x]\circ\lambda yz.xz(yz) && (Beta)\\
&\to \lambda y.\!\Up[\lambda xy.x/x]_y\circ \lambda z.xz(yz) && (Lambda)\\
&\to \lambda yz.\!\Up\,\Up[\lambda xy.x/x]_{yz}\circ xz(yz) && (Lambda)\\
&\to \lambda yz.(\underline{\Up\,\Up[\lambda xy.x/x]_{yz}\circ xz})(\Up\,\Up[\lambda xy.x/x]_{yz}\circ yz) && (App)\\
&\twoheadrightarrow \lambda yz.(\lambda y.z)(\underline{\Up\,\Up[\lambda xy.x/x]_{yz}\circ yz}) && (Example~\ref{example104})\\
&\twoheadrightarrow \lambda yz.(\lambda y.z)(yz) && (Example~\ref{example105})\\
&\to \lambda yz.[yz/y]\circ z && (Beta)\\
&\to \lambda yz.z && (Shift')
\end{align*}
\end{example}

\begin{example}\label{example104}
\begin{align*}
&\Up\,\Up[\lambda xy.x/x]_{yz}\circ xz && \\
&\to  (\Up\,\Up[\lambda xy.x/x]_{yz}\circ x)(\underline{\Up\,\Up[\lambda xy.x/x]_{yz}\circ z}) && (App)\\
&\to (\Up\,\Up[\lambda xy.x/x]_{yz}\circ x)\, z && (LiftVar)\\
&\to (W_z\circ\Up[\lambda xy.x/x]_y\circ x)\, z && (LiftShift')\\
&\to (W_z\circ W_y\circ[\lambda xy.x/x]\circ x)\, z && (LiftShift')\\
&\to (W_z\circ W_y\circ\lambda xy.x)\, z && (Var)\\
&\twoheadrightarrow (\lambda xy.x)\, z && (\text{cause\,\,} \lambda xy.x \text{\,\,is closed})\\
&\to [z/x]\circ\lambda y.x && (Beta)\\
&\to \lambda y.\!\Up[z/x]_y\circ x && (Lambda)\\
&\to \lambda y.W_y\circ[z/x]\circ x && (LiftShift')\\
&\to \lambda y.W_y\circ z && (Var)\\
&\to \lambda y.z && (W)
\end{align*}
\end{example}

\begin{example}\label{example105}
\begin{align*}
&\Up\,\Up[\lambda xy.x/x]_{yz}\circ yz && \\
&\to  (\Up\,\Up[\lambda xy.x/x]_{yz}\circ y)(\underline{\Up\,\Up[\lambda xy.x/x]_{yz}\circ z}) && (App)\\
&\to (\Up\,\Up[\lambda xy.x/x]_{yz}\circ y)\, z && (LiftVar)\\
&\to (W_z\,\circ\Up[\lambda xy.x/x]_y\circ y)\, z && (LiftShift')\\
&\to  (W_z\circ y)\, z && (LiftVar)\\
&\to yz && (W)
\end{align*}
\end{example}

\begin{theorem}\label{subbred} ``Subject reduction''.\\
If $\Gamma\vdash A$ and $A\to B$ then  $\Gamma\vdash B$.
\end{theorem}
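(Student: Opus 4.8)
The plan is to induct on the derivation of the single reduction step. Since $\to$ is the compatible closure (Figure~\ref{closure}) of the axioms of Figure~\ref{lamalp}, and that closure rewrites substitutions as well as terms, I would prove the statement simultaneously in two forms: \textbf{(T)} if $\Gamma\vdash A$ and $A\to B$ then $\Gamma\vdash B$; and \textbf{(S)} if $\Gamma\vdash S\tri\Delta$ and $S\to S'$ then $\Gamma\vdash S'\tri\Delta$ \emph{with the very same} $\Delta$. The extra content of \textbf{(S)}---that a reducing substitution preserves both its source and its target context---is the key to the congruence case for $S\circ A$, and I expect it to be the main point to pin down. It holds because no axiom of Figure~\ref{lamalp} has a bare substitution on its left, so every reduction $S\to S'$ must end in one of the two closure rules $[B/\x]\to[B'/\x]$ or $S_\x\to S'_\x$; in the first the target $\Gamma,\x$ depends only on $\Gamma$ and $\x$, and in the second it is $\Delta,\x$ with $\Delta$ the target of $S$, left unchanged by the induction hypothesis.

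For the base cases of \textbf{(T)} I would treat the fourteen axioms, each by inverting $\Gamma\vdash(\text{left-hand side})$ with the Generation lemma and reassembling the right-hand side. Three are essentially already done: (Beta) is exactly Lemma~\ref{lemma1}, (LiftShift) follows by unfolding both sides (cf.\ Lemma~\ref{lemma3}), and the renaming axiom $(\alpha)$ reduces to Lemma~\ref{lemma2}. For $(\alpha)$, from $\Gamma\vdash\lambda\x.A$ the Generation lemma gives $\Gamma,\x\vdash A$; rule $R8$ gives $\Gamma,\y\vdash W\y\tri\Gamma$, so $\Gamma,\y\vdash W\y\circ\lambda\x.A$ is derivable, whence $\Gamma,\y\vdash\{\y\x\}\circ A$ by Lemma~\ref{lemma2} and $\Gamma\vdash\lambda\y.\{\y\x\}\circ A$ by $R5$. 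I would point out that this uses \emph{neither} side condition of $(\alpha)$: the restrictions $\x\in FV(\lambda\x.A)$ and $\y\notin FV(\lambda\x.A)$ are there for confluence and to block infinite renaming, not for typing. The axioms (App) and (Lambda) are pure reassembly via $R4$/$R6$ and via $R10$/$R5$ respectively; (Var), (Shift), (IdVar), (IdShift) are handled by peeling off the matching substitution/weakening layers with the Generation lemma (e.g.\ for (IdShift) the $\{\y\x\}$ forces source $\Sigma,\y$ and target $\Sigma,\x$, then $\Sigma,\x\vdash W\x\circ A$ yields $\Sigma\vdash A$, and $\Sigma,\y\vdash W\y\circ A$ reassembles). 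Finally the primed axioms (Shift$'$), (IdShift$'$), (LiftShift$'$) and (W), all carrying the hypothesis $\x\neq\z$, use Lemma~\ref{lemmax}: the subject is a variable $\z$ distinct from the discharged $\x$, so $\z\in\Gamma,\x$ is equivalent to $\z\in\Gamma$, and the conclusion follows.

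The congruence cases (Figure~\ref{closure}) are then routine applications of the induction hypothesis. For $\lambda\x.A\to\lambda\x.A'$, $AB\to A'B$, $AB\to AB'$ and $S\circ A\to S\circ A'$ I would invert with the Generation lemma, apply \textbf{(T)} to the reduced subterm, and reassemble. The one case that genuinely needs \textbf{(S)} is $S\circ A\to S'\circ A$: inverting gives $\Gamma\vdash S\tri\Delta$ and $\Delta\vdash A$, and because \textbf{(S)} delivers $\Gamma\vdash S'\tri\Delta$ with the \emph{same} $\Delta$, the judgement $\Delta\vdash A$ survives and $\Gamma\vdash S'\circ A$ follows by $R6$. (Had the target been allowed to grow to some $\Delta'\geqslant\Delta$, one could instead invoke Theorem~\ref{increase} to lift $\Delta\vdash A$ to $\Delta'\vdash A$; the sharper invariant just avoids this.) The two substitution congruences establish \textbf{(S)}: $[B/\x]\to[B'/\x]$ uses \textbf{(T)} on $B$ and keeps the target $\Gamma,\x$, while $S_\x\to S'_\x$ uses \textbf{(S)} on $S$ and keeps the target $\Delta,\x$. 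This closes the simultaneous induction.
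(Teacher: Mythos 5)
Your proposal is correct and follows essentially the same route as the paper: each of the fourteen axioms is handled by inverting the judgement for the left-hand side via the Generation lemma and reassembling the right-hand side, exactly as in the paper's case-by-case displays, and your remark that the side conditions of $(\alpha)$ play no role in typing matches the paper's proof, which likewise ignores them. The only difference is one of explicitness: you spell out the simultaneous invariant \textbf{(S)} for substitutions (reduction preserves the target context $\Delta$) and the compatible-closure cases, which the paper leaves implicit, and this added detail is sound.
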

\begin{proof}$ $\\
Case $Beta$.\\[10pt]
\ruletwo{\ruleone{\ruledot{\Gamma,\x\vdash A}}{\Gamma\vdash\lambda\x.A}}{\ruledot{\Gamma\vdash B}}{\Gamma\vdash(\lambda\x.A)B}\qquad\qquad\ruletwo{\ruleone{\ruledot{\Gamma\vdash B}}{\Gamma\vdash[B/\x]\tri\Gamma,\x}}{\ruledot{\Gamma,\x\vdash A}}{\Gamma\vdash[B/\x]\circ A}
\newpage
\noindent
Case $App$.\\[10pt]
\ruletwo{\ruledot{\Gamma\vdash S\tri\Delta}}{\ruletwo{\ruledot{\Delta\vdash A}}{\ruledot{\Delta\vdash B}}{\Delta\vdash AB}}{\Gamma\vdash S\circ AB}\qquad\quad
\ruletwo
{\ruletwo{\ruledot{\Gamma\vdash S\tri\Delta}}{\ruledot{\Delta\vdash A}}{\Gamma\vdash S\circ A}}
{\ruletwo{\ruledot{\Gamma\vdash S\tri\Delta}}{\ruledot{\Delta\vdash B}}{\Gamma\vdash S\circ B}}
{\Gamma\vdash(S\circ A)(S\circ B)}\\[20pt]
Case $Lambda$.\\[10pt]
   \ruletwo{\ruledot{\Gamma\vdash S\tri\Delta}}
  {\ruleone{\ruledot{\Delta,\x\vdash A}}{\Delta\vdash\lambda\x.A}}
  {\Gamma\vdash S\circ\lambda\x.A}\qquad\qquad
    \ruleone{\ruletwo{\ruleone{\ruledot{\Gamma\vdash S\tri\Delta}}{\Gamma,\x\vdash \,\,\Up S_\x\tri\Delta,\x}}
  {\ruledot{\Delta,\x\vdash A}}
  {\Gamma,\x\vdash \,\,\Up S_\x\circ A}}
  {\Gamma\vdash\lambda\x.\!\Up S_\x\circ A}\\[20pt]
  Case $Var$.\\[10pt]
  \ruletwo
 {\ruleone{\ruledot{\Gamma\vdash B}}{\Gamma\vdash[B/\x]\tri\Gamma,\x}}
 {\Gamma,\x\vdash \x}
 {\Gamma\vdash[B/\x]\circ\x}\qquad\qquad
  \ruledot{\Gamma\vdash B}\\[20pt]
  Case $Shift$.\\[10pt]
\ruletwo
{\ruleone{\ruledot{\Gamma\vdash B}}
{\Gamma\vdash[B/\x]\tri\Gamma,\x}}
{\ruletwo
{\Gamma,\x\vdash W_\x\tri\Gamma}
{\ruledot{\Gamma\vdash A}}
{\Gamma,\x\vdash W_\x\circ A}}
{\Gamma\vdash[B/\x]\circ W_\x\circ A}\qquad\qquad
\ruledot{\Gamma\vdash A}\\[20pt]
Case $Shift'$.\\[10pt]
\ruletwo
 {\ruleone{\ruledot{\Gamma\vdash B}}{\Gamma\vdash[B/\x]\tri\Gamma,\x}}
 {\ruleone{\ruledot{\Gamma\vdash \z}}{\Gamma,\x\vdash\z}}
 {\Gamma\vdash[B/\x]\circ\z}\qquad\qquad
  \ruledot{\Gamma\vdash \z}\\[20pt]
Case $IdVar$.\\[10pt]
\ruletwo
{\Gamma,\y\vdash\{\y\x\}\tri\Gamma,\x}
{\Gamma,\x\vdash\x}
{\Gamma,\y\vdash\{\y\x\}\circ\x}\qquad\qquad
$\Gamma,\y\vdash\y$\\[20pt]
Case $IdShift$.\\[10pt]
\ruletwo
{\Gamma,\y\vdash\{\y\x\}\tri\Gamma,\x}
{\ruletwo
{\Gamma,\x\vdash W_\x\tri\Gamma}
{\ruledot{\Gamma\vdash A}}
{\Gamma,\x\vdash W_\x\circ A}}
{\Gamma,\y\vdash\{\y\x\}\circ W_\x\circ A}\qquad\quad
\ruletwo
{\Gamma,\y\vdash W_\y\tri\Gamma}
{\ruledot{\Gamma\vdash A}}
{\Gamma,\y\vdash W_\y\circ A}\\[20pt]
Case $IdShift'$.\\[10pt]
\ruletwo
{\Gamma,\y\vdash\{\y\x\}\tri\Gamma,\x}
{\ruleone{\ruledot{\Gamma\vdash\z}}
{\Gamma,\x\vdash\z}}
{\Gamma,\y\vdash\{\y\x\}\circ\z}\qquad\qquad
\ruletwo
{\Gamma,\y\vdash W_\y\tri\Gamma}
{\ruledot{\Gamma\vdash\z}}
{\Gamma,\y\vdash W_\y\circ\z}\\[20pt]
Case $LiftVar$.\\[10pt]
\ruletwo
{\ruleone{\ruledot{\Gamma\vdash S\tri\Delta}}
{\Gamma,\x\vdash \,\,\Up S_\x\tri\Delta,\x}}
{\Delta,\x\vdash\x}
{\Gamma,\x\vdash \,\,\Up S_\x\circ\x}\qquad\qquad
$\Gamma,\x\vdash\x$\\[20pt]
Case $LiftShift$.\\[10pt]
\ruletwo
{\ruleone{\ruledot{\Gamma\vdash S\tri\Delta}}
{\Gamma,\x\vdash \,\,\Up S_\x\tri\Delta,\x}}
{\ruletwo
{\Delta,\x\vdash W_\x\tri\Delta}
{\ruledot{\Delta\vdash A}}
{\Delta,\x\vdash W_\x\circ A}}
{\Gamma,\x\vdash \,\,\Up S_\x\circ W_\x\circ A}\\[20pt]
\ruletwo
{\Gamma,\x\vdash W_\x\tri\Gamma}
{\ruletwo
{\ruledot{\Gamma\vdash S\tri\Delta}}
{\ruledot{\Delta\vdash A}}
{\Gamma\vdash S\circ A}}
{\Gamma,\x\vdash W_\x\circ S\circ A}\\[20pt]
Case $LiftShift'$.\\[10pt]
\ruletwo
{\ruleone{\ruledot{\Gamma\vdash S\tri\Delta}}
{\Gamma,\x\vdash \,\,\Up S_\x\tri\Delta,\x}}
{\ruleone{\ruledot{\Delta\vdash\z}}
{\Delta,\x\vdash\z}}
{\Gamma,\x\vdash \,\,\Up S_\x\circ\z}\qquad\qquad
\ruletwo
{\Gamma,\x\vdash W_\x\tri\Gamma}
{\ruletwo
{\ruledot{\Gamma\vdash S\tri\Delta}}
{\ruledot{\Delta\vdash\z}}
{\Gamma\vdash S\circ\z}}
{\Gamma,\x\vdash W_\x\circ S\circ \z}\\[20pt]
Case $W$.\\[10pt]
\ruletwo
{\Gamma,\x\vdash W_\x\tri\Gamma}
{\ruledot{\Gamma\vdash\z}}
{\Gamma,\x\vdash W_\x\circ\z}\qquad\qquad
\ruleone{\ruledot{\Gamma\vdash\z}}
{\Gamma,\x\vdash\z}\\[20pt]
Case $\alpha$.\\[10pt]
\ruleone{\ruledot{\Gamma,\x\vdash A}}
{\Gamma\vdash\lambda\x.A}\qquad\qquad
\ruleone
{\ruletwo
{\Gamma,\y\vdash\{\y\x\}\tri\Gamma,\x}
{\ruledot{\Gamma,\x\vdash A}}
{\Gamma,\y\vdash\{\y\x\}\circ A}}
{\Gamma\vdash\lambda\y.\{\y\x\}\circ A}\\
\end{proof}

\begin{corollary}Reducts of well-formed terms are well-formed.
\end{corollary}

\begin{theorem}\label{decreaseFV}If $A$ is a well-formed term and $A\to B$, then\\ $FV(A)\geqslant FV(B)$.
\end{theorem}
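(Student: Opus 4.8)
The plan is to reduce the statement to two results already available: subject reduction (Theorem~\ref{subbred}) and the characterization of $FV$ as the least context deriving a term (the final theorem of Section~2, which asserts that whenever $\Gamma\vdash A$ is derivable, $FV(A)$ is defined, $FV(A)\vdash A$ is derivable, and $FV(A)\leqslant\Gamma$). The crucial observation is that $FV(A)$ is not merely a lower bound for the contexts deriving $A$, but is itself a context from which $A$ is derivable. This means subject reduction can be applied directly with $FV(A)$ as the ambient context, and there is no need to inspect the individual rewrite rules.

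Concretely, I would proceed in three steps. First, from well-formedness of $A$ there is some $\Gamma$ with $\Gamma\vdash A$ derivable, so the characterization theorem guarantees that $FV(A)$ is defined and that $FV(A)\vdash A$ is derivable. Second, I would apply Theorem~\ref{subbred} to the judgement $FV(A)\vdash A$ and the reduction $A\to B$, obtaining $FV(A)\vdash B$. Third, I would invoke the characterization theorem once more, now on the derivable judgement $FV(A)\vdash B$: this yields that $FV(B)$ is defined and that $FV(B)\leqslant FV(A)$, which is exactly the desired $FV(A)\geqslant FV(B)$.

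I do not expect a genuine obstacle here, because the entire case analysis over the rewrite rules of Figure~\ref{lamalp} and the congruence rules of Figure~\ref{closure} has already been absorbed into the proof of subject reduction, and the delicate bookkeeping about when $FV$ is defined sits inside the characterization theorem. The only point demanding care is recognizing that these two results compose cleanly precisely because $FV(A)$ is a \emph{derivable} context rather than an abstract infimum; once this is noticed, the argument is a short chain of three invocations with no further computation.
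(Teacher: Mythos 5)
Your proof is correct, and it takes a genuinely different route from the paper. The paper proves Theorem~\ref{decreaseFV} by a direct case analysis over every rewrite rule of Figure~\ref{lamalp}, computing $FV$ of each redex and contractum from the equations of Figure~\ref{elim} together with Lemma~\ref{lamcup} and the auxiliary Lemma~\ref{subcup} (itself proved by induction on the structure of $S$); this calculation shows that $FV$ is in fact preserved exactly by most rules and strictly decreases only for a few ($Shift$, $Shift'$, $IdVar$, $LiftVar$, $W$, \dots). You instead compose two already-proved results: subject reduction gives $FV(A)\vdash B$ from $FV(A)\vdash A$ and $A\to B$, and the minimality clause of the characterization theorem then yields $FV(B)\leqslant FV(A)$. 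Your argument is shorter, needs no Lemma~\ref{subcup}, and handles the compatible closure of Figure~\ref{closure} uniformly (a point the paper's case analysis leaves implicit, since it only treats root reductions). What you give up is the finer, rule-by-rule information about exactly how $FV$ changes --- in particular the equalities $FV(A)=FV(B)$ for rules such as $Beta$, $App$, $Lambda$, and $\alpha$ --- but that extra precision is not required by the statement being proved, nor by the later use of the theorem in Theorem~\ref{reduce}.
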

\begin{proof} If $A$ is a well-formed term then $FV(A)\vdash A$ is derivable by Corollary~\ref{FreV}. If also $A\to B$ then $FV(A)\vdash B$ is derivable by the previous theorem and $FV(A)\geqslant FV(B)$ by Corollary~\ref{FreV}.

\end{proof}
\newpage 
\section{The calculus $\lambda\upsilon'$}

\begin{figure}
\begin{framed}
\noindent
\textbf{Syntax.} $\Lambda\upsilon'$ is the set of terms inductively defined by the following BNF:

\begin{align*}
& &\x::&= x\mid y\mid z\mid\ldots \tag{Variables}\\
& &a,b::&= \x\mid \un \mid ab \mid \lambda a  \mid  a[s]  \tag{Terms}\\
& &s::&=   b/ \mid \,\,\uparrow\,\, \mid id \mid \,\,\Up s  \tag{Substitutions}
\end{align*}

\textbf{Rewrite rules.}

\begin{align*}
&Beta &  (\lambda a)b &\to a[b/]  \\
&App &  (ab)[s] &\to (a[s])(b[s]) \\
&Lambda & (\lambda a)[s] &\to\lambda (a[\Up s]) \\
&Var &  \un[b/] &\to b \\
&Shift & a[\uparrow][ b/] &\to a \\
&VarId &  \un[id] &\to \un \\
&ShiftId &  a[\uparrow][id] &\to a[\uparrow] \\
&VarLift & \un[\Up s]  &\to \un \\
&ShiftLift & a[\uparrow][\Up s] &\to a[s][\uparrow]
\end{align*}

\end{framed}
\caption{The calculus $\lambda\upsilon'$}\label{upsilon}
\end{figure}

To prove confluence of $\lambda\alpha$, we consider the following calculus $\lambda\upsilon'$.
\begin{definition}The calculus $\lambda\upsilon'$ is shown on Figure~\ref{upsilon}. This calculus contains both named variables  and De Bruijn indices. There are no binders for named variables, they are free in all terms. By $\upsilon'$ we denote $\lambda\upsilon'$ without $Beta$.
\end{definition}

\begin{figure}
\begin{framed}
\noindent
\begin{align*}
\|\x\|_1&= 2 & \|\x\|_2&= 2\\
\|\un\|_1&= 2 & \|\un\|_2&= 2\\
\|ab\|_1&= \|a\|_1+\|b\|_1+1 & \|ab\|_2&= \|a\|_2+\|b\|_2+1\\
\|\lambda a\|_1&= \|a\|_1+1 & \|\lambda a\|_2&= \|a\|_2+1\\
\|a[s]\|_1&= \|a\|_1\cdot \|s\|_1 & \|a[s]\|_2&= \|a\|_2\cdot \|s\|_2\\
\|id\|_1&= 2 & \|id\|_2&= 2\\
\|b/\|_1&= \|b\|_1 & \|b/\|_2&= \|b\|_2\\
\|\uparrow\|_1&= 2 & \|\uparrow\|_2&= 2\\
\|\!\Up s\|_1&= \|s\|_1 & \|\!\Up s\|&= 2\cdot\|s\|_2\\
\end{align*}
\end{framed}
\caption{Interpretations for proving the termination of $\upsilon'$}\label{we}
\end{figure}

\begin{figure}
\begin{framed}
\noindent
\begin{align*}
&R1 && 0\vdash\x \\[5pt]
&R2&& n+1\vdash \un &\\[5pt]
&R3&& \ruletwo{n\vdash a}{n\vdash b}{n\vdash
ab} &\\[10pt]
&R4&& \ruleone{n+1\vdash a}{n\vdash\lambda a} &\\[10pt]
&R5&& \ruletwo{n\vdash s\tri m}{m\vdash
a}{n\vdash
 a[s]} &\\[10pt]
 &R6&& \ruleone{n\vdash
b}{n\vdash
b/\tri n+1} &\\[10pt]
&R7&& n+1\vdash\,\, \uparrow\tri n &\\[5pt]
&R8&& n+1\vdash id \tri n+1 &\\[5pt]
&R9&& \ruleone{n\vdash s\tri m}{n+1\vdash\,\,
\Up s\tri m+1} &\\[10pt]
\end{align*}
\end{framed}
\caption{Inference rules}\label{inf}
\end{figure}

\begin{proposition}The calculus $\upsilon'$ is terminating.
\end{proposition}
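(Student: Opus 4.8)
The plan is to use the two interpretations $\|\cdot\|_1$ and $\|\cdot\|_2$ of Figure~\ref{we}, combined lexicographically into a single measure $\mu(a)=(\|a\|_1,\|a\|_2)\in\mathbb{N}\times\mathbb{N}$, and to show that every $\upsilon'$-step strictly decreases $\mu$ for the lexicographic order. Since that order is well-founded, no infinite $\upsilon'$-reduction can exist, which is exactly termination.

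First I would establish, by a routine structural induction, that $\|a\|_i\geqslant 2$ and $\|s\|_i\geqslant 2$ for every term $a$, substitution $s$, and $i\in\{1,2\}$. This positivity is precisely what makes both interpretations strictly monotone in each argument: in the clause $\|a[s]\|_i=\|a\|_i\cdot\|s\|_i$ the factor held fixed is always $\geqslant 2>0$, while the additive clauses for $ab$, $\lambda a$, $b/$, and $\Up s$ are visibly strictly increasing. Strict monotonicity guarantees that a strict (respectively equal) change of the $\|\cdot\|_i$-value of a subterm induces a change of the same kind in the enclosing term, so it suffices to compare the two sides of each rewrite rule at the root and then to propagate the comparison through the compatible closure.

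Next I would check the rules against $\|\cdot\|_1$. A direct computation shows that $App$, $Lambda$, $Var$, $Shift$, $VarId$, $ShiftId$, and $VarLift$ each strictly decrease $\|\cdot\|_1$; for instance $App$ turns $(\|a\|_1+\|b\|_1+1)\|s\|_1$ into $\|a\|_1\|s\|_1+\|b\|_1\|s\|_1+1$, a drop of $\|s\|_1-1\geqslant 1$, and $Shift$ turns $2\|a\|_1\|b\|_1$ into $\|a\|_1$. The single exception is $ShiftLift$: both $a[\uparrow][\Up s]$ and $a[s][\uparrow]$ evaluate under $\|\cdot\|_1$ to $2\|a\|_1\|s\|_1$, so here $\|\cdot\|_1$ is merely preserved. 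This is the heart of the matter, and the reason the second interpretation is introduced.

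For $ShiftLift$ I would then appeal to $\|\cdot\|_2$, whose clause $\|\Up s\|_2=2\cdot\|s\|_2$ is tailored to this rule: the left-hand side evaluates to $\|a\|_2\cdot 2\cdot 2\|s\|_2=4\|a\|_2\|s\|_2$ while the right-hand side evaluates to $\|a\|_2\|s\|_2\cdot 2=2\|a\|_2\|s\|_2$, a strict drop. Putting the pieces together, every root step is of one of two types: it strictly decreases $\|\cdot\|_1$ (so $\mu$ drops in its first coordinate, whatever happens in the second), or it is an instance of $ShiftLift$ that leaves $\|\cdot\|_1$ unchanged while strictly decreasing $\|\cdot\|_2$ (so $\mu$ drops in its second coordinate). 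By the strict monotonicity of both interpretations these local comparisons survive the compatible closure, so $\mu$ strictly decreases on every $\upsilon'$-step and the calculus terminates. I expect the only real obstacle to be exactly the rule $ShiftLift$, where the natural multiplicative weight is stationary; the whole point of the paired interpretation — and specifically of the extra factor $2$ in $\|\Up s\|_2$ — is to break this tie while keeping $\|\cdot\|_1$ non-increasing everywhere.
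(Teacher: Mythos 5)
Your proposal is correct and is exactly the paper's argument: a lexicographic combination of the two interpretations of Figure~\ref{we}, with $\|\cdot\|_1$ strictly decreasing on every rule except $ShiftLift$ (where it is preserved) and $\|\cdot\|_2$ strictly decreasing on $ShiftLift$. The paper only states this in one sentence; your version supplies the routine computations and the monotonicity/positivity check needed to propagate the decrease through the compatible closure, all of which are accurate.
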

\begin{proof}The termination of $\upsilon'$ is proved by a simple lexicographic ordering on two weights $\| \|_1$ and $\| \|_2$ defined on any terms or substitutions (see Figure~\ref{we}). $\| \|_1$ is strictly decreasing on all the rules but $ShiftLift$, on which it is decreasing. $\| \|_2$ is strictly decreasing on $ShiftLift$.\\
\end{proof}

The calculus $\lambda\upsilon'$ is not locally confluent because of the presence of named variables. Now we define  sets of well-formed terms and substitutions to prove confluence on these sets.

\begin{definition}A \emph{judgement} is an expression of the form $n\vdash a$ or of the form  $n\vdash s\tri m$ $(n,m\in \mathbb{N})$. Inference rules for judgements are shown in Figure~\ref{inf}. A term $a$ is \emph{well-formed} iff $n\vdash a$ is derivable for some $n$.
\end{definition}

\begin{lemma}Generation lemma.\\
Each derivation of $n\vdash\x$ is an application of the rule~$R1$, where $n$ is $0$.\\
Each derivation of $n\vdash\un$ is an application of the rule~$R2$, where $n$ is $m+1$ for some $m$.\\
Each derivation of $n\vdash ab$ is an application of the rule~$R3$ to some derivations of $n\vdash a$ and $n\vdash b$.\\
Each derivation of $n\vdash\lambda a$ is an application of the rule~$R4$ to some derivation of $n+1\vdash a$.\\
Each derivation of $n\vdash a[s]$ is an application of the rule~$R5$ to some derivations of $n\vdash s\tri m$ and $m\vdash a$ for some $m$.\\
Each derivation of $n\vdash b/\tri m$ is an application of the rule~$R6$ to some derivation of $n\vdash b$, where $m$ is $n+1$.\\
Each derivation of $n\vdash \,\,\uparrow\tri m$ is an application of the rule~$R7$, where $n$ is $m+1$.\\
Each derivation of $n\vdash id\tri m$ is an application of the rule~$R8$, where $n$ is $k+1$ and $m$ is $k+1$ for some $k$.\\
Each derivation of $n\vdash\,\,\Up s\tri m$ is an application of the rule~$R9$ to some derivation of $k\vdash s\tri l$, where $n$ is $k+1$ and $m$ is $l+1$.
\end{lemma}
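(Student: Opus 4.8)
The plan is to prove the nine clauses by a single inversion on the last rule of a given derivation, exploiting that the rule system of Figure~\ref{inf} is \emph{syntax-directed}: the outermost constructor of the subject of a judgement uniquely determines which of $R1$--$R9$ can have been applied last.

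First I would record the bookkeeping that makes the case analysis exhaustive and non-overlapping. The five term formers $\x$, $\un$, $ab$, $\lambda a$, and $a[s]$ are pairwise distinct syntactic shapes, and each occurs as the subject of the conclusion of exactly one term rule, namely $R1$, $R2$, $R3$, $R4$, $R5$ respectively. Likewise the four substitution formers $b/$, $\uparrow$, $id$, and $\Up s$ each occur as the subject of exactly one substitution rule, $R6$, $R7$, $R8$, $R9$. Since a term judgement $n\vdash a$ and a substitution judgement $n\vdash s\tri m$ are distinct forms of expression, no term rule can conclude a substitution judgement, nor conversely. Hence for any derivation $D$ the shape of the subject in its conclusion determines the final rule of $D$ unambiguously.

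With the last rule fixed, each clause follows immediately: its stated premises are precisely the immediate sub-derivations fed into that rule, and the claimed constraints on the indices are read off from the form of the rule's conclusion. Thus $R1$ forces the index to be $0$; $R2$ concludes $n\vdash\un$ only when $n$ has the form $m+1$; $R3$, $R4$, $R5$ simply expose their sub-derivations (with the existential $m$ in the $R5$ case); $R6$ forces the codomain index to be $n+1$; $R7$ forces $n$ to be $m+1$; $R8$ forces the domain and codomain to coincide and be of the form $k+1$; and $R9$ yields a sub-derivation $k\vdash s\tri l$ with $n=k+1$ and $m=l+1$.

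The only point requiring any care --- and the closest thing to an obstacle --- is the tracking of the successor patterns in the indices: one must assert exactly what the rule supplies and no more. In particular, from $R2$ we learn only that the index is a successor (a positive natural number), not its precise value, and from $R8$ we must record that the domain and codomain indices are equal. Modulo this arithmetic bookkeeping, the proof is straightforward.
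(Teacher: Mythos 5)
Your proposal is correct and is exactly the standard syntax-directed inversion argument that the paper leaves implicit (it gives no proof for this lemma, treating it as evident, just as the analogous Generation lemma for $\lambda\alpha$ is dismissed as ``straightforward''). Your attention to the successor patterns in the indices and to the disjointness of term and substitution judgements is precisely the bookkeeping the omitted proof would contain.
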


\begin{example}$\lambda x$ is not a well-formed term, but $\lambda (x[\uparrow])$ is well-formed.\\ $\lambda\lambda x$ is not a well-formed term, but $\lambda\lambda (x[\uparrow][\uparrow])$ is well-formed.\\ $x[\un/]$ is not a well-formed term, but $x[\uparrow][\un/]$ is well-formed.\\
$x[b/]$ is not a well-formed term, but $x[\uparrow][b/]$ may be well-formed.\\
$x[id]$ is not a well-formed term (see the rule~$R8$), but $x[\uparrow][id]$ is well-formed.\\
$x[\Up s]$ is not a well-formed term, but $x[\uparrow][\Up s]$ may be well-formed.
\end{example}

\begin{corollary}Subterms of well-formed terms are well-formed.
\end{corollary}

\begin{proposition}``Subject reduction''.\\
If $n\vdash a$ and $a\to b$, then $n\vdash b$.
\end{proposition}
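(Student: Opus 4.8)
The plan is to argue by induction on the derivation of the reduction $a\to b$ in the compatible closure, proving simultaneously the companion statement for substitutions: if $n\vdash s\tri m$ and $s\to s'$, then $n\vdash s'\tri m$. The simultaneous statement is needed because the compatible closure permits reductions inside the body $b$ of a substitution $b/$ and inside the substitution $s$ of a lift $\Up s$, so the term-level and substitution-level claims feed into one another. The overall shape of the argument is exactly that of Theorem~\ref{subbred}: decompose the well-formed left-hand side with the Generation lemma, then reassemble a derivation of the right-hand side from the recovered pieces.

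For the base cases I treat the nine root rewrites of Figure~\ref{upsilon} one at a time. Assuming the left-hand side is well-formed at index $n$, I apply the Generation lemma repeatedly, recording the index carried at each node, and rebuild the reduct. The only subtlety is the arithmetic attached to the substitution rules: $b/$ increments the index (R6), $\uparrow$ decrements it (R7), $id$ preserves it (R8), and $\Up s$ shifts both source and target by one (R9). For instance, in $Beta$ the judgement $n\vdash(\lambda a)b$ forces $n+1\vdash a$ and $n\vdash b$, whence R6 gives $n\vdash b/\tri n+1$ and R5 rebuilds $n\vdash a[b/]$. In $Shift$ the successive substitutions cancel: $n\vdash a[\uparrow][b/]$ forces $m=n+1$ by R6 and then $n\vdash a$ by R7, which is exactly the reduct.

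The congruence cases are routine. If the redex lies in a proper subterm, the top constructor is unchanged; the Generation lemma exhibits the reducing subcomponent as well-formed at the appropriate index; the induction hypothesis (term-level or substitution-level, as appropriate) reproduces it after reduction; and the original inference rule reassembles the whole. The two substitution congruences reduce to the inductive hypotheses directly: for $b/\to b'/$ one invokes the term-level claim through R6, and for $\Up s\to\Up s'$ one invokes the substitution-level claim through R9.

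I expect the main obstacle to be the index bookkeeping in $ShiftLift$ together with $VarId$ and $ShiftId$. In $ShiftLift$, $n\vdash a[\uparrow][\Up s]$ unwinds to $n=k+1$, $k\vdash s\tri l$, and $l\vdash a$ (the inner $\uparrow$ pins $a$ at $l$, not $l+1$), and one must check that the reduct $a[s][\uparrow]$ type-checks with the new outer $\uparrow$ lowering $k+1$ to $k$ while $a[s]$ lives at $k$; this is precisely the commutation the indices are designed to make work. For $VarId$ and $ShiftId$ the point is that R8 derives $id$ only at a successor index, $k+1\vdash id\tri k+1$, which is exactly why $\un[id]$ and $a[\uparrow][id]$ are well-formed while $x[id]$ is not, and this constraint must be threaded through the reconstruction.
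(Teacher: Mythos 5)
Your proposal is correct and follows essentially the same route as the paper: for each root rewrite rule, decompose the derivation of the left-hand side via the Generation lemma, track the indices, and reassemble a derivation of the reduct at the same index, with the congruence cases (including the companion claim for substitutions, which the paper leaves implicit) handled by a routine induction over the compatible closure. Your index bookkeeping in the $Beta$, $Shift$, $ShiftLift$, $VarId$, and $ShiftId$ cases agrees with the derivations exhibited in the paper's proof.
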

\begin{proof}$ $\\[10pt]
Case $Beta$.\\[10pt]
\ruletwo{\ruleone{\ruledot{n+1\vdash a}}{n\vdash\lambda a}}{\ruledot{n\vdash b}}{n\vdash(\lambda a)b}\qquad\qquad\ruletwo{\ruleone{\ruledot{n\vdash b}}{n\vdash b/\tri n+1}}{\ruledot{n+1\vdash a}}{n\vdash a[b/]}\\[20pt]
Case $App$.\\[10pt]
\ruletwo{\ruledot{n\vdash s\tri m}}{\ruletwo{\ruledot{m\vdash a}}{\ruledot{m\vdash b}}{m\vdash ab}}{n\vdash (ab)[s]}\qquad\quad
\ruletwo
{\ruletwo{\ruledot{n\vdash s\tri m}}{\ruledot{m\vdash a}}{n\vdash a[s]}}
{\ruletwo{\ruledot{n\vdash s\tri m}}{\ruledot{m\vdash b}}{n\vdash b[s]}}
{n\vdash(a[s])(b[s])}
\newpage
\noindent
Case $Lambda$.\\[10pt]
   \ruletwo{\ruledot{n\vdash s\tri m}}
  {\ruleone{\ruledot{m+1\vdash a}}{m\vdash\lambda a}}
  {n\vdash (\lambda a)[s]}\qquad\qquad
    \ruleone{\ruletwo{\ruleone{\ruledot{n\vdash s\tri m}}{n+1\vdash\,\,\Up s\tri m+1}}
  {\ruledot{m+1\vdash a}}
  {n+1\vdash a[\Up s]}}
  {n\vdash\lambda a[\Up s]}\\[20pt]
Case $Var$.\\[10pt]
  \ruletwo
 {\ruleone{\ruledot{n\vdash b}}{n\vdash b/\tri n+1}}
 {n+1\vdash \un}
 {n\vdash\un[b/]}\qquad\qquad
  \ruledot{n\vdash b}\\[20pt]
Case $Shift$.\\[10pt]
\ruletwo
{\ruleone{\ruledot{n\vdash b}}
{n\vdash b/\tri n+1}}
{\ruletwo
{n+1\vdash\,\, \uparrow\tri n}
{\ruledot{n\vdash a}}
{n+1\vdash a[\uparrow]}}
{n\vdash a[\uparrow][b/]}\qquad\qquad
\ruledot{n\vdash a}\\[20pt]
Case $VarId$.\\[10pt]
\ruletwo
{n+1\vdash id\tri n+1}
{n+1\vdash\un}
{n+1\vdash \un[id]}\qquad\qquad
$n+1\vdash\un$\\[20pt]
Case $ShiftId$.\\[10pt]
\ruletwo
{n+1\vdash id\tri n+1}
{\ruletwo
{n+1\vdash\,\,\uparrow\tri n}
{\ruledot{n\vdash a}}
{n+1\vdash a[\uparrow]}}
{n+1\vdash a[\uparrow][id]}\qquad\quad
\ruletwo
{n+1\vdash \,\,\uparrow\tri n}
{\ruledot{n\vdash a}}
{n+1\vdash a[\uparrow]}
\newpage
\noindent
Case $VarLift$.\\[10pt]
\ruletwo
{\ruleone{\ruledot{n\vdash s\tri m}}
{n+1\vdash\,\,\Up s\tri m+1}}
{m+1\vdash\un}
{n+1\vdash\un[\Up s]}\qquad\qquad
$n+1\vdash\un$\\[20pt]
Case $ShiftLift$.\\[10pt]
\ruletwo
{\ruleone{\ruledot{n\vdash s\tri m}}
{n+1\vdash\,\,\Up s\tri m+1}}
{\ruletwo
{m+1\vdash\,\,\uparrow\tri m}
{\ruledot{m\vdash a}}
{m+1\vdash a[\uparrow]}}
{n+1\vdash a[\uparrow][\Up s]}\\[20pt]
\ruletwo
{n+1\vdash \,\,\uparrow\tri n}
{\ruletwo
{\ruledot{n\vdash s\tri m}}
{\ruledot{m\vdash a}}
{n\vdash a[s]}}
{n+1\vdash a[s][\uparrow]}

\end{proof}

\begin{corollary}Reducts of well-formed terms are well-formed.
\end{corollary}

\begin{lemma}\label{lemma9}If a well-formed term ``$a$'' is an $\upsilon'$-normal form, then ``$a$'' does not contain substitutions of the forms $b/$, $id$, and $\Up s$.
\end{lemma}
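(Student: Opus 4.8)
The plan is to prove the statement by structural induction on the well-formed $\upsilon'$-normal form ``$a$'', establishing directly that $a$ contains no substitution of the forms $b/$, $id$, or $\Up s$. As usual $\to$ denotes the compatible closure of the rules of Figure~\ref{upsilon}, so any subterm of a normal form is again a normal form, and subterms of well-formed terms are well-formed (Corollary above). Consequently, when $a$ is $a_1a_2$ or $\lambda a_1$ the immediate subterms are well-formed normal forms and the induction hypothesis disposes of the case at once, while if $a$ is $\x$ or $\un$ there is nothing to prove. The only substantive case is $a=a'[s]$, where it suffices to show $s=\,\uparrow$: for then $a'$ carries no forbidden substitution by the induction hypothesis and $\uparrow$ contributes none.

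So suppose $a=a'[s]$ and, toward a contradiction, that $s$ is one of $b/$, $id$, $\Up s'$. First I would apply the Generation lemma to $n\vdash a'[s]$: in each of the three cases it pins down the target level $m$ of $s$ (namely $m=n+1$ for $b/$, $m=n$ for $id$, $m=l+1$ for $\Up s'$), and in all three $m\geqslant 1$. Hence $m\vdash a'$ with $m\geqslant 1$, and the Generation lemma rules out $a'=\x$, which requires level $0$. This is exactly where well-formedness is indispensable, since syntactically stuck terms such as $\x[b/]$ are thereby excluded. Thus $a'$ is one of $\un$, $a''b''$, $\lambda a''$, or $a''[s'']$, and in the first three subcases a redex appears directly: $\un[s]$ reduces by \emph{Var}, \emph{VarId}, or \emph{VarLift}; $(a''b'')[s]$ reduces by \emph{App}; and $(\lambda a'')[s]$ reduces by \emph{Lambda}. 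Each contradicts that $a$ is a normal form.

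It remains to treat the nested subcase $a'=a''[s'']$. Here $a''[s'']$ is a proper subterm of $a$, hence a well-formed normal form, so the induction hypothesis applies and shows that $a''[s'']$ contains no forbidden substitution; in particular its outermost substitution must be $s''=\,\uparrow$. Therefore $a=a''[\uparrow][s]$, which is a redex of exactly one of the three ``shift'' rules --- \emph{Shift} if $s=b/$, \emph{ShiftId} if $s=id$, \emph{ShiftLift} if $s=\Up s'$ --- again contradicting normality. This exhausts the possibilities, so $s=\,\uparrow$ and the case $a=a'[s]$ is complete.

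I expect the nested subcase $a'=a''[s'']$ to be the main obstacle, as it is the only place where the argument is not purely local: one must descend into a proper subterm, use the induction hypothesis to collapse its inner substitution to $\uparrow$, and only then expose a shift-redex. This clarifies why the calculus contains \emph{Shift}, \emph{ShiftId}, and \emph{ShiftLift}: they are precisely what reduces the configurations $a''[\uparrow][b/]$, $a''[\uparrow][id]$, and $a''[\uparrow][\Up s']$, so that no well-formed normal form can retain a trailing $b/$, $id$, or $\Up s$. A minor point I would state cleanly is the passage from ``no forbidden substitution occurs anywhere in $a$'' to ``no subterm has the form $a'[s]$ with $s$ forbidden'': since substitutions enter terms only through the production $a[s]$ and nest only through $\Up$, the outermost enclosing substitution application of any forbidden occurrence already has a forbidden top-level $s$, so the structural induction above covers every occurrence.
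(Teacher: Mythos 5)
Your proof is correct and follows essentially the same route as the paper's: structural induction on the well-formed normal form, using the Generation lemma to exclude $a'=\x$ beneath a forbidden substitution, enumerating the remaining shapes of $a'$, and exhibiting a redex (\emph{Var}/\emph{VarId}/\emph{VarLift}, \emph{App}, \emph{Lambda}, or one of the shift rules) in each case. You merely spell out more explicitly the nested subcase $a'=a''[s'']$ and the reduction of ``contains a forbidden substitution'' to ``has a subterm $a'[s]$ with $s$ forbidden'', both of which the paper's shorter argument leaves implicit.
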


\begin{proof}Induction over the structure of $a$.\\
Suppose, $a$ contains a subterm $a'[b/]$, or $a'[id]$, or $a'[\Up s]$. By induction hypothesis, $a'$ does not contain $[\,/\,]$,\,\,$id$, and $\Up$\,\,, hence $a'$ has the form $c_1c_2$, or $\lambda c$, or $c[\uparrow]$, or $\un$ (by Generation lemma, $a'$ can not be $\x$).
In each case we can apply some rewrite rule, hence $a$ can not be an $\upsilon'$-normal form.\\
\end{proof}

The following five lemmas have similar proofs, I prove the last lemma for example.

\begin{lemma}\label{gor1}
If $a[\Uparrow\!(\uparrow)][\Uparrow\!(b/)]$ is well-formed, there is a common\\ $\upsilon'$-reduct of $a[\Uparrow\!(\uparrow)][\Uparrow\!(b/)]$ and $a$.
\end{lemma}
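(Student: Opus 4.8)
The plan is to read Lemma~\ref{gor1} as the coherence obligation coming from the critical pair between $Shift$ and $Lambda$: a well-formed $(\lambda a)[\uparrow][b/]$ rewrites to $\lambda a$ by $Shift$, and also, pushing the two substitutions through the binder, to $\lambda(a[\Uparrow\!(\uparrow)][\Uparrow\!(b/)])$, so the two results must reconverge. I would argue by induction on the structure of $a$. The naive fixed-depth induction fails in the $\lambda$-case, because applying $Lambda$ twice sends $(\lambda a'')[\Uparrow\!(\uparrow)][\Uparrow\!(b/)]$ to $\lambda(a''[\Uparrow^{2}\!(\uparrow)][\Uparrow^{2}\!(b/)])$, deepening both lifts. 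So I would first strengthen the statement to every lift-depth at once: for all $n\geq 0$ and every $a$ for which $a[\Uparrow^{n}\!(\uparrow)][\Uparrow^{n}\!(b/)]$ is well-formed, this term has a common $\upsilon'$-reduct with $a$ (here $n=0$ is the bare $Shift$ instance $a[\uparrow][b/]\to a$ and $n=1$ is the lemma). Quantifying over $n$ keeps the structural induction on $a$ well-founded even though the $\lambda$-case passes to depth $n+1$ on the strictly smaller body.

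Next I would tame the substitutions inside $a$. As $\upsilon'$ is terminating, $a$ has a $\upsilon'$-normal form $\hat a$, well-formed in the same context by subject reduction; by congruence $a[\Uparrow^{n}\!(\uparrow)][\Uparrow^{n}\!(b/)]\twoheadrightarrow\hat a[\Uparrow^{n}\!(\uparrow)][\Uparrow^{n}\!(b/)]$ and $a\twoheadrightarrow\hat a$, so it suffices to take $a$ already in normal form and exhibit $a$ itself as the common reduct, i.e.\ to prove $a[\Uparrow^{n}\!(\uparrow)][\Uparrow^{n}\!(b/)]\twoheadrightarrow a$. By Lemma~\ref{lemma9} a well-formed normal form contains no substitution of the form $b/$, $id$, or $\Uparrow s$, so its only substitution is $[\uparrow]$; this is exactly what collapses the otherwise intractable case $a=a'[s]$ to the single shape $a=a'[\uparrow]$.

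The induction on the normal form $a$ then has four cases. For $a=\un$, two applications of $VarLift$ give $\un[\Uparrow^{n}\!(\uparrow)][\Uparrow^{n}\!(b/)]\twoheadrightarrow\un$ when $n\geq 1$, and $Shift$ does it when $n=0$. For $a=a_{1}a_{2}$, two applications of $App$ distribute both substitutions and the induction hypothesis at the same depth $n$ reduces the factors to $a_{1}$ and $a_{2}$. For $a=a'[\uparrow]$, two applications of $ShiftLift$ rewrite $a'[\uparrow][\Uparrow^{n}\!(\uparrow)][\Uparrow^{n}\!(b/)]$ to $a'[\Uparrow^{n-1}\!(\uparrow)][\Uparrow^{n-1}\!(b/)][\uparrow]$, the induction hypothesis at depth $n-1$ (or $Shift$ when $n=1$) sends the inner block to $a'$, and congruence closes to $a'[\uparrow]=a$. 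For $a=\lambda a''$, two applications of $Lambda$ produce $\lambda(a''[\Uparrow^{n+1}\!(\uparrow)][\Uparrow^{n+1}\!(b/)])$, and the induction hypothesis at depth $n+1$ on the smaller normal form $a''$ closes under the binder.

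The main obstacle, and the reason for both preliminary moves, is the $\lambda$-case: no fixed-depth statement can be proved by structural induction because $Lambda$ relentlessly increases the lift-depth, which forces the quantifier over $n$. A second point I would be careful about is that well-formedness must be threaded through the whole induction rather than dropped: a named variable $\x$ is well-formed only in the empty context, so $\x[\Uparrow^{n}\!(\uparrow)]$ is not well-formed for $n\geq 1$, and it is precisely this hypothesis that discards the stuck term and leaves $\un$ as the only variable case at positive depth.
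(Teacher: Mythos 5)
Your proof is correct and is essentially the paper's intended argument: the paper only writes out the analogous Lemma~\ref{lemmaid} (generalizing over the lift-depth $n$, invoking Lemma~\ref{lemma9} and termination of $\upsilon'$ to reduce to normal forms, then doing structural induction with the cases $\un$, application, $\lambda$, and $[\uparrow]$, excluding $\x$ by the Generation lemma) and declares the other four ``similar,'' and your proposal is precisely that adaptation to $a[\Uparrow\!(\uparrow)][\Uparrow\!(b/)]$, with the $n=0$ base handled by $Shift$.
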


\begin{lemma}\label{gor2}If $a[\Uparrow\!(\uparrow)][\Uparrow\!(id)]$ is well-formed, there is a common \\ $\upsilon'$-reduct of $a[\Uparrow\!(\uparrow)][\Uparrow\!(id)]$ and $a[\Uparrow\!(\uparrow)]$.
\end{lemma}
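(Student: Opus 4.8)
The plan is to prove a depth-indexed generalization by structural induction, after first reducing to the case where $a$ is a $\upsilon'$-normal form. Since $\upsilon'$ is terminating, fix any $\upsilon'$-normal form $a_0$ of $a$. Reduction inside a subterm gives $a[\Up(\uparrow)][\Up(id)]\twoheadrightarrow a_0[\Up(\uparrow)][\Up(id)]$ and $a[\Up(\uparrow)]\twoheadrightarrow a_0[\Up(\uparrow)]$, and all these reducts are well-formed (reducts of well-formed terms are well-formed). Hence it suffices to produce a common $\upsilon'$-reduct of $a_0[\Up(\uparrow)][\Up(id)]$ and $a_0[\Up(\uparrow)]$, so from now on I assume $a$ is a well-formed $\upsilon'$-normal form. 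By Lemma~\ref{lemma9} such an $a$ contains no substitution of the form $b/$, $id$, or $\Up s$; thus the only substitution it may contain is $\uparrow$, and $a$ is generated by $a::=\x\mid\un\mid aa\mid\lambda a\mid a[\uparrow]$.

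Write $\Uparrow^{k}\! s$ for the $k$-fold lift, with $\Uparrow^{0}\! s=s$ and $\Uparrow^{k+1}\! s=\;\Uparrow\!(\Uparrow^{k}\! s)$. Because the $\lambda$-case below replaces $\Up(\uparrow)$ by one further lift, I prove the generalized claim $G(k)$ for every $k\geqslant 0$: if $a[\Uparrow^{k}\!\uparrow][\Uparrow^{k}\! id]$ is well-formed, then it and $a[\Uparrow^{k}\!\uparrow]$ have a common $\upsilon'$-reduct. The lemma is the instance $G(1)$. I argue by induction on the structure of the normal form $a$, with $k$ ranging freely, so that the induction hypothesis may be applied for any $k'$ to any structurally smaller $a$.

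If $k=0$ the term is $a[\uparrow][id]$ and \emph{ShiftId} gives $a[\uparrow][id]\to a[\uparrow]$ at once, so $a[\uparrow]$ is a common reduct and no analysis of $a$ is needed. Assume $k\geqslant 1$. A bare named variable is derivable only at level $0$ (Generation lemma), whereas well-formedness forces $a$ to sit at the target level of $\Uparrow^{k}\!\uparrow$, which is at least $k\geqslant1$; hence $a\neq\x$. If $a=\un$, then \emph{VarLift} reduces $\un[\Uparrow^{k}\!\uparrow]$ to $\un$, and $\un[\Uparrow^{k}\!\uparrow][\Uparrow^{k}\! id]\to\un[\Uparrow^{k}\! id]\to\un$, so $\un$ is a common reduct. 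If $a=bc$, then \emph{App} distributes both substitutions over the application; the induction hypothesis $G(k)$ applied to the smaller terms $b$ and $c$, followed by recombination, gives a common reduct. If $a=\lambda b$, then \emph{Lambda} gives $a[\Uparrow^{k}\!\uparrow]\to\lambda(b[\Uparrow^{k+1}\!\uparrow])$ and $a[\Uparrow^{k}\!\uparrow][\Uparrow^{k}\! id]\twoheadrightarrow\lambda(b[\Uparrow^{k+1}\!\uparrow][\Uparrow^{k+1}\! id])$; the induction hypothesis $G(k+1)$ for the smaller term $b$ yields a common reduct $R$, whence both sides reduce to $\lambda R$.

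The remaining, and main, case is the explicit substitution $a=d[\uparrow]$. Here \emph{ShiftLift} fires: $a[\Uparrow^{k}\!\uparrow]=d[\uparrow][\Uparrow^{k}\!\uparrow]\to d[\Uparrow^{k-1}\!\uparrow][\uparrow]$, while $a[\Uparrow^{k}\!\uparrow][\Uparrow^{k}\! id]$ reduces first (inner \emph{ShiftLift}) to $d[\Uparrow^{k-1}\!\uparrow][\uparrow][\Uparrow^{k}\! id]$ and then (a second \emph{ShiftLift}, floating the trailing $[\uparrow]$ outward) to $d[\Uparrow^{k-1}\!\uparrow][\Uparrow^{k-1}\! id][\uparrow]$. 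By subject reduction these are well-formed, so $d[\Uparrow^{k-1}\!\uparrow][\Uparrow^{k-1}\! id]$ is well-formed and the induction hypothesis $G(k-1)$ for the smaller term $d$ supplies a common reduct $R$ of $d[\Uparrow^{k-1}\!\uparrow][\Uparrow^{k-1}\! id]$ and $d[\Uparrow^{k-1}\!\uparrow]$; restoring the $[\uparrow]$, both $a[\Uparrow^{k}\!\uparrow][\Uparrow^{k}\! id]$ and $a[\Uparrow^{k}\!\uparrow]$ reduce to $R[\uparrow]$. (For $k=1$ the instance $G(0)$ is just one \emph{ShiftId} step.) The delicate points, and where I expect the real work, are exactly this substitution case — one must apply \emph{ShiftLift} in the right order so that the outer $[\uparrow]$ detaches and the residual substitutions take the shape demanded by $G(k-1)$ — together with the choice of the depth-indexed generalization $G(k)$, which is what makes both the $\lambda$-case ($k\mapsto k+1$) and the substitution case ($k\mapsto k-1$) recurse on a structurally smaller term and so keeps the induction well-founded.
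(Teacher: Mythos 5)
Your proof is correct and follows essentially the same strategy the paper uses for these five lemmas (it proves only Lemma~\ref{lemmaid} as a model): restrict to $\upsilon'$-normal forms via Lemma~\ref{lemma9}, generalize the statement over the lift depth ($a[\Up^k\uparrow][\Up^k id]$ versus the paper's $a[\Up^n id]$), and induct on the structure of $a$ with the same case split ($\un$, application, $\lambda$, $a'[\uparrow]$, and $a\neq\x$ by the Generation lemma). If anything, you are slightly more explicit than the paper about why one may assume $a$ is a normal form and about checking well-formedness before invoking the induction hypothesis.
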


\begin{lemma}\label{gor3}If $a[\Uparrow\!\!(\uparrow)][\Uparrow\Uparrow\!\! s]$ and $a[\Uparrow\!\! s][\Uparrow\!\!(\uparrow)]$ are well-formed, there is a common $\upsilon'$-reduct of these terms.
\end{lemma}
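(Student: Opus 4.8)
The plan is to prove a statement strictly stronger than Lemma~\ref{gor3}. For $k\geq 0$ write $\Uparrow^{k}\!s$ for the $k$-fold lift $\Uparrow\!\cdots\Uparrow\!s$ (so $\Uparrow^{0}\!s=s$), and let $Q_k(a)$ be the assertion: \emph{if $a[\Uparrow^{k}\!\uparrow][\Uparrow^{k+1}\!s]$ and $a[\Uparrow^{k}\!s][\Uparrow^{k}\!\uparrow]$ are well-formed, then they have a common $\upsilon'$-reduct.} Lemma~\ref{gor3} is the instance $Q_1$. The generalization is essential rather than cosmetic: in the abstraction case below we replace $a=\lambda c$ by its proper subterm $c$ but \emph{raise} the lift count from $k$ to $k+1$, so no induction on $k$ by itself can close the argument. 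Instead I would fix the term $a$ and prove $Q_k(a)$ \emph{for all $k$ simultaneously}, by induction on the structure of $a$.

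First I would reduce to the case that $a$ is a $\upsilon'$-normal form. Since $\upsilon'$ is terminating and rewriting may be carried out in any subterm, $a$ reduces to a normal form $a_0$, and then $a[\Uparrow^{k}\!\uparrow][\Uparrow^{k+1}\!s]$ and $a[\Uparrow^{k}\!s][\Uparrow^{k}\!\uparrow]$ reduce to the corresponding terms over $a_0$; a common reduct of the latter pair is a common reduct of the former. By Lemma~\ref{lemma9} and the Generation lemma, a well-formed normal form is one of $\un$, $\x$, an application $c_1c_2$, an abstraction $\lambda c$, or a shift $c[\uparrow]$. I would then run the structural induction. The level $k=0$ is a blanket base case needing no structure: \emph{ShiftLift} rewrites $a[\uparrow][\Uparrow\!s]$ directly to $a[s][\uparrow]$, so $Q_0(a)$ holds for every $a$. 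The atomic cases at $k\geq 1$ are equally quick: if $a=\un$ both terms rewrite to $\un$ by two uses of \emph{VarLift}, and if $a=\x$ the hypothesis of $Q_k$ is vacuous, since a named variable is typable only at index $0$, whereas a substitution $\Uparrow^{k}\!(\cdots)$ with $k\geq 1$ can only be applied to a term at a positive index.

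For $a=c_1c_2$ and $k\geq 1$, two applications of \emph{App} distribute each outer substitution over the application, so the two terms reduce to $(c_1[\Uparrow^{k}\!\uparrow][\Uparrow^{k+1}\!s])(c_2[\Uparrow^{k}\!\uparrow][\Uparrow^{k+1}\!s])$ and $(c_1[\Uparrow^{k}\!s][\Uparrow^{k}\!\uparrow])(c_2[\Uparrow^{k}\!s][\Uparrow^{k}\!\uparrow])$; the induction hypotheses $Q_k(c_1)$ and $Q_k(c_2)$ supply common reducts $d_1,d_2$, and both sides reach $d_1d_2$. For $a=c[\uparrow]$ and $k\geq 1$, I would commute the two outer substitutions past the inner $\uparrow$ by \emph{ShiftLift}: on the first term two \emph{ShiftLift} steps give $c[\Uparrow^{k-1}\!\uparrow][\Uparrow^{k}\!s][\uparrow]$, and on the second they give $c[\Uparrow^{k-1}\!s][\Uparrow^{k-1}\!\uparrow][\uparrow]$. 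The induction hypothesis $Q_{k-1}(c)$ yields a common reduct $e$ of the two bracketed prefixes, whence both terms reach $e[\uparrow]$.

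The abstraction case $a=\lambda c$ is the crux. Pushing each outer substitution through the binder by \emph{Lambda} turns $a[\Uparrow^{k}\!\uparrow][\Uparrow^{k+1}\!s]$ into $\lambda\bigl(c[\Uparrow^{k+1}\!\uparrow][\Uparrow^{k+2}\!s]\bigr)$ and $a[\Uparrow^{k}\!s][\Uparrow^{k}\!\uparrow]$ into $\lambda\bigl(c[\Uparrow^{k+1}\!s][\Uparrow^{k+1}\!\uparrow]\bigr)$, so what remains is exactly $Q_{k+1}(c)$ under a $\lambda$. Because the structural induction hypothesis for the smaller subterm $c$ is the full statement ``$Q_{k'}(c)$ for all $k'$'', it already delivers $Q_{k+1}(c)$, and we are done. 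The one thing to police throughout is the well-formedness side condition in each invocation of the induction hypothesis; I would discharge it uniformly by subject reduction for $\upsilon'$ together with the corollary that subterms of well-formed terms are well-formed, so that every intermediate term produced above is well-formed and the hypotheses of the smaller instances of $Q$ are genuinely met (or vacuous, as in the case $c=\x$).
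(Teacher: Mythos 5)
Your proposal is correct and follows essentially the same route the paper intends: the paper proves only Lemma~\ref{lemmaid} explicitly (generalizing $a[id]$ to $a[\Uparrow^{n} id]$ and inducting on the structure of $a$ after using Lemma~\ref{lemma9} to restrict to normal forms), and declares the proofs of Lemmas~\ref{gor1}--\ref{lemma10} to be similar. Your strengthening to $Q_k(a)$ for all $k$ simultaneously, with the blanket $ShiftLift$ base at $k=0$, the vacuous named-variable case, and the lift-count increase in the abstraction case, is exactly the analogous argument spelled out.
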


\begin{lemma}\label{lemma10}If $a[b/][s]$ and $a[\Up s][b[s]/]$ are well-formed, there is a common $\upsilon'$-reduct of these terms.
\end{lemma}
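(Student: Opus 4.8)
The plan is to use the termination of $\upsilon'$ together with Lemma~\ref{lemma9}. Since the rewrite relation is closed under context (Figure~\ref{closure}), if $a\to^{*}\hat a$ then $a[b/][s]\to^{*}\hat a[b/][s]$ and $a[\Up s][b[s]/]\to^{*}\hat a[\Up s][b[s]/]$, so any common $\upsilon'$-reduct of the two $\hat a$-terms is automatically a common reduct of the original pair. Choosing $\hat a$ to be a $\upsilon'$-normal form of $a$ (which exists by termination and is well-formed), Lemma~\ref{lemma9} tells me that the only substitution occurring in $\hat a$ is $\uparrow$; hence I may assume from the start that $a$ is built solely from $\x$, $\un$, application, $\lambda$ and $[\uparrow]$.

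Pushing a substitution under a $\lambda$ inserts a lift, so descending through binders accumulates a stack of them; I therefore strengthen the statement to: \emph{for every $k\geq 0$, the terms $a[\Uparrow^{k}(b/)][\Uparrow^{k}s]$ and $a[\Uparrow^{k+1}s][\Uparrow^{k}(b[s]/)]$ have a common $\upsilon'$-reduct whenever both are well-formed}, where $\Uparrow^{k}$ denotes $k$ iterated lifts. The lemma is the case $k=0$, and I would prove the strengthened form by induction on the structure of the (now $\uparrow$-only) term $a$, keeping $k$ universally quantified so that the inductive calls may use other values of $k$. Well-formedness of every intermediate term is guaranteed by subject reduction and by the fact that subterms of well-formed terms are well-formed.

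The base and application cases are routine. For $a=\un$: when $k=0$ both sides rewrite to $b[s]$ (left by $Var$, right by $VarLift$ then $Var$), and when $k\geq 1$ both sides rewrite to $\un$ by repeated $VarLift$. For $a=a_{1}a_{2}$, two applications of $App$ on each side distribute the substitutions, and the induction hypothesis applied to $a_{1}$ and $a_{2}$ at the same $k$ yields componentwise common reducts, which recombine. The case $a=\x$ cannot arise: by the Generation lemma a named variable is typable only at level $0$, whereas the surrounding substitutions force $a$ to sit at level $\geq 1$.

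The two cases in which the lift-level changes are where the real work lies, and I expect them to be the main obstacle. If $a=\lambda a'$, two uses of $Lambda$ on each side produce $\lambda\bigl(a'[\Uparrow^{k+1}(b/)][\Uparrow^{k+1}s]\bigr)$ and $\lambda\bigl(a'[\Uparrow^{k+2}s][\Uparrow^{k+1}(b[s]/)]\bigr)$, whose bodies are exactly the strengthened statement for $a'$ at level $k+1$. If $a=a'[\uparrow]$, then for $k=0$ the left side reduces by $Shift$ to $a'[s]$ and the right side by $ShiftLift$ followed by $Shift$ also to $a'[s]$; for $k\geq 1$, two $ShiftLift$ steps on each side pull the $\uparrow$ to the outside, giving $a'[\Uparrow^{k-1}(b/)][\Uparrow^{k-1}s][\uparrow]$ on the left and $a'[\Uparrow^{k}s][\Uparrow^{k-1}(b[s]/)][\uparrow]$ on the right, whose parts under the trailing $[\uparrow]$ are the strengthened statement for $a'$ at level $k-1$; the induction hypothesis closes them and the common reduct is wrapped in $[\uparrow]$ by context closure. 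The delicate point throughout is the bookkeeping of the $ShiftLift$ rearrangements and the matching of lift-indices on the two sides; this is precisely the pattern by which Lemmas~\ref{gor1}--\ref{gor3} dispose of their shift-interaction critical pairs, which is why all of them admit essentially the same proof.
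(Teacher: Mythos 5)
Your proof is correct and follows essentially the same route as the paper's: the paper proves only Lemma~\ref{lemmaid} in detail (strengthening with iterated lifts $\Up^{n}$, invoking Lemma~\ref{lemma9} to restrict $a$ to $\uparrow$-only form, then structural induction with the $\un$, application, $\lambda$, and $a'[\uparrow]$ cases and the observation that $a$ cannot be $\x$) and states that Lemmas~\ref{gor1}--\ref{lemma10} have similar proofs, which is exactly the pattern you have carried out. Your explicit justification for assuming $a$ is in $\upsilon'$-normal form (normalize first via termination, then use closure under contexts) makes precise a step the paper leaves implicit.
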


\begin{lemma}\label{lemmaid}If $a[id]$ is well-formed, there is  a common $\upsilon'$-reduct of $a[id]$ and $a$.
\end{lemma}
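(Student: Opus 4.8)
The plan is to prove a strengthening of the statement, because a bare structural induction on $a$ fails in two places. First, in the abstraction case the rule $Lambda$ turns $(\lambda b)[id]$ into $\lambda(b[\Uparrow\! id])$, and $\Uparrow\! id$ is not syntactically $id$, so the statement as given is not strong enough to feed back into itself. Second, for a general term $a=b[s]$ there is no rule that simplifies $(b[s])[id]$ until the inner $b[s]$ is resolved. To cure the first problem I would prove, for every $k\in\mathbb{N}$, the generalized claim: if $a[\Uparrow^{k}id]$ is well-formed then $a[\Uparrow^{k}id]$ and $a$ have a common $\upsilon'$-reduct, where $\Uparrow^{k}id$ denotes $id$ under $k$ iterated $\Uparrow$'s (so $k=0$ is the lemma). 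In the induction, the abstraction case will raise $k$ by one while shrinking the term, and the shift case (via $ShiftLift$) will lower $k$ by one while shrinking the term, so a structural induction uniform in $k$ goes through.

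To cure the second problem I would first reduce $a$ to a $\upsilon'$-normal form $\hat a$, which exists by termination of $\upsilon'$. By the contextual (compatible) closure of $\to$ we then have $a[\Uparrow^{k}id]\twoheadrightarrow\hat a[\Uparrow^{k}id]$, and both $\hat a$ and $\hat a[\Uparrow^{k}id]$ stay well-formed by subject reduction and by the fact that reducts of well-formed terms are well-formed. Since any common reduct of $\hat a[\Uparrow^{k}id]$ and $\hat a$ is automatically a common reduct of $a[\Uparrow^{k}id]$ and $a$, it suffices to treat normal-form $a$. The payoff is Lemma~\ref{lemma9}: a well-formed normal form contains no $b/$, $id$, or $\Uparrow s$, so the only substitution occurring inside $\hat a$ is $\uparrow$. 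This eliminates the obstructive shape $b[s]$ for arbitrary $s$ and leaves only the shapes $\x$, $\un$, $bc$, $\lambda b$, and $b[\uparrow]$.

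I would then run the structural induction on the normal term (the measure strictly decreasing in every recursive call, uniformly in $k$). The case $\x$ is vacuous: by the Generation lemma $\x$ forces context $0$, whereas $\Uparrow^{k}id$ always outputs a positive context, so $\x[\Uparrow^{k}id]$ is never well-formed. For $\un$ the term rewrites in one step ($VarId$ when $k=0$, $VarLift$ when $k\geq1$) to $\un$. For $bc$ apply $App$ to get $(b[\Uparrow^{k}id])(c[\Uparrow^{k}id])$ and combine the hypotheses for $b$ and $c$ at the same $k$. For $\lambda b$ apply $Lambda$ to get $\lambda(b[\Uparrow^{k+1}id])$ and invoke the hypothesis for $b$ at $k+1$. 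The crucial case is $b[\uparrow]$: if $k=0$ then $ShiftId$ gives $b[\uparrow]$ directly, and if $k\geq1$ then $ShiftLift$ gives $b[\Uparrow^{k-1}id][\uparrow]$, whereupon the hypothesis for $b$ at $k-1$ yields a common reduct $b'$ of $b[\Uparrow^{k-1}id]$ and $b$, so $b'[\uparrow]$ is common to $b[\uparrow][\Uparrow^{k}id]$ and $b[\uparrow]$. In each case the relevant subterm is well-formed by the Generation lemma together with the fact that subterms of well-formed terms are well-formed, so the hypotheses genuinely apply.

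The main obstacle is pinning down the correct generalization ($\Uparrow^{k}id$ rather than $id$) and keeping well-formedness under control through each rewrite, so that the Generation lemma, subject reduction, and the induction hypotheses apply to the subterms produced. Once the statement is generalized and $a$ is put in normal form, every case is a one- or two-step rewrite followed by an appeal to the hypothesis; notably, no confluence of $\upsilon'$ is required, only termination (for the normal form), contextual closure of $\twoheadrightarrow$, and the explicitly exhibited common reducts.
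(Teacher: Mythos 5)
Your proposal is correct and follows essentially the same route as the paper: the same strengthening to $a[\Uparrow^{k}id]$, the same reduction to $\upsilon'$-normal form via Lemma~\ref{lemma9} to exclude the shapes $b/$, $id$, $\Uparrow s$, and the same case analysis ($\x$ impossible, $\un$ by $VarId$/$VarLift$, application by $App$, abstraction by $Lambda$ at $k+1$, and $b[\uparrow]$ by $ShiftId$/$ShiftLift$ at $k-1$). You are somewhat more explicit than the paper about why one may assume $a$ is in normal form (termination plus the transfer of common reducts along $a\twoheadrightarrow\hat a$), which the paper leaves implicit.
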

\begin{proof}We prove the following stronger result: if $a[\Up\!^n id]\,\, (n\geqslant 0)$ is well-formed, there is  a common $\upsilon'$-reduct of $a[\Up\!^n id]$ and $a$. The proof is by induction over the structure of $a$. By Lemma~\ref{lemma9}, we can assume that $a$ does not contain $[\,/\,]$, $id$, and $\Up$\,.\\[5pt]
Case 1. $a$ has the form $a_1a_2$.\\
$(a_1a_2)[\Up^n id]\overset{App}{\to}(a_1[\Up^n id])(a_2[\Up^n id])$\\
Then we use the induction hypothesis.\\[5pt]
Case 2. $a$ has the form $\lambda a'$.\\
$(\lambda a')[\Up^n id]\overset{Lambda}{\to}\lambda (a'[\Up^{n+1} id])$\\
Then we use the induction hypothesis.\\[5pt]
Case 3. $a$ is $\un$.\\
If $n=0$, then\\
$\un[id]\overset{VarId}{\to}\un$\\
If $n=m+1$, then\\
$\un[\Up^{m+1}id]\overset{VarLift}{\to}\un$\\[5pt]
Case 4. $a$ has the form $a'[\uparrow]$.\\
If $n=0$, then\\
$ a'[\uparrow][id]\overset{ShiftId}{\to}a'[\uparrow]$\\
If $n=m+1$, then\\
$a'[\uparrow][\Up^{m+1} id]\overset{ShiftLift}{\to}a'[\Up^m id][\uparrow]$\\
Then we use the induction hypothesis.\\[5pt]
Note that $a$ can not be $\x$ by Generation lemma.\\
\end{proof}
\begin{theorem}The rewriting system $\upsilon'$ is locally confluent (hence, confluent) on the set of well-formed
terms.
\end{theorem}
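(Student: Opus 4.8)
The plan is to invoke Newman's lemma. Since $\upsilon'$ is terminating (the termination proposition for $\upsilon'$), confluence on the well-formed terms follows once local confluence there is established, which is exactly the sense in which the statement reads ``locally confluent (hence, confluent)''. By the subject-reduction proposition and its corollary (reducts of well-formed terms are well-formed), every term occurring in any reduction issuing from a well-formed term is again well-formed, so I may assume well-formedness of all terms I manipulate. It then suffices to exhibit a common $\upsilon'$-reduct for each critical pair and appeal to the Critical Pair Lemma.

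First I would enumerate the overlaps of the left-hand sides. These are $(ab)[s]$ ($App$), $(\lambda a)[s]$ ($Lambda$), the three variable patterns $\un[b/]$, $\un[id]$, $\un[\Up s]$, and the three double-substitution patterns $a[\uparrow][b/]$ ($Shift$), $a[\uparrow][id]$ ($ShiftId$), $a[\uparrow][\Up s]$ ($ShiftLift$). Two rules genuinely overlap only when a non-variable subterm of one unifies with the whole of the other. The outer substitution constructors $b/$, $id$, $\Up s$ and $\uparrow$ are pairwise distinct, and $\un$ unifies neither with an application, nor an abstraction, nor a substitution application; redexes occurring strictly inside the metavariables $s$ and $b$ commute trivially and are not critical. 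Hence the only genuine overlaps sit at the inner subterm $a[\uparrow]$ of $Shift$, $ShiftId$, $ShiftLift$, and arise when $a$ is instantiated to an application or an abstraction, so that $a[\uparrow]$ becomes an $App$- or $Lambda$-redex. This gives exactly six critical pairs: each of $Shift$, $ShiftId$, $ShiftLift$ superposed with each of $App$ and $Lambda$.

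The three $App$-superpositions $(a_1a_2)[\uparrow][b/]$, $(a_1a_2)[\uparrow][id]$, $(a_1a_2)[\uparrow][\Up s]$ join directly, with no auxiliary lemma: firing $App$ first distributes the outer substitution over the two factors, after which the relevant shift rule acts on each factor and reproduces precisely the term obtained by firing the shift rule at the root and then $App$ (for instance both routes from $(a_1a_2)[\uparrow][\Up s]$ reach $(a_1[s][\uparrow])(a_2[s][\uparrow])$). The three $Lambda$-superpositions are the delicate ones: firing $Lambda$ turns the inner $(\lambda a)[\uparrow]$ into $\lambda(a[\Up(\uparrow)])$, so the two reducts carry a $\Up$ under the binder rather than differing by one step. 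For $Shift$ they are $\lambda(a[\Up(\uparrow)][\Up(b/)])$ versus $\lambda a$; for $ShiftId$, $\lambda(a[\Up(\uparrow)][\Up id])$ versus $\lambda(a[\Up(\uparrow)])$; for $ShiftLift$, $\lambda(a[\Up(\uparrow)][\Up\Up s])$ versus $\lambda(a[\Up s][\Up(\uparrow)])$. These are joined, under the $\lambda$ via the congruence closure, precisely by Lemmas~\ref{gor1}, \ref{gor2}, and \ref{gor3}, whose well-formedness hypotheses hold because the superposition term, and hence all its reducts, are well-formed.

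The main obstacle is thus isolated in the $Lambda$-superpositions: because $Lambda$ commutes a substitution past a binder by wrapping it in $\Up$, joinability is not a one-step matter and is exactly what Lemmas~\ref{gor1}, \ref{gor2}, \ref{gor3} are designed to supply; those lemmas rest in turn on Lemma~\ref{lemma9} and the structural induction exemplified in the proof of Lemma~\ref{lemmaid}. Once all six critical pairs are joined on well-formed terms, the Critical Pair Lemma yields local confluence, and Newman's lemma combined with termination of $\upsilon'$ upgrades it to confluence on the set of well-formed terms.
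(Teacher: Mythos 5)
Your proposal is correct and follows essentially the same route as the paper: the paper's proof is exactly ``straightforward checking'' of the critical pairs, singling out the three $Lambda$-superpositions $(\lambda a)[\uparrow][b/]$, $(\lambda a)[\uparrow][id]$, $(\lambda a)[\uparrow][\Up s]$ as the only nontrivial ones and closing them with Lemmas~\ref{gor1}, \ref{gor2}, \ref{gor3}, with termination supplying confluence via Newman's lemma. You merely make explicit what the paper leaves implicit (the enumeration of overlaps, the directly joinable $App$-superpositions, and the role of well-formedness in licensing the three lemmas).
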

\begin{proof}Straightforward checking, using Lemma~\ref{gor1}, Lemma~\ref{gor2}, and Lemma~\ref{gor3} in the following cases
$$\begin{tikzcd}
 (\lambda a)[\uparrow][b/]  \arrow{rr}{Shift} \arrow{dd}[swap]{Lambda} && \lambda a \\
  & & \\
   (\lambda(a[\Up(\uparrow)]))[b/]  &&
 \end{tikzcd}$$

$$\begin{tikzcd}
 (\lambda a)[\uparrow][id]  \arrow{rr}{ShiftId} \arrow{dd}[swap]{Lambda} && (\lambda a)[\uparrow] \\
  & & \\
   (\lambda(a[\Up(\uparrow)]))[id]  &&
 \end{tikzcd}$$

$$\begin{tikzcd}
 (\lambda a)[\uparrow][\Up s]  \arrow{rr}{ShiftLift} \arrow{dd}[swap]{Lambda} && (\lambda a)[s][\uparrow] \\
  & & \\
   (\lambda(a[\Up(\uparrow)]))[\Up s]  &&
 \end{tikzcd}$$

\end{proof}
\begin{lemma}\label{lemma11}Let $R$ and $S$ be two relations defined on the same set $X$, $R$ is confluent and strongly normalizing, and $S$ verifying the diamond property:
$$\begin{tikzcd}
 f  \arrow{rr}{S} \arrow{dd}[swap]{S} && g \arrow[dashed]{dd}{S}\\
  & & \\
   h \arrow[dashed]{rr}{S} && k
 \end{tikzcd}$$
 Suppose moreover that the following diagram holds:
 $$\begin{tikzcd}
 f  \arrow{rr}{S} \arrow{dd}[swap]{R} && g \arrow[dashed]{dd}{R^*}\\
  & & \\
   h \arrow[dashed]{rr}{R^*SR^*} && k
 \end{tikzcd}$$
 Here $R^*$ is the reflexive and transitive closure of $R$.
 Then the relation $R^*SR^*$ is confluent.
 \end{lemma}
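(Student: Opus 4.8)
The plan is to set $T = R^{*}SR^{*}$ and to prove that $T$ enjoys the \emph{diamond property}; since the diamond property is inherited by the reflexive--transitive closure, this yields confluence of $T$ at once. Throughout I will use freely that $R^{*}T = TR^{*} = T$ and that $S\subseteq T$, together with the confluence of $R$ (to close pure $R$-peaks) and the well-foundedness of $R$ coming from strong normalization (to run Noetherian inductions).

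The technical heart is a commutation lemma that upgrades the hypothesis of the statement from a single $R$-step to $R^{*}$: \emph{(Main)} if $g \xleftarrow{S} f \xrightarrow{R^{*}} h$, then there is $k$ with $g \xrightarrow{R^{*}} k$ and $h \xrightarrow{T} k$. I would prove (Main) by Noetherian induction on $f$ with respect to $R$. If $f=h$ the claim is trivial (take $k=g$, using $S\subseteq T$). Otherwise write $f \xrightarrow{R} f' \xrightarrow{R^{*}} h$ and apply the hypothesis of the lemma to $g \xleftarrow{S} f \xrightarrow{R} f'$, obtaining $e$ with $g \xrightarrow{R^{*}} e$ and $f' \xrightarrow{T} e$; decompose the latter as $f' \xrightarrow{R^{*}} p \xrightarrow{S} q \xrightarrow{R^{*}} e$. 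Confluence of $R$ applied to $f' \xrightarrow{R^{*}} p$ and $f' \xrightarrow{R^{*}} h$ produces a common reduct $c'$, and since $f \xrightarrow{R^{+}} p$ the point $p$ is strictly below $f$, so the induction hypothesis applies to $q \xleftarrow{S} p \xrightarrow{R^{*}} c'$ and closes that diagram; a final use of confluence of $R$, on the two $R^{*}$-reducts of $q$, merges everything into one witness. This is the step I expect to be the main obstacle: the induction measure must be the source $f$ ordered by $R$, and one must check that the auxiliary point $p$ to which the hypothesis is applied is genuinely $R$-smaller than $f$, which relies on strong normalization ruling out $R$-cycles.

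From (Main) the same bookkeeping gives the variant \emph{(SemiDiamond)}: if $g \xleftarrow{T} f \xrightarrow{R^{*}} h$ then there is $k$ with $g \xrightarrow{R^{*}} k$ and $h \xrightarrow{T} k$; one commutes the leading $R^{*}$ of the $T$-step past $h$ using confluence of $R$, applies (Main) to the isolated $S$-step, and reassembles.

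Finally I would establish the diamond property of $T$, namely $u_{1} \xleftarrow{T} p \xrightarrow{T} u_{2} \Rightarrow \exists w,\ u_{1} \xrightarrow{T} w \xleftarrow{T} u_{2}$, by Noetherian induction on $p$. Writing the two $T$-steps as $p \xrightarrow{R^{*}} r_{i} \xrightarrow{S} s_{i} \xrightarrow{R^{*}} u_{i}$, I distinguish two cases. If both $R^{*}$-prefixes are empty, then $s_{1} \xleftarrow{S} p \xrightarrow{S} s_{2}$ and the diamond property of $S$ gives a common $S$-reduct $s$; commuting the residual $R^{*}$-tails $s_{i}\xrightarrow{R^{*}}u_{i}$ past $s$ with (Main) and merging with confluence of $R$ closes the peak in a single $T$-step on each side. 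If at least one prefix is non-empty, confluence of $R$ yields a common reduct $r$ of $r_{1},r_{2}$ that is strictly $R$-below $p$ (again by absence of $R$-cycles); applying (Main) to each $s_{i} \xleftarrow{S} r_{i} \xrightarrow{R^{*}} r$ turns the peak into $t_{1} \xleftarrow{T} r \xrightarrow{T} t_{2}$, which the induction hypothesis closes, and then (SemiDiamond) together with confluence of $R$ transports the closure back up to $u_{1},u_{2}$. With the diamond property of $T$ in hand, confluence of $R^{*}SR^{*}$ follows immediately.
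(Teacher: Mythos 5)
Your argument is correct. Note, however, that the paper does not prove this lemma at all: its entire ``proof'' is the citation ``See [Curien--Hardin--L\'evy], Lemma 4.5'' (this is the Yokouchi--Hikita lemma). What you have written is a correct self-contained reconstruction along the standard lines of that reference: the key step (Main), lifting the one-step hypothesis $g \xleftarrow{S} f \xrightarrow{R} h$ to $g \xleftarrow{S} f \xrightarrow{R^{*}} h$ by Noetherian induction on $f$ ordered by $R^{+}$, is exactly where strong normalization is needed (both for well-foundedness and to guarantee $f \xrightarrow{R^{+}} p$ implies $p \neq f$), and your subsequent derivation of the diamond property for $T = R^{*}SR^{*}$ --- splitting on whether the leading $R^{*}$-prefixes are empty, using the diamond property of $S$ in the first case and the induction hypothesis at the strictly smaller common $R$-reduct $r$ in the second, then transporting back up with (SemiDiamond) and confluence of $R$ --- closes every peak with a single $T$-step on each side, so confluence of $T$ follows from the usual tiling argument. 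All the absorption identities you invoke ($R^{*}TR^{*}=T$, $S\subseteq T$) are immediate from the definition of $T$, and each application of the induction hypothesis is at a point reached by at least one $R$-step, so the induction is sound.
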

 \begin{proof} See~\cite{Hardin} (Lemma 4.5).\\
 \end{proof}
 We shall apply the lemma with the following data. We take the set of well-formed terms as $X$, $\upsilon'$ as $R$, and $Beta\!\parallel$ as $S$, where $Beta\!\parallel$ is the obvious parallelization of $Beta$ defined by:\\[10pt]
 $\begin{array}{ccc}
  a\to a && s\to s\\[15pt]
\ruletwo{a_1\to a_2}{b_1\to b_2}{(\lambda a_1)b_1\to a_2[b_2/]} &&  \ruleone{ a_1\to a_2}{ \lambda
a_1\to\lambda a_2}\\[20pt]
 \ruletwo{ a_1\to a_2}{ b_1\to b_2}{ a_1b_1\to  a_2b_2} && \ruletwo{  a_1\to a_2}{  s_1\to s_2}{a_1[s_1]\to a_2[s_2]}\\[20pt]
 \ruleone{ b_1\to b_2}{ b_1/\to b_2/} && \ruleone{ s_1\to s_2}{ \Up s_1\to\,\,\Up s_2}
 \end{array}$\\[20pt]

  \begin{proposition}$\upsilon'$ and $Beta\!\parallel$ satisfy the conditions of Lemma~\ref{lemma11}.
 \end{proposition}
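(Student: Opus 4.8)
The plan is to verify, in turn, the three hypotheses of Lemma~\ref{lemma11} with $R=\upsilon'$ and $S=Beta\!\parallel$, taking $X$ to be the set of well-formed terms. \textbf{Condition 1} (that $R$ is confluent and strongly normalizing) is immediate from the results already established: $\upsilon'$ is terminating by the Proposition proved via the lexicographic weight $(\|\cdot\|_1,\|\cdot\|_2)$ of Figure~\ref{we}, and $\upsilon'$ is locally confluent on well-formed terms by the Theorem proved with Lemmas~\ref{gor1}, \ref{gor2}, and~\ref{gor3}; Newman's lemma then gives confluence. So for this part I would simply cite these.

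For \textbf{Condition 2} (the diamond property of $Beta\!\parallel$), I would first observe that because substitution is \emph{explicit}, a step $(\lambda a)b\to a[b/]$ merely records the substitution and never propagates or duplicates a redex. Hence contracting one $Beta$-redex creates no new ones, and the substitution lemma of the pure $\lambda$-calculus is not needed. I would then use the Tait--Martin-L\"of method: define the complete development $a^{*}$ (and $s^{*}$ on substitutions) by simultaneously contracting \emph{all} the $Beta$-redexes of $a$, and prove by induction on the structure of $a$ the triangle lemma
$$ a\to_{S} g \implies g\to_{S} a^{*}. $$
The only redex-combining case is $a=(\lambda a_1)b_1$, where the frozen substitution $a_2[b_2/]$ leaves no residual overlap, so the induction closes directly. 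Given $f\to_{S} g$ and $f\to_{S} h$ we then get $g\to_{S} f^{*}$ and $h\to_{S} f^{*}$, so $k=f^{*}$ completes the diamond.

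For \textbf{Condition 3} (the commutation square with lower-path shape $R^{*}SR^{*}$), I would take a full step $f\to_{S} g$ together with a single $\upsilon'$-step $f\to_{R} h$ and produce $k$ with $g\to_{R^{*}}k$ and $h\to_{R^{*}SR^{*}}k$, arguing by induction on the structure of $f$ and splitting on the relative positions of the $\upsilon'$-redex and the contracted $Beta$-redexes. When the two are disjoint, or nested without interaction, the square closes trivially. The genuine overlaps are exactly those the auxiliary lemmas were prepared for: for instance when $f=((\lambda a)b)[s]$, the $S$-path yields $(a[b/])[s]$ while the $App$-step yields $((\lambda a)[s])(b[s])$, which further $\upsilon'$-reduces (by $Lambda$ and $Beta$) to $a[\Up s][b[s]/]$, and the two are reconciled by Lemma~\ref{lemma10}. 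The analogous overlaps of a $Beta$-created $b/$ (or $id$, or $\Up s$) against $\uparrow$ beneath a lift are closed by Lemmas~\ref{gor1}, \ref{gor2}, \ref{gor3}, and~\ref{lemmaid}, while Lemma~\ref{lemma9} guarantees that the $\upsilon'$-normal forms encountered en route are substitution-free.

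\textbf{Main obstacle.} I expect Condition 3 to be the crux. Condition 1 is quoted and Condition 2 is routine once substitution is explicit, but commuting a full $Beta\!\parallel$-step with one $\upsilon'$-step requires a careful enumeration of how the two redex-sets can interleave and a check that each resulting critical pair is precisely one of the shapes handled by Lemmas~\ref{gor1}--\ref{lemma10} and~\ref{lemmaid}. The delicate point is the bookkeeping of well-formedness throughout the induction, so that the hypotheses of those lemmas genuinely apply at each node of the diagram.
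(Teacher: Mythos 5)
Your proposal matches the paper's proof in all essentials: condition 1 by citing termination and local confluence of $\upsilon'$, the diamond property of $Beta\!\parallel$ from the orthogonality of $Beta$ (the paper disposes of this in one line, where you spell out the Tait--Martin-L\"of triangle), and the commutation square by a case analysis whose only nontrivial critical pair is the $App$/$Beta$ overlap at $((\lambda a)b)[s]$, closed by Lemma~\ref{lemma10} exactly as you describe. One small slip worth noting: the step $(\lambda(a[\Up s]))(b[s])\to a[\Up s][b[s]/]$ is the $S$-step of the lower $R^*SR^*$ path rather than a $\upsilon'$-reduction, and Lemmas~\ref{gor1}--\ref{gor3} and~\ref{lemmaid} belong to the local confluence of $\upsilon'$ and the later $\alpha$-case respectively, not to this commutation check.
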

 \begin{proof}The strong confluence of $Beta\!\parallel$ is obvious since $Beta$ by itself is a left linear system with no critical pairs.
 Now we check the second diagram.\\
 Case $App$. $f\equiv (ab)[s]\overset{\upsilon'}\to (a[s])(b[s])\equiv h$. Then there are two cases:\\
 1. $f\equiv (ab)[s]\overset{Beta\parallel}\to (a'b')[s']\equiv g$ with $a\overset{Beta\parallel}\to a'$, $b\overset{Beta\parallel}\to b'$, and $s\overset{Beta\parallel}\to s'$. Then by definition of $Beta\!\parallel$ we have $(a[s])(b[s])\overset{Beta\parallel}\to(a'[s'])(b'[s'])\equiv k$. But also $g\overset{\upsilon'}\to k$.\\
 2. $f\equiv ((\lambda a)b)[s]\overset{Beta\parallel}\to a'[b'][s']\equiv g$ with $a\overset{Beta\parallel}\to a'$, $b\overset{Beta\parallel}\to b'$, and $s\overset{Beta\parallel}\to s'$. Then $h \equiv ((\lambda a)[s])(b[s])$. We must then take $h \overset{\upsilon'}\to (\lambda (a[\Up s]))(b[s])\equiv h_1$. Then $h_1\overset{Beta\parallel}\to a'[\Up s'][b'[s'/]]\equiv h_2$. Using Lemma~\ref{lemma10}, we check that $h_2\overset{\upsilon'^*}\to k$ and $g\overset{\upsilon'^*}\to k$ for some $k$. This subcase is the only interesting one.\\
 The cases of all other rewrite rules are simple and similar to subcase 1.\\
 \end{proof}
 \begin{theorem}The rewriting system $\lambda\upsilon'$ is confluent on the set of well-formed terms.
 \end{theorem}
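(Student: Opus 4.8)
The plan is to deduce the confluence of $\lambda\upsilon'$ from Lemma~\ref{lemma11} by showing that $\lambda\upsilon'$ and the relation $R^{*}SR^{*}$ produced by that lemma have the same reflexive–transitive closure. I take $X$ to be the set of well-formed terms, $R=\upsilon'$, and $S=Beta\!\parallel$. First I would invoke the Proposition immediately preceding the theorem, which verifies precisely the hypotheses of Lemma~\ref{lemma11} for this $R$ and $S$: $R$ is confluent and strongly normalizing (the earlier termination and local-confluence results), $S$ satisfies the diamond property, and the mixed commutation square holds. Lemma~\ref{lemma11} then gives that $R^{*}SR^{*}$ is confluent on $X$. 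That $X$ is closed under all of these relations follows from the corollary that reducts of well-formed terms are well-formed, so the whole argument stays inside the well-formed fragment.

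The heart of the proof is the purely relational identity $(R^{*}SR^{*})^{*}=(\lambda\upsilon')^{*}$. For the inclusion $R^{*}SR^{*}\subseteq(\lambda\upsilon')^{*}$, note that $R=\upsilon'\subseteq\lambda\upsilon'$, so $R^{*}\subseteq(\lambda\upsilon')^{*}$, and that every $Beta\!\parallel$ step decomposes into finitely many single $Beta$ steps, so $S\subseteq Beta^{*}\subseteq(\lambda\upsilon')^{*}$; composing three relations contained in the transitive, reflexive relation $(\lambda\upsilon')^{*}$ yields $R^{*}SR^{*}\subseteq(\lambda\upsilon')^{*}$, hence $(R^{*}SR^{*})^{*}\subseteq(\lambda\upsilon')^{*}$.

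For the reverse inclusion the key observation is that $Beta\!\parallel$ is reflexive, since its defining clauses include $a\to a$ and $s\to s$. Consequently $R^{*}=R^{*}\,\mathrm{id}\,R^{*}\subseteq R^{*}SR^{*}$, and also $Beta\subseteq Beta\!\parallel=S=\mathrm{id}\,S\,\mathrm{id}\subseteq R^{*}SR^{*}$. Therefore $\lambda\upsilon'=R\cup Beta\subseteq R^{*}SR^{*}$, which gives $(\lambda\upsilon')^{*}\subseteq(R^{*}SR^{*})^{*}$. Combining the two inclusions establishes the claimed equality of closures.

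Finally, since confluence of a relation is a property of its reflexive–transitive closure alone, and we have shown $(\lambda\upsilon')^{*}=(R^{*}SR^{*})^{*}$ while $R^{*}SR^{*}$ is confluent, I conclude that $\lambda\upsilon'$ is confluent on $X$. I do not expect a genuine combinatorial obstacle here: all the diagrammatic commutation work was already discharged in Lemmas~\ref{gor1}--\ref{lemma10}, Lemma~\ref{lemmaid}, and the preceding Proposition. The only point requiring care is the equality of the two closures, and within it the role of the reflexive base cases of $Beta\!\parallel$, which are exactly what lets pure $\upsilon'$-steps be absorbed into $R^{*}SR^{*}$.
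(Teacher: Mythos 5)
Your proposal is correct and follows exactly the paper's route: the paper's entire proof is the chain of inclusions $\lambda\upsilon'\subseteq R^{*}SR^{*}\subseteq(\lambda\upsilon')^{*}$, which is precisely the equality of reflexive--transitive closures you establish (your extra detail about the reflexivity of $Beta\!\parallel$ and the decomposition of parallel steps is just the justification the paper leaves implicit). No gaps; the argument is the intended one.
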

 \begin{proof}$\lambda\upsilon'\subseteq R^*SR^*\subseteq\lambda\upsilon'^*$.\\
 \end{proof}

\newpage 
\section{$\alpha$-conversion and confluence}
\begin{figure}
\begin{framed}
\noindent

\begin{align*}
& &&\| G\vdash \x\|=  \x & (\x\in G)\\[15pt]
&&&\|\Gamma,\x\vdash \x\| =  \un \\[15pt]
&&&\ruleone{\|\Gamma\vdash
\x\|=  a}{\|\Gamma,\y\vdash \x\|=   a[\uparrow]} & (\x\neq \y)\\[20pt]
&&&\ruletwo{\|\Gamma\vdash A\|=   a}{\|\Gamma\vdash B\|=  b}{\|\Gamma\vdash
AB\|=   ab}\\[15pt]
&&&\ruleone{\|\Gamma,\x\vdash A\|=   a}{\|\Gamma\vdash\lambda \x.A\|=    \lambda a}\\[15pt]
&&&\ruletwo{\|\Gamma\vdash S \tri\Delta\|=  s }{\|\Delta\vdash A\|=   a}{\|\Gamma\vdash
 S \circ A\|=    a[s]}\\[15pt]
&&&\ruleone{\|\Gamma\vdash B\|=   b}{\|\Gamma\vdash [B/\x]\tri\Gamma,\x\|=  b/ }\\[15pt]
&&&\|\Gamma,\x\vdash W_\x \tri \Gamma\|=    \,\,\uparrow \\[15pt]
&&&\|\Gamma,\y\vdash \{\y\x\} \tri \Gamma,\x\|=   id \\[15pt]
&&&\ruleone{\|\Gamma\vdash S\tri\Delta\|=  s}{\|\Gamma,\x\vdash \,\,\Up S_\x  \tri\Delta,\x\|=
\,\,\Up s}
\end{align*}
\end{framed}
\caption{Correspondence}\label{correspondence}
\end{figure}

Recall that each derivable judgement has a unique derivation (Proposition~\ref{unique}).
\begin{definition} We associate with every derivable judgement $\Gamma\vdash A$  some
$\lambda\upsilon'$-term $\|\Gamma\vdash A\|$ as it  is shown on Figure~\ref{correspondence}.\\
 We associate with every derivable judgement $\Gamma\vdash S\tri\Delta$  some
 $\lambda\upsilon'$-substitution  $\|\Gamma\vdash S\tri\Delta\|$ as it is shown on Figure~\ref{correspondence}.
\end{definition}

\begin{example}
$$\ruleone{\ruletwo{\|\{x\},x\vdash W_x\tri\{x\}\|=\,\,\uparrow}{\|\{x\}\vdash x\|= x}{\|\{x\},x\vdash
W_x\circ x\|= x[\uparrow]}}{\|\{x\}\vdash\lambda x.W_x\circ x\|=\lambda (x[\uparrow])}$$
\end{example}

\begin{example}
$$\ruleone
{\ruletwo
{\|\emptyset,y\vdash\{yx\}\tri\emptyset,x\|= id}
{\|\emptyset,x\vdash x\|=\un}
{\|\emptyset,y\vdash \{yx\}\circ x\|=\un[id]}}
{\|\emptyset\vdash\lambda y.\{yx\}\circ x\|=\lambda(\un[id])}
$$
\end{example}

\begin{proposition} If $\|\Gamma\vdash A\|= a$, then $a$ is well-formed.\\ If $\|\Gamma\vdash
S\tri\Delta\|= s$, then $s$ is well-formed.
\end{proposition}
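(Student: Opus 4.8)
The plan is to prove a slightly sharper statement by induction on the unique derivation that defines the correspondence (Proposition~\ref{unique}), one that records exactly which natural number plays the role of $\lambda\upsilon'$-context. For a context $\Gamma$ composed of a global part $G$ and a local part $L$, write $|\Gamma|$ for the length of $L$ (counted with multiplicity). I would prove simultaneously that if $\Gamma\vdash A\bo a$ then $|\Gamma|\vdash a$ is derivable in $\lambda\upsilon'$, and that if $\Gamma\vdash S\tri\Delta\bo s$ then $|\Gamma|\vdash s\tri|\Delta|$ is derivable. The proposition then follows at once, since well-formedness means derivability of $n\vdash a$ (resp.\ $n\vdash s\tri m$) for some $n$ (resp.\ some $n,m$).

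The induction is essentially a clause-by-clause translation: every rule of Figure~\ref{correspondence} is designed to mirror one rule of Figure~\ref{inf}, and the bookkeeping reduces to the identity $|\Gamma,\x|=|\Gamma|+1$, immediate from the Convention defining $\Gamma,\x$. Concretely, the two variable axioms correspond to $R1$ (where $|G|=0$) and $R2$; the clause $\Gamma,\y\vdash\x\bo a[\uparrow]$ corresponds to $R7$ followed by $R5$; application corresponds to $R3$; abstraction to $R4$; the clause for $S\circ A$ to $R5$; and the four substitution clauses $[B/\x]$, $W\x$, $\{\y\x\}$, $S_\x$ to $R6$, $R7$, $R8$, $R9$ respectively. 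In each case the induction hypothesis supplies the premises of the matching $\lambda\upsilon'$-rule with exactly the right indices.

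I do not expect a genuine obstacle here; the content is entirely in setting up the correct invariant. The only points requiring a moment's care are the source/target convention for substitution judgements — one must read $n\vdash s\tri m$ as sending a length-$m$ context to a length-$n$ context, matching the direction of $\Gamma\vdash S\tri\Delta$ — and the clause for $\{\y\x\}$, where both $\Gamma,\y$ and $\Gamma,\x$ have local length $|\Gamma|+1$, so that the constraint of $R8$ (that $id$ be typed as $n+1\vdash id\tri n+1$) is met. Every remaining clause is a one-line match between an instance of a correspondence rule and the correspondingly-indexed instance of an inference rule.
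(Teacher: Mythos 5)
Your proposal is correct and is essentially the paper's own proof: the paper likewise strengthens the statement to ``$n\vdash a$ is derivable where $n$ is the length of the local part of $\Gamma$'' (and $n\vdash s\tri m$ with $m$ the local length of $\Delta$) and proceeds by induction on the derivation. Your clause-by-clause matching of the correspondence rules to $R1$--$R9$, and the observation that the bookkeeping reduces to $|\Gamma,\x|=|\Gamma|+1$, just makes explicit what the paper calls an ``easy induction.''
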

\begin{proof} Easy induction  shows that if $\|\Gamma\vdash A\|= a$,
 then $n\vdash a$ is derivable, where $n$ is the length of local part of $\Gamma$. Similarly, if
 $\|\Gamma\vdash S\tri\Delta\|= s$,
 than $n\vdash s\tri m$ is derivable, where $n$ is the length of local part of $\Gamma$ and $m$ is
 the length of local part of $\Delta$.\\
 \end{proof}

 \begin{example}
 $$\ruleone{\ruletwo{\{x\},x\vdash W_x\tri\{x\}}{\{x\}\vdash x}{\{x\},x\vdash
 W_x\circ x}}{\{x\}\vdash\lambda x.W_x\circ x}\qquad
 \ruleone{\ruletwo{1\vdash \,\,\uparrow\tri\, 0}{0\vdash x}{1\vdash
 x[\uparrow]}}{0\vdash\lambda (x[\uparrow])}$$
 \end{example}

\begin{definition}We write $A\equiv_{\Gamma}B$ iff
 $\|\Gamma\vdash A\|$ is the same $\lambda\upsilon'$-term as
     $\|\Gamma\vdash B\|$.
      \end{definition}

   Note that if  $\|\Gamma\vdash A\|=  a$, then $\Gamma\vdash A$ is derivable. Hence
   $A\equiv_{\Gamma}B$ implies $A$ and $B$ are well-formed.

      \begin{example}If $\Gamma=\{x\},y$, then $W_y\circ x\equiv_{\Gamma}x$. Both terms correspond
      to $ x[\uparrow]$.
   \end{example}

   \begin{example}$\lambda x.W_x\circ x\equiv_{\{x\}}\lambda y.W_y\circ x\equiv_{\{x\}}\lambda y.
   x$\\
   All these terms correspond to $\lambda (x[\uparrow])$. But $\lambda x.W_x\circ
   x\not\equiv_{\{x\}}\lambda x.x$, because $\lambda x.x$ corresponds to $\lambda\un$.
   \end{example}

 \begin{definition} We write $A\equiv_{\alpha}B$ iff $FV(A)=FV(B)$ and
 $A\equiv_{\Gamma} B$, where $\Gamma=FV(A)=FV(B)$.
 \end{definition}

 \begin{example} $\lambda y.xy\equiv_{\alpha}\lambda z.xz$
 \end{example}

 \begin{example}$\lambda x.W_x\circ x\equiv_{\alpha}\lambda y.W_y\circ x\equiv_{\alpha}\lambda y.x\not\equiv_{\alpha}\lambda x.
 x$
   \end{example}

   Now we shall prove confluence of $\lambda\alpha$ in the following form:

$$\begin{tikzpicture}
\node (K1) at (0,0) {$A_1\equiv_{\alpha} A_2$};
\node (K2) at (-2.3,-1.5) {$B_1$};
\node (K3) at (2.3,-1.5) {$B_2$};
\node (K4) at (0,-3) {$C_1\equiv_{\alpha} C_2$};
\begin{scope}[>=latex]
\path[->>] ([yshift= -5pt] K1.west) edge node[left=2pt,near start]{$\lambda\alpha$} ([yshift= 5pt]
K2.east) ;
\path[->>] ([yshift= -5pt] K1.east) edge node[right=2pt,near start]{$\lambda\alpha$}([yshift= 5pt]
K3.west) ;
\path[->>][dashed] ([yshift= -5pt] K2.east) edge node[left]{$\lambda\alpha$}([yshift= 5pt] K4.west)
;
\path[->>][dashed] ([yshift= -5pt] K3.west) edge node[right]{$\lambda\alpha$}([yshift= 5pt]
K4.east) ;
\end{scope}
\end{tikzpicture}$$

\begin{lemma}If the following conditions hold
\begin{itemize}
\item $\|\Gamma\vdash A\|=  a$
\item $A\overset{W}{\to}B$
\end{itemize}
then $\|\Gamma\vdash B\|=  a$
\end{lemma}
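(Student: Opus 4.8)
The plan is to prove, by induction on the derivation of the compatible closure $A\overset{W}{\to}B$, the stated claim for terms together with the parallel claim for substitutions: if $\Gamma\vdash S\tri\Delta\bo s$ and $S\overset{W}{\to}S'$, then $\Gamma\vdash S'\tri\Delta\bo s$, with the same target $\Delta$ and the same corresponding $s$. The two statements must be proved simultaneously, because a $W$-redex may sit inside the body $B$ of a substitution $[B/\x]$ or inside $S$ in $S_\x$, so the congruence rules force us to pass between the term level and the substitution level.

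The base case is the crux. Here $A$ is $W\x\circ\z$ with $\x\neq\z$ and $B$ is $\z$. Using the Generation lemma and the uniqueness of derivations and of the correspondence (Proposition~\ref{unique}), I would pin down the shape of $\Gamma\vdash W\x\circ\z\bo a$: its last step is the composition rule, so $\Gamma$ has the form $\Delta,\x$, the premise $\Gamma\vdash W\x\tri\Delta$ corresponds to $\uparrow$, and $\Delta\vdash\z\bo a_0$ for some $a_0$, whence $a=a_0[\uparrow]$. Then I would compute the correspondence of $\Gamma\vdash\z$ directly: since $\Gamma$ is $\Delta,\x$ with at least one local variable and $\x\neq\z$, neither the rule for a pure global context nor the rule $\Gamma,\x\vdash\x\bo\un$ applies, so the only available rule is the shifting one with side condition $\x\neq\y$, giving $\Delta,\x\vdash\z\bo a_0[\uparrow]$. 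Thus $B$ corresponds to the same $a_0[\uparrow]=a$. The essential observation is that the $\uparrow$ contributed by $W\x$ is exactly the $[\uparrow]$ that the correspondence inserts automatically when a variable is looked up past a freshly bound name.

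For the inductive step I would treat each congruence rule of Figure~\ref{closure} by exploiting that the correspondence is defined compositionally. For $\lambda\x.A\to\lambda\x.A'$, the hypothesis $\Gamma\vdash\lambda\x.A\bo\lambda a'$ forces $\Gamma,\x\vdash A\bo a'$; the induction hypothesis gives $\Gamma,\x\vdash A'\bo a'$, and reapplying the $\lambda$ rule yields $\Gamma\vdash\lambda\x.A'\bo\lambda a'$. The two application cases and the two composition cases $S\circ A\to S'\circ A$ and $S\circ A\to S\circ A'$ follow the same pattern: decompose the correspondence, apply the induction hypothesis to the reduced component, and recombine. For the purely substitution-level cases $[B/\x]\to[B'/\x]$ and $S_\x\to S'_\x$ I would check that the target context is preserved, which is what keeps the corresponding substitution ($b/$, resp.\ $\Up s$) unchanged: for $[B/\x]$ the target is $\Gamma,\x$ regardless of $B$, and for $S_\x$ the induction hypothesis yields $\Gamma\vdash S'\tri\Delta$ with the same $\Delta$.

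I expect no real difficulty beyond the base case. Everywhere else the only care needed is the bookkeeping that the corresponding $\lambda\upsilon'$-object and, in the substitution cases, the target context are literally unchanged, and this rests on the uniqueness of derivations and corresponding objects (Proposition~\ref{unique}).
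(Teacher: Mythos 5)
Your proposal is correct and takes essentially the same route as the paper: the heart of both arguments is the observation that $\Delta,\x\vdash W\x\circ\z$ and $\Delta,\x\vdash\z$ (with $\x\neq\z$, forcing rule $R3$) correspond to the same term $a'[\uparrow]$. You merely make explicit, via induction over the compatible closure together with the parallel statement for substitutions, what the paper handles informally by replacing a sub-derivation inside the unique derivation of $\Gamma\vdash A$.
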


\begin{proof} If $A$ contains a $W$-redex $W_\x\circ\z\quad(\x\neq\z$), then the unique derivation
of $\Gamma\vdash A$ contains a sub-derivation of the form\\[10pt]
$$\ruletwo{\Delta,\x\vdash W_\x\tri\Delta}{\ruledot{\Delta\vdash\z}}{\Delta,\x\vdash
W_\x\circ\z}$$\\[10pt]
and the unique derivation of $\Gamma\vdash B$ contains instead of it the sub-derivation\\[10pt]
$$\ruleone{\ruledot{\Delta\vdash\z}}{\Delta,\x\vdash\z}$$\\[10pt]
Suppose $\|\Delta\vdash\z\|=  a'$, then $\|\Delta,\x\vdash W_\x\circ\z\|=   a'[\uparrow]$ and
$\|\Delta,\x\vdash\z\|=   a'[\uparrow]$\\
\end{proof}

\begin{definition}We denote by $\sigma$ the calculus $\lambda\alpha$ without the rules $Beta$, $W$, and
$\alpha$.
\end{definition}

\begin{lemma}If the following conditions hold
\begin{itemize}
  \item $\|\Gamma\vdash A\|=  a$
   \item $A\overset{\sigma\cup\{Beta\}}{\to}B$
  \item $\|\Gamma\vdash B\|=  b$
\end{itemize}
then $a\overset{\lambda\upsilon'}{\to}b$.
\end{lemma}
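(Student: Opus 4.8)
The plan is to argue by a simultaneous induction, for terms and for substitutions, on the structure of the derivation of the one-step reduction in the compatible closure of Figure~\ref{closure}. Since by Proposition~\ref{unique} every derivable judgement has a unique derivation, the correspondence of Figure~\ref{correspondence} is a genuine function on derivable judgements; hence it suffices, in each case, to read off from the forced derivation of $\Gamma\vdash B$ the image $b$ and to check that $a$ rewrites to it in a single $\lambda\upsilon'$-step. Alongside the stated claim I would carry the parallel claim for substitutions: if $\Gamma\vdash S\tri\Delta\bo s$, $S\to S'$ by a rule other than $W$, and $\Gamma\vdash S'\tri\Delta\bo s'$, then $s\overset{\lambda\upsilon'}{\to}s'$. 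Substitution reductions arise only through the two congruence rules $B\to B'\Rightarrow[B/\x]\to[B'/\x]$ and $S\to S'\Rightarrow S_\x\to S'_\x$, so the substitution claim has no base cases and bottoms out in the term claim.

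First I would treat the base cases, where the redex sits at the root of $A$. For each rewrite rule of Figure~\ref{lamalp} other than $W$ and $\alpha$, the Generation lemma for $\lambda\alpha$ fixes the shape of the derivation of $\Gamma\vdash A$, and Figure~\ref{correspondence}, applied constructor by constructor, computes the images of both sides. A routine check then shows that $Beta$, $App$, $Lambda$, $Var$, $Shift$ map to $Beta$, $App$, $Lambda$, $Var$, $Shift$, and that $IdVar$, $IdShift$, $LiftVar$, $LiftShift$ map to $VarId$, $ShiftId$, $VarLift$, $ShiftLift$ of $\lambda\upsilon'$, each in exactly one step. For example $[B/\x]\circ\x$ has image $\un[\beta/]$ (where $\Gamma\vdash B\bo\beta$), which $Var$-reduces to $\beta$, the image of $B$.

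The delicate base cases are the primed rules $Shift'$, $IdShift'$, $LiftShift'$, which rewrite a bare variable $\z$ with $\z\neq\x$. Here the essential point is the correspondence clause for $R3$: for $\x\neq\z$, if $\Gamma\vdash\z\bo a'$ then $\Gamma,\x\vdash\z\bo a'[\uparrow]$. Because the $\z$ on the left-hand sides lives in the enlarged context, $[B/\x]\circ\z$ has image $a'[\uparrow][\beta/]$ while its reduct $\z$ has image $a'$, so $Shift'$ is realised by the $\lambda\upsilon'$-rule $Shift$; likewise $IdShift'$ gives $a'[\uparrow][id]\to a'[\uparrow]$, i.e.\ $ShiftId$, and $LiftShift'$ gives $a'[\uparrow][\Up s]\to a'[s][\uparrow]$, i.e.\ $ShiftLift$. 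Thus the hidden $[\uparrow]$ introduced by translating a non-rightmost variable is exactly what makes several $\lambda\alpha$-rules collapse onto a single $\lambda\upsilon'$-rule; this is the main point of the lemma and the step I expect to require the most care. It also explains why $W$ is excluded: by the preceding lemma a $W$-step leaves the correspondence unchanged, so it corresponds to zero $\lambda\upsilon'$-steps rather than one.

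Finally the congruence cases are handled by the induction hypothesis together with the compositionality of the correspondence. If $A=S\circ C\to S\circ C'=B$ with $C\to C'$, then $a=c[s]$ and $b=c'[s]$, where $\Delta\vdash C\bo c$, $\Delta\vdash C'\bo c'$ and $\Gamma\vdash S\tri\Delta\bo s$; the sub-reduction uses the same (non-$W$) rule, so the induction hypothesis in context $\Delta$ gives $c\overset{\lambda\upsilon'}{\to}c'$, whence $c[s]\to c'[s]$ by the contextual closure. The cases $AB$ and $\lambda\x.A$ are analogous, and the cases reducing $S$ in $S\circ A$, $B$ in $[B/\x]$, and $S$ in $S_\x$ appeal to the parallel substitution claim; the codomain preservation needed to keep the outer judgement well-formed follows from subject reduction (Theorem~\ref{subbred}) and the Generation lemma. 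In every case the reduction produced in $\lambda\upsilon'$ is a single step, matching the statement.
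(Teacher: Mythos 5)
Your proposal is correct and follows essentially the same route as the paper: the paper's proof is a one-line observation that the non-$W$ rules of $\sigma\cup\{Beta\}$ correspond to the rules of $\lambda\upsilon'$, illustrated only by the $Shift'$ case, which is exactly the delicate point you isolate (the hidden $[\uparrow]$ contributed by the $R3$ clause makes $Shift'$, $IdShift'$, $LiftShift'$ land on $Shift$, $ShiftId$, $ShiftLift$). Your version merely makes explicit the induction on the compatible closure and the parallel claim for substitutions that the paper leaves implicit.
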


\begin{proof}The rules of $\sigma\cup\{Beta\}$  correspond to the rules of
$\lambda\upsilon'$. For example, consider the rule $Shift'$. Suppose $A\overset {Shift'}{\to}B$.
Then $A$ contains a redex $[C/\x]\circ\z\,\,\,(\x\neq\z)$. The derivation of $\Gamma\vdash A$ must
contain a sub-derivation of the form\\[10pt]
$$\ruletwo
{\ruleone{\ruledot{\Delta\vdash C}}{\Delta\vdash[C/\x]\tri\Delta,\x}}
{\ruleone{\ruledot{\Delta\vdash\z}}{\Delta,\x\vdash\z}}
{\Delta\vdash[C/\x]\circ\z}$$\\[10pt]
The derivation of $\Gamma\vdash B$ contains instead of it the sub-derivation\\[10pt]
$${\ruledot{\Delta\vdash\z}}$$\\[10pt]
Suppose $\|\Delta\vdash C\|=  c$ and $\|\Delta\vdash\z\|=  a'$. Then  $\|\Delta\vdash[C/\x]\circ\z\|=
a'[\uparrow][c/]$
  $$\ruletwo
 {\ruleone{\ruledot{\|\Delta\vdash C\|=  c}}
 {\|\Delta\vdash[C/\x]\tri\Delta,\x\|= [c/]}}
 {\ruleone{\ruledot{\|\Delta\vdash\z\|=  a'}}
 {\|\Delta,\x\vdash\z\|=  a'[\uparrow]}}
 {\|\Delta\vdash[C/\x]\circ\z\|=  a'[\uparrow][c/]}$$

We obtain\\[5pt]
$a'[\uparrow][c/]\overset{Shift}{\to}a'$, hence $a\overset{Shift}{\to}b$\\
\end{proof}

\begin{corollary}\label{tuda}If the following conditions hold
\begin{itemize}
  \item $\|\Gamma\vdash A\|=  a$
   \item $A\overset{\sigma\cup\{Beta,W\}}{-\!\!\!\twoheadrightarrow}B$
  \item $\|\Gamma\vdash B\|=  b$
\end{itemize}
then $a\overset{\lambda\upsilon'}{\twoheadrightarrow}b$.
\end{corollary}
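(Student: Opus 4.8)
The plan is to induct on the length of the reduction sequence $A\overset{\sigma\cup\{Beta\}}{\twoheadrightarrow}B$. The two immediately preceding lemmas handle a single step (one for the rule $W$, one for every other rule), so the work of the corollary is just to iterate them correctly while keeping track of the intermediate corresponding terms.

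For the base case the sequence has length zero, so $A=B$. Since the correspondence is defined by recursion over the \emph{unique} derivation of a judgement (Proposition~\ref{unique}), the corresponding $\lambda\upsilon'$-term is determined by $\Gamma$ and $A$ alone. Hence $\Gamma\vdash A\bo a$ and $\Gamma\vdash B\bo b$ together with $A=B$ force $a=b$, and $a\overset{\lambda\upsilon'}{\twoheadrightarrow}b$ holds by reflexivity.

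For the inductive step I would write the sequence as $A\to A'\overset{\sigma\cup\{Beta\}}{\twoheadrightarrow}B$, where the first arrow is a single rewrite step in $\sigma\cup\{Beta\}$. By subject reduction (Theorem~\ref{subbred}) the judgement $\Gamma\vdash A'$ is derivable, so there is a unique $a'$ with $\Gamma\vdash A'\bo a'$; applying the induction hypothesis to $A'\twoheadrightarrow B$ then gives $a'\overset{\lambda\upsilon'}{\twoheadrightarrow}b$. It remains to connect $a$ with $a'$, and here the two preceding lemmas split the argument. If the rule used in $A\to A'$ is $W$, the first lemma gives $\Gamma\vdash A'\bo a$, and by uniqueness of the correspondence this forces $a'=a$, so $a=a'\overset{\lambda\upsilon'}{\twoheadrightarrow}b$. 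If the rule is any other rule of $\sigma\cup\{Beta\}$, the second lemma gives $a\overset{\lambda\upsilon'}{\to}a'$, and prepending this single step to $a'\twoheadrightarrow b$ yields $a\overset{\lambda\upsilon'}{\twoheadrightarrow}b$. In either case the induction closes.

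The proof is essentially bookkeeping, and I do not expect any genuine obstacle. The one point demanding care is that a $W$-step in $\lambda\alpha$ is invisible on the $\lambda\upsilon'$ side: it leaves the corresponding term fixed, so there is no step-for-step matching of the two reductions, and the lengths can differ. The device that makes the case analysis clean is precisely the uniqueness of the derivation, hence of the correspondence, supplied by Proposition~\ref{unique}, used together with subject reduction to guarantee that every intermediate term remains well-formed and therefore lies in the domain of the correspondence.
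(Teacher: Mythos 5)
Your proof is correct and matches the paper's intent: the corollary is stated there without an explicit proof precisely because it is the straightforward iteration of the two preceding single-step lemmas, with the $W$-steps leaving the corresponding $\lambda\upsilon'$-term unchanged and all other steps contributing one $\lambda\upsilon'$-step. Your bookkeeping via Proposition~\ref{unique} and subject reduction is exactly the right way to make that iteration precise.
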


\begin{lemma}\label{suda} If the following conditions hold
\begin{itemize}
  \item $\|\Gamma\vdash B\|=  b$
  \item $b\overset{\lambda\upsilon'}{\twoheadrightarrow}c$
\end{itemize}
then there exists a term $C$ such that
\begin{itemize}
\item $\|\Gamma\vdash C\|=  c$
  \item $B\overset{\sigma\cup\{Beta\}}{-\!\!\!\twoheadrightarrow}C$
  \end{itemize}
\end{lemma}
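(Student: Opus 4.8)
The plan is to run the previous two lemmas in reverse, reconstructing a single $\lambda\alpha$-step from a single $\lambda\upsilon'$-step. Because the correspondence of Figure~\ref{correspondence} is computed by recursion over the unique derivation (Proposition~\ref{unique}) of $\Gamma\vdash B$, I would argue by induction on that derivation, proving simultaneously the companion statement for substitutions: if $\Gamma\vdash S\tri\Delta\bo s$ and $s\overset{\lambda\upsilon'}{\to}s'$, then $S\overset{\sigma\cup\{Beta\}}{\to}S'$ for some $S'$ with $\Gamma\vdash S'\tri\Delta\bo s'$. The two statements call each other only across the embeddings of $B$ into $[B/\x]$ and of $S$ into $S_\x$, so the induction is well-founded; note also that no $\lambda\upsilon'$-substitution is itself a redex, so every reduction inside an $s$ is in fact a reduction inside a term embedded in $S$, which is handled by the term statement.

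I would first dispose of the redexes lying strictly inside $b$. By compositionality of the correspondence, such a redex lies inside the image of a proper subterm of $B$ (or of a term embedded in a substitution of $B$); the inductive hypothesis supplies a matching step there, and the compatible-closure rules of Figure~\ref{closure} lift it to a single step $B\to C$, while compositionality yields $\Gamma\vdash C\bo c$. The only thing to check is that no redex is ``trapped'' inside a weakening chain: the image of a variable produced by the three variable clauses of Figure~\ref{correspondence} is always one of $\x$, $\un$, or a chain $\cdots[\uparrow]$, and none of these contains a $\lambda\upsilon'$-redex; hence every redex of $b$ genuinely sits either inside a subterm of $B$ or at one of the root cases below.

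For the root cases the Generation lemma forces $B$ into the shape dictated by the outermost $\lambda\upsilon'$-rule. When that rule already fixes the $\lambda\alpha$-shape unambiguously — $Beta$, $App$, $Lambda$, $Var$, $VarId$, $VarLift$ — the mirroring step is the homonymous $\lambda\alpha$-rule $Beta$, $App$, $Lambda$, $Var$, $IdVar$, $LiftVar$, and $\Gamma\vdash C\bo c$ is read straight off Figure~\ref{correspondence}. For example a $Var$-redex $\un[b_1/]$ forces $S=[B_1/\x]$ and, since $\un$ is the top local variable (clause $\Gamma,\x\vdash\x\bo\un$), forces $A=\x$; then $[B_1/\x]\circ\x\to B_1$ with $B_1\bo b_1$. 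The $App$ and $Lambda$ cases use that an application (resp.\ abstraction) image can only arise from an application (resp.\ abstraction), and that an outer $[s]$ over a non-variable image must come from the clause $\Gamma\vdash S\circ A\bo a[s]$, so that $s$ is the image of a genuine $\lambda\alpha$-substitution.

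The real content is the trio $Shift$, $ShiftId$, $ShiftLift$, whose redexes $a_1[\uparrow][s]$ carry an inner $[\uparrow]$. By the Generation lemma this inner shift has exactly two possible origins. Either it comes from an explicit weakening, $A=W\x\circ A'$ applied through the clause $\Gamma\vdash S\circ A\bo a[s]$; here the inference rules force the variable of $W\x$ to coincide with the one carried by the outer substitution, and the mirroring step is the unprimed rule $Shift$, $IdShift$, $LiftShift$. Or it comes from the weakening clause $\Gamma,\y\vdash\x\bo a[\uparrow]$, so that $B$ has the form $S\circ\z$ with $\z\neq\x$; here the mirroring step is the primed rule $Shift'$, $IdShift'$, $LiftShift'$. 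This is exactly the situation described after Figure~\ref{lamalp}: a free $\z$ sitting under one extra local variable and the term $W\x\circ\z$ share the De Bruijn image $a'[\uparrow]$, and the primed rules exist precisely to realise one $Shift$-family step on the De Bruijn side. I expect teasing apart these two provenances of $[\uparrow]$ — and confirming in each that the contracted $\lambda\alpha$-term carries the image $c$ — to be the main obstacle; the rest is routine verification against the inference and correspondence rules.
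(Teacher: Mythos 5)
Your proposal is correct and matches the paper's intent: the paper's own proof is the single sentence ``The rules of $\lambda\upsilon'$ corresponds to the rules of $\sigma\cup\{Beta\}$ except $W$'', and you have simply carried out that rule-by-rule correspondence in full, correctly identifying the one genuinely delicate point --- that an inner $[\uparrow]$ in a $Shift$/$ShiftId$/$ShiftLift$ redex may originate either from an explicit $W\x$ or from the weakening clause for variables, which is exactly what the primed rules $Shift'$, $IdShift'$, $LiftShift'$ are for.
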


\begin{proof}The rules of $\lambda\upsilon'$ corresponds to the rules of $\sigma\cup\{Beta\}$.
\end{proof}

   \begin{theorem}\label{gexagon} Suppose
   \begin{itemize}
     \item $A_1\equiv_{\Gamma}A_2$
     \item $A_1\overset{\lambda\alpha}{\twoheadrightarrow} B_1$
     \item $A_2\overset{\lambda\alpha}{\twoheadrightarrow} B_2$
   \end{itemize}
      then there are  terms $C_1$ and $C_2$ such that
      \begin{itemize}
        \item $B_1\overset{\lambda\alpha}{\twoheadrightarrow} C_1$
        \item $B_2\overset{\lambda\alpha}{\twoheadrightarrow} C_2$
        \item $C_1\equiv_{\Gamma} C_2$
      \end{itemize}

$$\begin{tikzpicture}
\node (K1) at (0,0) {$A_1\equiv_{\Gamma} A_2$};
\node (K2) at (-2.3,-1.5) {$B_1$};
\node (K3) at (2.3,-1.5) {$B_2$};
\node (K4) at (0,-3) {$C_1\equiv_{\Gamma} C_2$};
\begin{scope}[>=latex]
\path[->>] ([yshift= -5pt] K1.west) edge node[left=2pt,near start]{$\lambda\alpha$} ([yshift= 5pt]
K2.east) ;
\path[->>] ([yshift= -5pt] K1.east) edge node[right=2pt,near start]{$\lambda\alpha$}([yshift= 5pt]
K3.west) ;
\path[->>][dashed] ([yshift= -5pt] K2.east) edge node[left]{$\lambda\alpha$}([yshift= 5pt] K4.west)
;
\path[->>][dashed] ([yshift= -5pt] K3.west) edge node[right]{$\lambda\alpha$}([yshift= 5pt]
K4.east) ;
\end{scope}
\end{tikzpicture}$$
 \end{theorem}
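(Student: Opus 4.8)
The plan is to transport the whole problem into $\lambda\upsilon'$ through the correspondence $\bo$, solve it there using the confluence of $\lambda\upsilon'$ on well-formed terms, and transport the solution back. This is natural because $A\equiv_\Gamma B$ is by definition equality of images, so the entire statement is controlled by what $\lambda\alpha$-reductions do to images. Throughout I use that reducts of well-formed terms are well-formed (the corollary to Theorem~\ref{subbred}), so every intermediate term really has a unique image (Proposition~\ref{unique}), and that images of well-formed terms are well-formed $\lambda\upsilon'$-terms.

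The heart of the argument is a projection lemma: if $\Gamma\vdash A\bo a$ and $A\overset{\lambda\alpha}{\twoheadrightarrow}B$ with $\Gamma\vdash B\bo b$, then $a$ and $b$ have a common $\lambda\upsilon'$-reduct. I would prove this by induction on the length of the reduction, so it suffices to analyse a single step $A\to B$. A $W$-step leaves the image fixed, so $a=b$. A step by any other rule of $\sigma\cup\{Beta\}$ satisfies $a\overset{\lambda\upsilon'}{\to}b$, so Corollary~\ref{tuda} gives $a\twoheadrightarrow b$. The only remaining case is the $\alpha$-step $\lambda\x.A'\to\lambda\y.\{\y\x\}\circ A'$, and this is the crux: since the renaming substitution corresponds to the identity, $\{\y\x\}\bo id$, we have, writing $\Gamma,\x\vdash A'\bo a'$, that $\lambda\x.A'\bo\lambda a'$ while $\lambda\y.\{\y\x\}\circ A'\bo\lambda(a'[id])$. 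By Lemma~\ref{lemmaid} the terms $a'$ and $a'[id]$ have a common $\upsilon'$-reduct, hence so do $\lambda a'$ and $\lambda(a'[id])$; when the $\alpha$-redex sits inside a larger term the same conclusion follows because $\lambda\upsilon'$-reduction is a congruence and the image is built compositionally from the image of the redex. To chain single steps I merge the per-step common reducts: given $a\twoheadrightarrow c$, $a'\twoheadrightarrow c$ from one step and $a'\twoheadrightarrow c'$, $b\twoheadrightarrow c'$ from the induction hypothesis, the confluence of $\lambda\upsilon'$ applied to $a'\twoheadrightarrow c$ and $a'\twoheadrightarrow c'$ produces a $d$ with $c\twoheadrightarrow d$ and $c'\twoheadrightarrow d$, whence $a\twoheadrightarrow d$ and $b\twoheadrightarrow d$.

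With the projection lemma in hand the theorem follows. Since $A_1\equiv_\Gamma A_2$, both have a common image $a$. Let $\Gamma\vdash B_1\bo b_1$ and $\Gamma\vdash B_2\bo b_2$. The projection lemma gives common reducts $a\twoheadrightarrow e_1$, $b_1\twoheadrightarrow e_1$ and $a\twoheadrightarrow e_2$, $b_2\twoheadrightarrow e_2$ in $\lambda\upsilon'$. Applying confluence of $\lambda\upsilon'$ to $a\twoheadrightarrow e_1$ and $a\twoheadrightarrow e_2$ yields a $d$ with $e_1\twoheadrightarrow d$ and $e_2\twoheadrightarrow d$, so $b_1\twoheadrightarrow d$ and $b_2\twoheadrightarrow d$. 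I then pull these back using Corollary~\ref{suda}: there are terms $C_1,C_2$ with $\Gamma\vdash C_1\bo d$, $\Gamma\vdash C_2\bo d$ and $B_i\overset{\sigma\cup\{Beta\}}{\twoheadrightarrow}C_i$. As $\sigma\cup\{Beta\}\subseteq\lambda\alpha$, these are $\lambda\alpha$-reductions $B_i\overset{\lambda\alpha}{\twoheadrightarrow}C_i$, and $C_1\equiv_\Gamma C_2$ because both images equal $d$.

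I expect the main obstacle to be exactly the $\alpha$-case of the projection lemma. Every other rule either fixes the image or corresponds to a genuine $\lambda\upsilon'$-step; the rule $\alpha$ alone has no matching $\lambda\upsilon'$-rule and instead \emph{inserts} an $id$ via $\{\y\x\}\bo id$, which must be absorbed by Lemma~\ref{lemmaid} before the per-step common reducts can be spliced together by the confluence of $\lambda\upsilon'$. Verifying that this absorption survives an arbitrary surrounding context (so that a deep $\alpha$-step still yields a common image-reduct) is where the compositionality of the correspondence and the congruence property of $\lambda\upsilon'$-reduction do the real work.
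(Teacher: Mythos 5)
Your proposal is correct and follows essentially the same route as the paper: project both reduction legs into $\lambda\upsilon'$ via the correspondence, handle $W$-steps as image-preserving, handle $\alpha$-steps by absorbing the inserted $[id]$ with Lemma~\ref{lemmaid}, close the diagram with confluence of $\lambda\upsilon'$ on well-formed terms, and pull back with Corollary~\ref{suda} to get $\sigma\cup\{Beta\}$-legs down to $C_1\equiv_\Gamma C_2$. Your packaging of the per-step analysis into a single projection lemma, spliced by confluence, is just a cleaner organization of the paper's case analysis on the shapes of the two reductions.
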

 \begin{proof}We prove the following stronger result

 $$\begin{tikzpicture}
\node (K1) at (0,0) {$A_1\equiv_{\Gamma} A_2$};
\node (K2) at (-2.3,-1.5) {$B_1$};
\node (K3) at (2.3,-1.5) {$B_2$};
\node (K4) at (0,-3) {$C_1\equiv_{\Gamma} C_2$};
\begin{scope}[>=latex]
\path[->>] ([yshift= -5pt] K1.west) edge node[left=2pt,near start]{$\lambda\alpha$} ([yshift= 5pt]
K2.east) ;
\path[->>] ([yshift= -5pt] K1.east) edge node[right=2pt,near start]{$\lambda\alpha$}([yshift= 5pt]
K3.west) ;
\path[->>][dashed] ([yshift= -5pt] K2.east) edge
node[left]{$\scriptstyle{\sigma\cup\{Beta\}}$}([yshift= 5pt] K4.west) ;
\path[->>][dashed] ([yshift= -5pt] K3.west) edge
node[right]{$\scriptstyle{\sigma\cup\{Beta\}}$}([yshift= 5pt] K4.east) ;
\end{scope}
\end{tikzpicture}$$
 Suppose

 \begin{itemize}
   \item $\|\Gamma\vdash A_1\|=  a$
   \item $\|\Gamma\vdash A_2\|=  a$
   \item $\|\Gamma\vdash B_1\|=  b_1$
   \item $\|\Gamma\vdash B_2\|=  b_2$
 \end{itemize}
 $ $\\
  Case 1.$ $\\

 $$\begin{tikzpicture}
\node (K1) at (0,0) {$A_1\equiv_{\Gamma} A_2$};
\node (K2) at (-2.3,-1.5) {$B_1$};
\node (K3) at (2.3,-1.5) {$B_2$};
\begin{scope}[>=latex]
\path[->>] ([yshift= -5pt] K1.west) edge node[left,near start]{$\scriptstyle{\sigma\cup\{Beta, W\}}$}
([yshift= 5pt] K2.east) ;
\path[->>] ([yshift= -5pt] K1.east) edge node[right,near
start]{$\scriptstyle{\sigma\cup\{Beta, W\}}$}([yshift= 5pt] K3.west) ;
\end{scope}
\end{tikzpicture}$$
Using Corollary~\ref{tuda} and confluence of $\lambda\upsilon'$ on the set of well-formed terms, we
obtain

 $$\begin{tikzpicture}
\node (K1) at (0,0) {$a$};
\node (K2) at (-1.5,-1.5) {$b_1$};
\node (K3) at (1.5,-1.5) {$b_2$};
\node (K4) at (0,-3) {$c$};
\begin{scope}[>=latex]
\path[->>] ([yshift= -5pt] K1.west) edge node[left,near start]{$\lambda\upsilon'$} ([yshift= 5pt]
K2.east) ;
\path[->>] ([yshift= -5pt] K1.east) edge node[right=2pt,near start]{$\lambda\upsilon'$}([yshift=
5pt] K3.west) ;
\path[->>][dashed] ([yshift= -5pt] K2.east) edge node[left=2pt ]{$\lambda\upsilon'$}([yshift= 5pt]
K4.west) ;
\path[->>][dashed] ([yshift= -5pt] K3.west) edge node[right ]{$\lambda\upsilon'$}([yshift= 5pt]
K4.east) ;
\end{scope}
\end{tikzpicture}$$
Then we use Lemma~\ref{suda}.\\[5pt]
Case 2.$ $\\

  $$\begin{tikzpicture}
\node (K1) at (0,0) {$A_1\equiv_{\Gamma} A_2$};
\node (K2) at (-2.3,-1.5) {$B_1$};
\node (K3) at (2.3,-1.5) {$B_2$};
\begin{scope}[>=latex]
\path[->] ([yshift= -5pt] K1.west) edge node[left=2pt,near start]{$\alpha$} ([yshift= 5pt] K2.east)
;
\path[->>] ([yshift= -5pt] K1.east) edge node[right=1pt,near
start]{$\scriptstyle{\sigma\cup\{Beta, W\}}$}([yshift= 5pt] K3.west) ;
\end{scope}
\end{tikzpicture}$$
The term $A_1$ contains an $\alpha$-redex of the form $\lambda\x.A'$. The term $B_1$ contains
instead of it a subterm $\lambda\y.\{\y\x\}\circ A'$. The derivation of $\Gamma\vdash A_1$ contains
a sub-derivation of the form $\Delta\vdash\lambda\x.A'$\\ Suppose $\|\Delta\vdash\lambda\x.A'\|=
\lambda a'$\\
Then $\|\Delta\vdash\lambda\y.\{\y\x\}\circ A'\|= \lambda (a'[id])$\\
Using Lemma~\ref{lemmaid}, we obtain

 $$\begin{tikzpicture}
\node (K0) at (-3,0) {$b_1$};
\node (K1) at (0,0) {$a$};
\node (K2) at (-1.5,-1.5) {$b$};
\begin{scope}[>=latex]
\path[->>][dashed] ([yshift= -7pt] K0.east) edge node[right=2pt,near start]{$\upsilon'$}([yshift=
5pt] K2.west) ;
\path[->>][dashed] ([yshift= -5pt] K1.west) edge node[left,near start]{$\upsilon'$} ([yshift= 5pt]
K2.east) ;
\end{scope}
\end{tikzpicture}$$
for some $b$, hence

 $$\begin{tikzpicture}
\node (K0) at (-3,0) {$b_1$};
\node (K1) at (0,0) {$a$};
\node (K2) at (-1.5,-1.5) {$b$};
\node (K3) at (1.5,-1.5) {$b_2$};
\node (K4) at (0,-3) {$c$};
\begin{scope}[>=latex]
\path[->>][dashed] ([yshift= -7pt] K0.east) edge node[right=2pt,near start]{$\upsilon'$}([yshift=
5pt] K2.west) ;
\path[->>][dashed] ([yshift= -5pt] K1.west) edge node[left,near start]{$\upsilon'$} ([yshift= 5pt]
K2.east) ;
\path[->>] ([yshift= -5pt] K1.east) edge node[right=2pt,near start]{$\lambda\upsilon'$}([yshift=
7pt] K3.west) ;
\path[->>][dashed] ([yshift= -7pt] K2.east) edge node[left=2pt ]{$\lambda\upsilon'$}([yshift= 5pt]
K4.west) ;
\path[->>][dashed] ([yshift= -5pt] K3.west) edge node[right ]{$\lambda\upsilon'$}([yshift= 5pt]
K4.east) ;
\end{scope}
\end{tikzpicture}$$
Then we use Lemma~\ref{suda}.\\[5pt]
  Case 3.$ $\\

  $$\begin{tikzpicture}
\node (K1) at (0,0) {$A_1\equiv_{\Gamma} A_2$};
\node (K2) at (-2.3,-1.5) {$B_1$};
\node (K3) at (2.3,-1.5) {$B_2$};
\begin{scope}[>=latex]
\path[->] ([yshift= -5pt] K1.west) edge node[left=2pt,near start]{$\alpha$} ([yshift= 5pt] K2.east)
;
\path[->] ([yshift= -5pt] K1.east) edge node[right=2pt,near start]{$\alpha$}([yshift= 5pt] K3.west)
;
\end{scope}
\end{tikzpicture}$$
As in the previous case.

$$\begin{tikzpicture}
\node (K) at (3,0) {$b_2$};
\node (K0) at (-3,0) {$b_1$};
\node (K1) at (0,0) {$a$};
\node (K2) at (-1.5,-1.5) {$b$};
\node (K3) at (1.5,-1.5) {$e$};
\node (K4) at (0,-3) {$c$};
\begin{scope}[>=latex]
\path[->>][dashed] ([yshift= -7pt] K0.east) edge node[right=2pt,near start]{$\upsilon'$}([yshift=
5pt] K2.west) ;
\path[->>][dashed] ([yshift= -5pt] K.west) edge node[left,near start]{$\upsilon'$} ([yshift= 7pt]
K3.east) ;
\path[->>][dashed] ([yshift= -5pt] K1.west) edge node[left,near start]{$\upsilon'$} ([yshift= 5pt]
K2.east) ;
\path[->>][dashed] ([yshift= -5pt] K1.east) edge node[right=2pt,near start]{$\upsilon'$}([yshift=
5pt] K3.west) ;
\path[->>][dashed] ([yshift= -7pt] K2.east) edge node[right=2pt,near start]{$\upsilon'$}([yshift=
5pt] K4.west) ;
\path[->>][dashed] ([yshift= -5pt] K3.west) edge node[left,near start]{$\upsilon'$}([yshift= 5pt]
K4.east) ;
\end{scope}
\end{tikzpicture}$$
 \end{proof}

\begin{definition} A term $A$ is  \emph{good} iff there is a global context $G$ such that $G\vdash
 A$ is derivable (the local context is empty).
 \end{definition}

 \begin{example}\label{bedsubst}A term of the form $W_{\x}\circ A$ can not be good
 $$
 \ruletwo{\Delta,\x\vdash W_{\x}\tri\Delta}{\ruledot{\Delta\vdash A}}{\Delta,\x\vdash W_{\x}\circ A}
 $$
 \end{example}

 But a term of the form $\lambda \x.W_{\x}\circ A$ can be good. In a good term, each symbol $W$ must
 be ``killed'' by lambda or another binder.

 \begin{example}The term $\lambda xy.W_y\circ W_x\circ z$ is good.
 \end{example}

 \begin{proposition} Each usual lambda-term is a good term.
 \end{proposition}

 \begin{proof} Because $FV(A)\vdash A$ is derivable for each usual $A$.
 \end{proof}

 \begin{proposition}All reducts of good terms are good.
 \end{proposition}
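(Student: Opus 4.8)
The plan is to observe that being ``good'' is exactly the property of being derivable in a context whose local part is empty, and that Subject reduction (Theorem~\ref{subbred}) preserves the typing context verbatim. First I would reduce to the one-step case: since a \emph{reduct} is obtained by the reflexive--transitive closure $\twoheadrightarrow$ of the one-step relation $\to$, it suffices to show that a single step $A\to B$ sends good terms to good terms; an induction on the length of the reduction sequence then finishes the job.

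For the one-step case, suppose $A$ is good, so that $G\vdash A$ is derivable for some global context $G$ (i.e.\ a context whose local part is empty). If $A\to B$, then Theorem~\ref{subbred} yields $G\vdash B$ \emph{with the same context} $G$. Since $G$ is a global context, the judgement $G\vdash B$ witnesses that $B$ is good. The crucial point is simply that Subject reduction does not alter $\Gamma$, so the emptiness of the local part is automatically maintained; goodness is preserved for free.

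I expect essentially no obstacle here, since all the real work has already been carried out in Theorem~\ref{subbred}, whose proof checks that every rewrite rule preserves the context. The only subtlety worth flagging is that the one-step relation in the statement is the compatible closure of Figure~\ref{closure}, not merely the top-level rewrite rules; but Subject reduction is stated for that full relation, so no additional structural case analysis is required. Hence the proposition is an immediate corollary of Subject reduction specialized to contexts with empty local part, extended to multi-step reductions by iterating along the reduction sequence.
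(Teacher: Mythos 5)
Your proposal is correct and coincides with the paper's proof, which simply cites Theorem~\ref{subbred}: subject reduction keeps the context $G$ unchanged, so the empty local part is preserved, and multi-step reduction follows by iterating. No further comment is needed.
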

 \begin{proof}By Theorem~\ref{subbred}.
 \end{proof}

 \begin{lemma} If $A$ is a good term then $FV(A)$ is uniquely determined by $\|FV(A)\vdash A\|$.
 \end{lemma}

 \begin{proof} $FV(A)$ is a set of all variables in $\|FV(A)\vdash A\|$.
 \end{proof}

 \begin{example} $\|\{y,z\}\vdash \lambda x.xyz\|=\lambda \un yz$
 \end{example}

 \begin{lemma} If $A$ is a good term and $FV(A)\subseteq G$ then $\|FV(A)\vdash A\|$ is the same $\lambda\upsilon'$-term as $\|G\vdash A\|$.
 \end{lemma}

 \begin{proof} ``The same'' derivation holds. Note that $G$ is a set, not an arbitrary context.
 \end{proof}

\begin{theorem} Suppose
   \begin{itemize}
   \item $A_1$ and $A_2$ are good terms;
     \item $A_1\equiv_{\alpha}A_2$
     \item $A_1\overset{\lambda\alpha}{\twoheadrightarrow} B_1$
     \item $A_2\overset{\lambda\alpha}{\twoheadrightarrow} B_2$
   \end{itemize}
      then there are  terms $C_1$ and $C_2$ such that
      \begin{itemize}
        \item $B_1\overset{\lambda\alpha}{\twoheadrightarrow} C_1$
        \item $B_2\overset{\lambda\alpha}{\twoheadrightarrow} C_2$
        \item $C_1\equiv_{\alpha} C_2$
      \end{itemize}

\end{theorem}

\begin{proof} Let $G=FV(A_1)=FV(A_2)$. By Theorem~\ref{gexagon} we obtain

$$\begin{tikzpicture}
\node (K1) at (0,0) {$A_1\equiv_{G} A_2$};
\node (K2) at (-2.3,-1.5) {$B_1$};
\node (K3) at (2.3,-1.5) {$B_2$};
\node (K4) at (0,-3) {$C_1\equiv_{G} C_2$};
\begin{scope}[>=latex]
\path[->>] ([yshift= -5pt] K1.west) edge node[left=2pt,near start]{$\lambda\alpha$} ([yshift= 5pt]
K2.east) ;
\path[->>] ([yshift= -5pt] K1.east) edge node[right=2pt,near start]{$\lambda\alpha$}([yshift= 5pt]
K3.west) ;
\path[->>][dashed] ([yshift= -5pt] K2.east) edge node[left]{$\lambda\alpha$}([yshift= 5pt] K4.west)
;
\path[->>][dashed] ([yshift= -5pt] K3.west) edge node[right]{$\lambda\alpha$}([yshift= 5pt]
K4.east) ;
\end{scope}
\end{tikzpicture}$$
By the previous two lemmas  $\|FV(C_1)\vdash C_1\|$ is the same $\lambda\upsilon'$-term as \\ $\|FV(C_2)\vdash C_2\|$ and $FV(C_1)=FV(C_2)$, hence $C_1\equiv_{\alpha} C_2$

$$\begin{tikzpicture}
\node (K1) at (0,0) {$A_1\equiv_{\alpha} A_2$};
\node (K2) at (-2.3,-1.5) {$B_1$};
\node (K3) at (2.3,-1.5) {$B_2$};
\node (K4) at (0,-3) {$C_1\equiv_{\alpha} C_2$};
\begin{scope}[>=latex]
\path[->>] ([yshift= -5pt] K1.west) edge node[left=2pt,near start]{$\lambda\alpha$} ([yshift= 5pt]
K2.east) ;
\path[->>] ([yshift= -5pt] K1.east) edge node[right=2pt,near start]{$\lambda\alpha$}([yshift= 5pt]
K3.west) ;
\path[->>][dashed] ([yshift= -5pt] K2.east) edge node[left]{$\lambda\alpha$}([yshift= 5pt] K4.west)
;
\path[->>][dashed] ([yshift= -5pt] K3.west) edge node[right]{$\lambda\alpha$}([yshift= 5pt]
K4.east) ;
\end{scope}
\end{tikzpicture}$$

\end{proof}

\begin{note} Confluence holds for all well-formed terms (not only good) but the proof is more complicated.
\end{note}

\newpage

\section{Normal forms}
\begin{figure}
\begin{framed}
\noindent
\textbf{Syntax.} The set of $\sigma\cup\{W\}$-normal forms is inductively defined by the following BNF:
\begin{align*}
\x,\y,\z::&= x\mid y\mid z\mid\ldots \tag{Variables}\\
\D::&=W_\z\circ\z \mid W_\x\circ\D \tag{Blocks}\\
A,B::&= \x \mid \D \mid AB \mid \lambda\x.A   \tag{Terms}
\end{align*}
\end{framed}
\caption{$\sigma\cup\{W\}$-normal forms}\label{sigmanf}
\end{figure}

\begin{lemma}\label{nfW} A well-formed term $A$ is a $\sigma\cup\{W\}$-normal form iff it is constructed from variables and blocks of the form $W_{\x_1}\circ \ldots W_{\x_n}\circ W_\z\circ\z\quad(n\geqslant 0)$ by application and abstraction. See Figure~\ref{sigmanf}.
\end{lemma}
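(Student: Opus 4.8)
The plan is to prove the two implications of the \emph{iff} separately, by structural induction, the substantial direction being that every well-formed $\sigma$-normal form is generated by the grammar of Figure~\ref{sigmanf}.

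For the direction ``grammar-generated $\Rightarrow$ $\sigma$-normal'' I would induct on the grammar. A variable carries no substitution and hence no redex. For a block $W\x_1\circ\ldots\circ W\x_n\circ W\z\circ\z$ the innermost composite $W\z\circ\z$ is \emph{not} a $(W)$-redex, since $(W)$ requires the two variables to differ; and each outer layer $W\x_i\circ(\ldots)$ has a body that again begins with a weakening $W$, i.e.\ is not a variable, so neither $(W)$ nor any of $(Shift)$, $(IdShift)$, $(LiftShift)$ or their primed variants can fire at the root (the latter all require the outer substitution to be $[\,/\,]$, $\{\}$, or $(\cdot)_\x$, not a bare $W$). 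Finally, for an application $AB$ or an abstraction $\lambda\x.A$ built from normal forms, no root redex is created: $(Beta)$ and $(\alpha)$ are excluded from $\sigma$, while $(App)$ and $(Lambda)$ fire only under an outer substitution, which is absent. Since the immediate subterms are normal by the induction hypothesis, the whole term is normal.

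For the converse I would induct on a well-formed $\sigma$-normal term $A$, using that subterms of well-formed terms are well-formed and that subterms of normal forms are normal (a redex in a subterm would lift through the compatible closure of Figure~\ref{closure} to a redex in $A$). The cases where $A$ is a variable, an application, or an abstraction are immediate: the immediate subterms are well-formed and normal, hence grammar-generated by the induction hypothesis, and the constructor keeps us inside the grammar. All the difficulty sits in the case where $A$ has the form $S\circ B$. Here $B$ is well-formed and normal, so by the induction hypothesis it is a variable, a block, an application, or an abstraction; the last two are impossible, since $S\circ(B_1B_2)$ is an $(App)$-redex and $S\circ\lambda\y.B'$ is a $(Lambda)$-redex. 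Thus $B$ is a variable or a block, and it remains to examine each possible form of $S$.

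The key sub-argument — which I expect to be the main obstacle — is to exclude $S$ being $[C/\x]$, $\{\y\x\}$, or $S'_\x$. If $B$ is a variable, then one of $(Var)/(Shift')$, $(IdVar)/(IdShift')$, or $(LiftVar)/(LiftShift')$ applies at once, contradicting normality. If $B$ is a block, then $B$ begins with a weakening, of the form $W\w\circ(\ldots)$, and I claim that well-formedness forces $\w$ to be $\x$. Indeed, for $S\circ B$ to be well-formed the body $B$ must be derivable in the codomain context of $S$, which by the Generation Lemma ends in $\x$ (it is $\Gamma,\x$ for $[C/\x]$, it is $\Gamma,\x$ for $\{\y\x\}$, and it is $\Delta,\x$ for $S'_\x$); on the other hand the domain context of the leading $W\w$ ends in $\w$, so the two contexts coincide only if $\w$ is $\x$. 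Then $S\circ W\x\circ(\ldots)$ is a $(Shift)$-, $(IdShift)$-, or $(LiftShift)$-redex, again contradicting normality. Hence none of these three forms of $S$ can occur. The only remaining possibility is $S$ being $W\x$: if $B$ is a variable $\z$, normality forbids $\z\neq\x$ (that would be a $(W)$-redex), so $\z$ is $\x$ and $A$ is the base block $W\x\circ\x$; if $B$ is a block, then $A$ is $W\x\circ B$, a block by the inductive clause of the grammar. In either case $A$ is a block, which completes the induction. Everything except the context-matching step via the Generation Lemma is a direct reading-off of the rewrite rules.
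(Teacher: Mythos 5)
Your proposal is correct and follows essentially the same route as the paper: structural induction, ruling out $B$ being an application or abstraction under $S\circ B$, and then a case analysis on $S$ against $B$ being a variable or a block. You additionally spell out the easy ``grammar $\Rightarrow$ normal'' direction and make explicit, via the Generation Lemma, why well-formedness forces the leading weakening variable of a block to coincide with the variable of the outer substitution --- a step the paper leaves implicit by silently reusing the letter $\x$ in both places.
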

\begin{proof}Induction over the structure of $A$. Suppose $A$ has the form $S\circ B$.\\
  If $B$ has the form $B_1B_2$, we can apply the rule $App$ and $A$ can not be a $\sigma$-normal form.\\
   If $B$ has the form $\lambda\x.B'$, we can apply the rule $Lambda$ and $A$ can not be a $\sigma$-normal form.\\
    Hence, by induction hypothesis, $B$ must be a variable or a block.\\[5pt]
Case 1. $B$ is a variable $\z$, hence $A$ is $S\circ\z$.\\
 If $S$ has the form $[C/\x]$, we can apply the rule $Var$ (if $\x=\z$) or the rule $Shift'$ (if $\x\neq\z$).\\
  If $S$ has the form $\{\y\x\}$, we can apply the rule $IdVar$ or the rule $IdShift'$.\\
   If $S$ has the form $\Up S'_\x$, we can apply the rule $LiftVar$ or the rule $LiftShift'$.\\
   If $S$ has the form $W_\x$ and $\x\neq\z$, we can apply the rule $W$.\\
    Hence, $S$ must be $W_\z$ and $A$ is $W_\z\circ\z$.\\[5pt]
Case 2. $B$ is a block and has the form $W_\x\circ B'$, hence $A$ is $S\circ W_\x\circ B'$\\
 (where $B'$ is a block or the variable $\x$).\\
 If $S$ has the form $[C/\x]$, we can apply the rule  $Shift$.\\
  If $S$ has the form $\{\y\x\}$, we can apply the rule  $IdShift$.\\
   If $S$ has the form $\Up S'_\x$, we can apply the rule  $LiftShift$.\\
    Hence, $S$ must has the form $W_\y$ and $A$ is the block $W_\y\circ B$.\\
\end{proof}

\begin{theorem}If $A$ is a good term and $A$ is a $\sigma\cup\{W,\alpha\}$-normal form, then $A$ is a usual lambda-term (i.e. without explicit substitutions).
\end{theorem}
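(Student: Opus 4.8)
The plan is to prove the contrapositive: every good $\sigma$-normal form that is \emph{not} a usual lambda-term fails to be an $\alpha$-normal form. By the preceding lemma a well-formed $\sigma$-normal form is built from variables and blocks $W\x_1\circ\cdots\circ W\x_n\circ W\z\circ\z$ by application and abstraction, and since the blocks are the only subterms containing the symbol $\circ$, ``not a usual lambda-term'' is the same as ``contains at least one block''. So it suffices to exhibit, inside any good $\sigma$-normal form that contains a block, a subterm that is an $\alpha$-redex.

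First I would record the trigger for $\alpha$: because the $FV$ of a well-formed term is a finite context, a fresh $\y\notin FV(\lambda\x.A)$ always exists, so $\lambda\x.A$ is an $\alpha$-redex iff $\x\in FV(\lambda\x.A)=O_{\lambda\x}(FV(A))$. Next I would pick any block occurrence and read off its derivation with the Generation lemma: stripping the $W$'s forces the context at the block to be $\Gamma^{(n+1)},\z,\x_n,\ldots,\x_1$ with $\z\in\Gamma^{(n+1)}$. Because $A$ is good its top-level local context is empty, so the entire local part at the block is produced by enclosing abstractions; in particular the entry $\z$ at position $n+1$ comes from a genuine binder $\lambda\z$ of $A$ (the binder that the innermost shift $W\z$ skips), and the corresponding subterm $\lambda\z.C_\z$ really occurs in $A$. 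This use of goodness is essential: for a non-good term the skipped binder need not be present, and indeed the bare block $W\z\circ\z$ is itself a $\sigma\cup\{\alpha\}$-normal form that is not a usual lambda-term.

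The heart of the argument is to show $\z\in FV(\lambda\z.C_\z)$. I would compute $FV(W\x_1\circ\cdots\circ W\z\circ\z)$ directly from Figure~\ref{elim}, obtaining the context with global part $\{\z\}$ and local part $(\z,\x_n,\ldots,\x_1)$, and then propagate this upward to $C_\z$ through the $n$ intervening abstractions and whatever applications surround the block, using $FV(AB)=FV(A)\sqcup FV(B)$, Lemma~\ref{lamcup}, and the monotonicity corollary following it. Two facts drive the computation: the block keeps the local part nonempty all the way up to $C_\z$ (its length drops by at most one per abstraction and $\sqcup$ never shortens it), so every intermediate step is the case $O_{\lambda\x_i}(\Sigma,\x_i)=\Sigma$ rather than the set-case $O_{\lambda\x_i}(G)=G-\{\x_i\}$; consequently the global occurrence of $\z$ is never deleted, even under a name clash $\x_i=\z$, since $\sqcup$ only unions global parts. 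Since $FV(C_\z)$ lies below the context $\Gamma^{(n+1)},\z$ and has nonempty local part, it has the form $\Sigma,\z$, and the surviving global $\z$ gives $\z\in\Sigma$; hence $FV(\lambda\z.C_\z)=O_{\lambda\z}(\Sigma,\z)=\Sigma\ni\z$, so $\lambda\z.C_\z$ is an $\alpha$-redex and $A$ is not an $\alpha$-normal form.

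The main obstacle I anticipate is precisely this $FV$-propagation bookkeeping: I must guarantee that the global occurrence of $\z$ contributed by the block is never stripped by an $O_{\lambda\x_i}$ before reaching $\lambda\z$, which requires verifying that the local part never becomes empty on the way up, and separately keeping the trailing bound $\z$ distinct from the surviving global $\z$ so that $O_{\lambda\z}$ leaves a copy behind. The delicate situation is shadowing, where some $\x_i$ or an outer binder is also named $\z$; I expect it to be absorbed uniformly by the observation that $O$ affects the global part only in the empty-local-part case, which the block rules out.
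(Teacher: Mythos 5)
Your proposal is correct and takes essentially the same route as the paper: both arguments reduce, via the normal-form lemma, to locating a block $W\x_1\circ\cdots\circ W\x_n\circ W\z\circ\z$, using goodness to find the $(n+1)$-th enclosing binder $\lambda\z$, and tracking $FV$ upward through applications and abstractions to show $\z$ survives into $FV(\lambda\z.C_\z)$, making it an $\alpha$-redex. One small inaccuracy: your parenthetical reason that ``$\sqcup$ only unions global parts'' is not literally true (the rule $(\Gamma,\x)\sqcup G=(\Gamma\sqcup(G-\{\x\})),\x$ can absorb a global occurrence into a local entry), but the membership $\z\in\Sigma$ you actually need follows from monotonicity of $\sqcup$ together with the characterization of $\leqslant$, so the argument stands.
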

\begin{proof}By Lemma~\ref{nfW}, it is sufficient to prove that $A$ does not contain blocks. Suppose $A$ contains a block $W_{\x_1}\circ \ldots W_{\x_n}\circ W_\z\circ\z$. By Generation lemma, the derivation of $G\vdash A$ contains a sub-derivation of  the form
 $$\ruledot{\qquad\qquad\Gamma,\z,\x_n\ldots\x_1\vdash W_{\x_1}\circ \ldots W_{\x_n}\circ W_\z\circ\z}$$
Below this judgement we  use only the rules $R4$ and $R5$ from Figure~\ref{inf1}.\\
 Note that\\
$FV(W_{\x_1}\circ \ldots W_{\x_n}\circ W_\z\circ\z)=\{\z\},\z,\x_n\ldots\x_1$\\
 Suppose $B_1$ is a well-formed term, constructed from $W_{\x_1}\circ \ldots W_{\x_n}\circ W_\z\circ\z$ and something else by application. Then $FV(B_1)$ has the form $\Delta_1,\z,\x_n\ldots\x_1$ and  $\z\in\Delta_1$. After the first application of $R5$ we obtain

$$\ruledot{\ruleone{\Gamma,\z,\x_n\ldots\x_1\vdash B_1}
{\Gamma,\z,\x_n\ldots\x_2\vdash\lambda\x_1.B_1}\qquad\qquad}$$
 and  $FV(\lambda\x_1.B_1)=\Delta_1,\z,\x_n\ldots\x_2$\\
Suppose $B_2$ is a well-formed term, constructed from $\lambda\x_1.B_1$ and something else by application. Then $FV(B_2)$ has the form $\Delta_2,\z,\x_n\ldots\x_2$ and $\z\in\Delta_2$. After the second application of $R5$ we obtain

$$\ruledot{\ruleone{\Gamma,\z,\x_n\ldots\x_2\vdash B_2}
{\Gamma,\z,\x_n\ldots\x_3\vdash\lambda\x_2.B_2}\qquad\qquad}\qquad$$
 and  $FV(\lambda\x_2.B_2)=\Delta_2,\z,\x_n\ldots\x_3$\\
After the $n+1$-th application of $R5$ we obtain

$$\ruledot{\quad\ruleone{\Gamma,\z\vdash B_{n+1}}
{\Gamma\vdash\lambda\z.B_{n+1}}}\qquad$$
where $\lambda\z.B_{n+1}$ is a subterm of $A$. $FV(B_{n+1})$ has the form $\Delta_{n+1},\z$   and\\ $\z\in\Delta_{n+1}$. But then $\z\in FV(\lambda\z.B_{n+1})$ and we can apply  the rule $\alpha$ to $\lambda\z.B_{n+1}$, hence $A$ can not be a $\sigma\cup\{W,\alpha\}$-normal form.\\
\end{proof}
\newpage

\section{$\sigma\cup\{W,\alpha\}$ is strongly normalizing}
\begin{figure}
\begin{framed}
\noindent
\textbf{Syntax.} $\Lambda\upsilon''$ is the set of terms inductively defined by the following BNF:

\begin{align*}
& &\x::&= x\mid y\mid z\mid\ldots \tag{Variables}\\
& &a,b::&= \x\mid \un \mid ab \mid \lambda a  \mid \ulam a\mid  a[s]  \tag{Terms}\\
& &s::&=   b/ \mid \,\,\uparrow\,\, \mid id \mid \,\,\,\Up s  \tag{Substitutions}
\end{align*}
\textbf{Rewrite rules.}

\begin{align*}
&App & (ab)[s] &\to (a[s])(b[s]) \\
&Lambda & (\lambda a)[s] &\to\lambda  (a[\Up s]) \\
&Lambda' & (\lambda a)[s] &\to\ulam (a[\Up s]) \\
&Lambda'' & (\ulam a)[s] &\to \lambda (a[\Up s]) \\
&Lambda''' & (\ulam a)[s] &\to\ulam (a[\Up s]) \\
&Var &  \un[b/] &\to b \\
&Shift & a[\uparrow][ b/] &\to a \\
&VarId &  \un[id] &\to \un \\
&ShiftId &  a[\uparrow][id] &\to a[\uparrow] \\
&VarLift & \un[\Up s]  &\to \un \\
&ShiftLift & a[\uparrow][\Up s] &\to a[s][\uparrow]\\
&\alpha & \ulam a &\to \lambda (a[id])\\
&\xi & \ulam a &\to \lambda a
\end{align*}

\end{framed}
\caption{The calculus $\upsilon''$}\label{upsilon''}
\end{figure}

\begin{definition}The calculus $\upsilon''$ is shown on Figure~\ref{upsilon''}. It has a new kind of terms $\ulam a$ and five new rewrite rules $Lambda'$, $Lambda''$, $Lambda'''$, $\alpha$, and $\xi$.
\end{definition}

\begin{definition}We associate with every derivable judgement $\Gamma\vdash A$  some  $\Lambda\upsilon''$-term $\|\Gamma\vdash A\|$ as it is shown in Figure~\ref{correspondence}, but with the following changes for abstraction:

\begin{align*}
&&&\ruleone{\|\Gamma,\x\vdash A\|=  a}{\|\Gamma\vdash\lambda \x.A\|=   \lambda a} & (x\not\in FV(\lambda\x.A)\\[15pt]
&&&\ruleone{\|\Gamma,\x\vdash A\|=  a}{\|\Gamma\vdash\lambda \x.A\|=   \ulam a} & (x\in FV(\lambda\x.A))
\end{align*}
\end{definition}

\begin{example}$ $\\[10pt]
$$\ruleone
{\ruletwo
{\|\{x\},x\vdash W_x\tri\{x\}\|= \,\,\uparrow}
{\|\{x\}\vdash x\|=  x}
{\|\{x\},x\vdash W_x\circ x\|=  x[\uparrow]}}
{\|\{x\}\vdash\lambda x.W_x\circ x\|= \ulam (x[\uparrow])}$$
\end{example}

\begin{theorem}\label{reduce}If the following conditions hold
\begin{itemize}
    \item $A_0\overset{\sigma\cup\{W,\alpha\}}{\longrightarrow} A_1\overset{\sigma\cup\{W,\alpha\}}{\longrightarrow}\ldots\overset{\sigma\cup\{W,\alpha\}}{\longrightarrow} A_n\overset{\sigma\cup\{W,\alpha\}}{\longrightarrow}\ldots$
  \item $\|\Gamma\vdash A_n\|=  a_n\quad\forall n\in\mathbb{N}$
\end{itemize}
then we get

\begin{itemize}
  \item [] $a_0\overset{\upsilon''}{\twoheadrightarrow}a_1\overset{\upsilon''}{\twoheadrightarrow}\ldots\overset{\upsilon''}{\twoheadrightarrow}a_n\overset{\upsilon''}{\twoheadrightarrow}\ldots$
\end{itemize}
 \end{theorem}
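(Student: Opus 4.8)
The plan is to reduce the infinite statement to a single\nobreakdash-step simulation: it suffices to prove that whenever $A\overset{\sigma\cup\{\alpha\}}{\to}A'$ is one step with $\Gamma\vdash A\bo a$ and $\Gamma\vdash A'\bo a'$, then $a\overset{\upsilon''}{\twoheadrightarrow}a'$; chaining these steps (and recalling that by Proposition~\ref{unique} each $a_n$ is uniquely determined by its derivation) yields the claimed chain $a_0\overset{\upsilon''}{\twoheadrightarrow}a_1\overset{\upsilon''}{\twoheadrightarrow}\cdots$. I would prove the single\nobreakdash-step statement by induction on the unique derivation of $\Gamma\vdash A$, that is, by induction on the position of the contracted redex, following the compatible closure of Figure~\ref{closure}.

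For the base cases, where the redex sits at the root, I would run through the thirteen rules of $\sigma\cup\{\alpha\}$ one by one, reading off both corresponding $\upsilon''$-terms and exhibiting the matching $\upsilon''$-reduction. Every rule except $W$ is matched by exactly one $\upsilon''$-step of the same (or evidently corresponding) name: for instance $Shift'$ gives $[b/]\circ W\circ a'\overset{Shift}{\to}a'$, since an occurrence $\z$ read under $\Gamma,\x$ with $\x\neq\z$ corresponds to $W\circ a'$, and $IdShift'$, $LiftShift'$, $IdVar$, $LiftVar$ are analogous. The rule $W$ is the exception: both $W\x\circ\z$ and $\z$ correspond to $W\circ a'$ under $\Gamma,\x$, so $a=a'$ and the empty reduction is justified by reflexivity of $\twoheadrightarrow$ --- this is exactly why the conclusion must be stated with $\twoheadrightarrow$ rather than $\to$. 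Two rules deserve care. For $\alpha$, the side condition $\x\in FV(\lambda\x.A)$ forces the left side to be translated as $\ulam a$, while $FV(\lambda\y.\{\y\x\}\circ A)=FV(\lambda\x.A)$ (computed in Case $\alpha$ of Theorem~\ref{decreaseFV}) together with $\y\notin FV(\lambda\x.A)$ forces the right side to be translated as $\lambda(id\circ a)$; the $\upsilon''$-rule $\ulam a\to\lambda\,id\circ a$ does the job in one step. For $Lambda$, the left abstraction $\lambda\x.A$ and the right abstraction $\lambda\x.S_\x\circ A$ are each translated as $\lambda$ or $\ulam$ according to their $FV$-status, and the four variants $Lambda$, $Lambda'$, $Lambda''$, $Lambda'''$ between them realize all four combinations of these two labels, so one of them always applies without my needing to compute which.

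For the inductive (congruence) cases, a reduction inside an application, or inside either argument of a substitution, lifts the induction hypothesis $b\overset{\upsilon''}{\twoheadrightarrow}b'$ directly through the corresponding $\upsilon''$-congruence, with no change of abstraction labels. The genuinely delicate case is an abstraction $\lambda\x.B\to\lambda\x.B'$. Here I would invoke Theorem~\ref{decreaseFV}: since $FV(\lambda\x.B)\geqslant FV(\lambda\x.B')$, the bound variable can only pass from ``free'' to ``not free'', never the reverse, so the pair consisting of the label of $A$ and the label of $A'$ is one of $(\lambda,\lambda)$, $(\ulam,\ulam)$ or $(\ulam,\lambda)$, but never $(\lambda,\ulam)$. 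In the first two cases the induction hypothesis lifts by the $\lambda$- or the $\ulam$-congruence of $\upsilon''$; in the third case I bridge the mismatch with one extra step, $\ulam b\overset{\upsilon''}{\twoheadrightarrow}\ulam b'\overset{\xi}{\to}\lambda b'$, which is precisely the reason the rule $\xi$ was added to $\upsilon''$.

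I expect the main obstacle to be this $FV$-bookkeeping at abstraction nodes, rather than any individual root case. One must be certain that the translation of an abstraction is governed by a monotone quantity, so that only the direction from $\ulam$ to $\lambda$ ever occurs and a single $\xi$-step always suffices at each such node, and one must check that the four $Lambda$-variants genuinely exhaust the label changes that the $Lambda$-rule can create. Granting Theorem~\ref{decreaseFV} and this coverage, the remaining work reduces to the routine reading-off of corresponding terms from the $\upsilon''$-correspondence.
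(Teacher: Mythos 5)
Your proposal is correct and follows essentially the same route as the paper: a rule-by-rule simulation of each $\sigma\cup\{\alpha\}$-step by zero or more $\upsilon''$-steps ($W$ becoming the empty reduction), with Theorem~\ref{decreaseFV} guaranteeing that an abstraction's label can only move from $\ulam$ to $\lambda$ and never back. You in fact spell out more than the paper does --- in particular the $\xi$-bridge $\ulam b\twoheadrightarrow\ulam b'\overset{\xi}{\to}\lambda b'$ for the congruence case under an abstraction whose label changes, and the exhaustiveness of the four $Lambda$-variants --- both of which the paper leaves implicit.
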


 \begin{proof}Recall that the rule $\alpha$ in $\sigma\cup\{W,\alpha\}$ is as follows\\
 $(\alpha)\quad \lambda\x. A \to \lambda \y.\{\y \x\} \circ A$\quad where\quad $\x\in FV(\lambda\x.A)\,\&\, \y\not\in FV(\lambda\x.A)$\\
 It corresponds to the rule $\alpha$ of $\upsilon''$.\\
By Theorem~\ref{decreaseFV}, $A\overset{\sigma\cup\{W,\alpha\}}{\longrightarrow} B$ implies $FV(A)\geqslant FV(B)$, hence if $\x\not\in FV(\lambda\x.A)$ then $\x\not\in FV(\lambda\x.B)$. Hence $\lambda a$ can not go to $\ulam b$ when we rewrite under lambda.
\end{proof}

\begin{example}$ $\\
$\{x\}\vdash\lambda x.W_x\circ x\overset{\alpha}{\to}\lambda y.\{yx\}\circ W_x\circ x\overset{IdShift}{\to}\lambda y.W_y\circ x\overset{W}{\to}\lambda y.x$\\
goes to\\
$0\vdash\ulam (x[\uparrow])\overset{\alpha}{\to}\lambda (x[\uparrow][id])\overset{IdShift}{\to}\lambda (x[\uparrow])$
\end{example}

To prove that $\upsilon''$ is strongly normalizing, we use the method of semantic
labelling. See~\cite{Zantema}.
\begin{definition}
To each term $a$ and each substitution $s$ we put in correspondence  natural numbers  (weights) $\|a\|$ and $\|s\|$  defined as follows:\\
\begin{align*}
 \|\x\|&=0\\
 \|\un\|&=0\\
 \|ab\|&=max(\|a\|,\|b\|)\\
\|\lambda a\|&=\|a\|+1\\
\|\ulam a\|&=\|a\|+1\\
\|a[s]\|&=\|a\|+\|s\|\\
\|b/\|&= \|b\|\\
\|\uparrow\|&=0\\
\|id\|&=0\\
\|\!\Up s\|&=\|s\|
 \end{align*}
\end{definition}
Note that all functional symbols of $\upsilon''$ (application, $\lambda$, $\ulam$, $-[-]$, $-/$, $\Up\,\,$)  turn to
 monotone functions of $\mathbb{N}$ to
$\mathbb{N}$ or of $\mathbb{N}\times\mathbb{N}$ to $\mathbb{N}$.

\begin{figure}
\begin{framed}
\noindent
\textbf{Syntax.} $\Lambda\upsilon'''$ is the set of terms inductively defined by the following BNF ($i\in\mathbb{N}$):

\begin{align*}
& &\x::&= x\mid y\mid z\mid\ldots \tag{Variables}\\
& &a,b::&= \x\mid \un \mid ab \mid \lambda a  \mid \ulam_i a\mid a\cdot_i[a]  \tag{Terms}\\
& &s::&=   b/ \mid \,\,\uparrow\,\, \mid id \mid \,\,\,\Up s  \tag{Substitutions}
\end{align*}
\textbf{Rewrite rules.}

\begin{align*}
&App & (ab)\cdot_{max(i,j)}[s] &\to (a\cdot_i[s])(b\cdot_j[s]) \\
&Lambda & (\lambda a)\cdot_{k+1}[s] &\to\lambda(a\cdot_k[\Up s]) \\
&Lambda' & (\lambda a)\cdot_{k+1}[s] &\to\ulam_{k+1}(a\cdot_k[\Up s]) \\
&Lambda'' & (\ulam_{i+1} a)\cdot_{i+j+1}[s] &\to\lambda(a\cdot_{i+j}[\Up s]) \\
&Lambda''' & (\ulam_{i+1} a)\cdot_{i+j+1}[s] &\to\ulam_{i+j+1}(a\cdot_{i+j}[\Up s]) \\
&Var & \un\cdot_i[b] &\to b \\
&Shift & a\cdot_i[\uparrow]\cdot_{i+j}[b/] &\to a \\
&VarId & \un\circ_0[id] &\to \un \\
&ShiftId &  a\cdot_i[\uparrow]\cdot_i[id] &\to a\cdot_i[\uparrow] \\
&VarLift & \un\cdot_i[\Up s] &\to \un \\
&ShiftLift & a\cdot_i[\uparrow]\cdot_{i+j}[\Up s] &\to a\cdot_{i+j}[s]\cdot_{i+j}[\uparrow]\\
&\alpha & \ulam_{i+1} a &\to \lambda(a\cdot_i[id])\\
&\xi & \ulam_i a &\to \lambda a\\
&Decr_1 & \ulam_i a &\to \ulam_j a \tag{$i>j$}\\
&Decr_2 & a\cdot_i[s] &\to a\cdot_j[s] \tag{$i>j$}
\end{align*}

\end{framed}
\caption{The calculus $\upsilon'''$}\label{upsilon'''}
\end{figure}

\begin{definition}The calculus $\upsilon'''$ is shown in Figure~\ref{upsilon'''}. It differs from $\upsilon''$ by the presence of natural indexes in $\ulam_i a$ and $a\cdot_i[s]$. The rules of $\upsilon'''$ are the rules of $\upsilon''$, where all terms $\ulam a$ and $a[s]$ are labelled by theirs weights (there are also  new rules $Decr_1$ and $Decr_2$).
\end{definition}

\begin{theorem}\label{normalizing}$\upsilon'''$ is strongly normalizing.
\end{theorem}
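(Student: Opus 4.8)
The plan is to exhibit a single well-founded reduction ordering on $\Lambda\upsilon'''$ that orients every rule schema of Figure~\ref{upsilon'''} from left to right; strong normalization is then immediate. The natural choice is a lexicographic path ordering (LPO) whose precedence is read off from the labels, which is exactly the payoff of the semantic labelling construction: the labels record the weights, and since each rule changes the labels in a controlled monotone way, they can be used as a precedence even though the underlying (unlabelled) weights do not strictly decrease (for instance $\|\ulam a\|=\|a\|+1=\|\lambda\,id\circ a\|$ makes the $\alpha$-rule weight-preserving, so only the drop $\ulam_{i+1}\to{\circ_i}$ in the label witnesses progress).

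First I would fix the precedence $\succ$ on the (infinite) labelled signature. Assign to the indexed symbols the ranks $r(s\circ_i a)=2i+1$ and $r(\ulam_i a)=2i$, so that the indexed symbols form the single descending chain
$$\cdots\ \succ\ \circ_{k+1}\ \succ\ \ulam_{k+1}\ \succ\ \circ_{k}\ \succ\ \ulam_{k}\ \succ\ \cdots\ \succ\ \circ_0\ \succ\ \ulam_0,$$
and place all structural symbols ($\lambda$, application, $\Up$, $[\,/\,]$, $W$, $id$, variables, $\un$) strictly below every indexed symbol, with the one extra requirement $\Up\succ W$ among them. This precedence embeds into $\mathbb{N}$, hence is well founded, and the LPO it induces (giving the binary symbol $\circ_i$ left-to-right lexicographic status, i.e.\ comparing the substitution argument before the term argument) is therefore a well-founded reduction ordering; so it suffices to verify $l\succ_{lpo} r$ for each of the sixteen schemas.

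The verification splits into three easy families. The \emph{consuming} rules $Var$, $Shift$, $IdVar$, $IdShift$, $LiftVar$ reduce to the subterm property, since in each case the right-hand side is a subterm of the left. The \emph{label-only} rules $Decr_1$, $Decr_2$ together with $\xi$ and $\alpha$ use only root comparisons: $\ulam_i\succ\ulam_j$, $\circ_i\succ\circ_j$ for $i>j$ handle the $Decr$ rules, $\ulam_i\succ\lambda$ handles $\xi$, and for $\alpha$ the chain $\ulam_{i+1}\succ\circ_i$ (with $\ulam_{i+1}\succ\lambda$) plus the subterm property on $a$ orients $\ulam_{i+1}a\to\lambda\,id\circ_i a$. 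Finally the \emph{propagation} rules $App$, $Lambda$, $Lambda'$, $Lambda''$, $Lambda'''$, $LiftShift$ all push an outer $\circ$ inward: here the strict index drops $\circ_{k+1}\to\circ_k$ and $\circ_{i+j+1}\to\circ_{i+j}$, together with $\circ_{i}\succ\lambda$, $\circ_i\succ\mathrm{(application)}$, $\circ_i\succ\Up$ and $\circ_i\succ\ulam_i$, give the needed root decreases, while the duplicated or relabelled occurrences of $s$ and $a$ are dominated by the subterm property.

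The step I expect to be the main obstacle is $LiftShift$, $\Up s\circ_{i+j}(W\circ_j a)\to W\circ_{i+j}(s\circ_{i+j}a)$, which keeps the same outer label $i+j$ while simultaneously duplicating $s$ and re-labelling the inner $\circ$. This is precisely the case that forces the two design decisions above: the left-to-right lexicographic status on $\circ$ (so that the first arguments $\Up s$ and $W$ are compared first) and the precedence $\Up\succ W$ among the structural symbols. With those in place one checks that $\Up s\succ_{lpo} W$ and $\Up s\succ_{lpo} s$, whence the left-lexicographic comparison of the argument tuples $(\Up s,\,W\circ_j a)$ and $(W,\,s\circ_{i+j}a)$ succeeds and the outer left-hand side also dominates each subterm of the right-hand side, giving $l\succ_{lpo} r$. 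Once all sixteen schemas are oriented, well-foundedness of $\succ_{lpo}$ yields Theorem~\ref{normalizing}.
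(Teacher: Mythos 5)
Your proposal is correct and follows essentially the same route as the paper: the paper's own proof of Theorem~\ref{normalizing} is a lexicographic path order over the precedence generated by $\circ_i>\ulam_i$, $\ulam_{i+1}>\circ_i$, $\ulam_i>\ulam_j$ and $\circ_i>\circ_j$ for $i>j$, with the structural symbols (and $\Up\,>W$) underneath, which is exactly the interleaved chain you build via the ranks $2i+1$ and $2i$. Your write-up merely totalizes that precedence and spells out the rule-by-rule verification --- in particular the left-to-right lexicographic status on $\circ_i$ and the role of $\Up\succ W$ in orienting $LiftShift$ --- that the paper leaves implicit.
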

\begin{proof}
By choosing the  well-founded precedence\\
\begin{align*}
\cdot_i&>application\\
\cdot_i&>\cdot_j \qquad\,(i>j)\\
\cdot_i&>\lambda  \\
\cdot_i&>\,\, \Up\\
\cdot_i&>\ulam_i\\
\Up\,&>\,\,\uparrow\\
\ulam_i &>\lambda \\
\ulam_{i+1} &>\cdot_i\\
\ulam_i&> id \\
 \ulam_i&>\ulam_j \qquad
 (i>j)
\end{align*}
termination is easily proved by the lexicographic path order.\\
\end{proof}
\begin{theorem}\label{SN''}$\upsilon''$ is strongly normalizing.
\end{theorem}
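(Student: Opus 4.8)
The plan is to derive the strong normalization of $\upsilon''$ from that of $\upsilon'''$ (Theorem~\ref{normalizing}) via Zantema's method of semantic labelling~\cite{Zantema}. The carrier model is the well-founded algebra $(\mathbb{N},>)$ in which every function symbol of $\upsilon''$ is interpreted by the corresponding clause of the weight $\|\cdot\|$: application by $\max$, both $\lambda$ and $\ulam$ by $x\mapsto x+1$, composition $\circ$ by $(x,y)\mapsto x+y$, the substitution $[b/]$ by $\|b\|$, and $W$, $id$, $\Up$ neutrally. As already noted right after the definition of $\|\cdot\|$, all of these interpretations are weakly monotone, which is exactly the monotonicity hypothesis of the labelling theorem. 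The two symbols we label are $\ulam$ and $\circ$: to an occurrence $\ulam a$ we attach the value $\|\ulam a\|=\|a\|+1$, and to an occurrence $s\circ a$ the value $\|s\circ a\|=\|s\|+\|a\|$.

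The first step is to verify that $\|\cdot\|$ is a \emph{quasi-model} of $\upsilon''$, that is, $\|l\|\geq\|r\|$ for every rewrite rule $l\to r$ under any valuation of the metavariables. Almost every rule preserves the weight exactly; for example
\begin{align*}
\|s\circ ab\|&=\|s\|+\max(\|a\|,\|b\|)=\|(s\circ a)(s\circ b)\|,\\
\|s\circ\lambda a\|&=\|s\|+\|a\|+1=\|\lambda\!\Up s\circ a\|,\\
\|\ulam a\|&=\|a\|+1=\|\lambda\, id\circ a\|,
\end{align*}
and the same bookkeeping settles $Lambda'$--$Lambda'''$, $Var$, $IdVar$, $IdShift$, $LiftShift$, and $\xi$. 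The only two rules where the weight may strictly drop are $Shift$, where $\|[b/]\circ W\circ a\|=\|b\|+\|a\|\geq\|a\|$, and $LiftVar$, where $\|\Up s\circ\un\|=\|s\|\geq 0=\|\un\|$; both inequalities hold because $\|b\|,\|s\|\geq 0$. This is the routine part of the argument.

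The second step is to recognize that $\upsilon'''$ is \emph{exactly} the labelled system $\overline{\upsilon''}$ augmented by the decreasing rules. Since the label of $\ulam a$ is $\|a\|+1$ and that of $s\circ a$ is $\|s\|+\|a\|$, each rule of Figure~\ref{upsilon'''} is the semantic-labelling instance of the matching rule of $\upsilon''$: in $\alpha$ the left label $i+1=\|\ulam a\|$ forces $i=\|a\|=\|id\circ a\|$, matching $\ulam_{i+1}a\to\lambda\, id\circ_i a$; in $App$ the outer label $\max(i,j)$ coincides with $\|s\circ ab\|$ because $i=\|s\circ a\|$ and $j=\|s\circ b\|$, matching $s\circ_{\max(i,j)}ab\to(s\circ_i a)(s\circ_j b)$. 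The rules $Decr_1$ and $Decr_2$ are precisely the decreasing rules that the labelling theorem prescribes for the labelled symbols $\ulam$ and $\circ$ (legitimate since the labels are weakly monotone in the argument weights). Hence $\upsilon'''=\overline{\upsilon''}\cup\{Decr_1,Decr_2\}$, and the semantic labelling theorem yields that $\upsilon''$ is terminating iff $\upsilon'''$ is; by Theorem~\ref{normalizing} the latter is strongly normalizing, so $\upsilon''$ is strongly normalizing as well.

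I expect the one genuinely delicate point to be confirming that the chosen labels are consistent across the branching rule $App$, where a single outer composition on the left becomes two differently labelled compositions on the right, so that Figure~\ref{upsilon'''} is literally the labelled system rather than a merely related one; once that fit is checked, the rest is the quasi-model weight verification above together with the invocation of the labelling theorem.
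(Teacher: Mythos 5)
Your proposal is correct and follows exactly the paper's route: the paper also proves this by labelling every subterm of the form $\ulam a$ or $s\circ a$ with its weight $\|\cdot\|$, turning any infinite $\upsilon''$-reduction into an infinite $\upsilon'''$-reduction, and then citing Zantema's semantic labelling theorem together with Theorem~\ref{normalizing}. You have merely made explicit the quasi-model verification and the identification of $\upsilon'''$ with the labelled system plus the decreasing rules, details the paper delegates to the reference.
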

\begin{proof}For any infinite sequence\\
\begin{itemize}
  \item []$a_0\overset{\upsilon''}{\to}a_1\overset{\upsilon''}{\to}a_2\overset{\upsilon''}{\to}\ldots\overset{\upsilon''}{\to}a_n\overset{\upsilon''}{\to}\ldots$
\end{itemize}
we can get an infinite sequence\\ \begin{itemize}
  \item []$a'_0\overset{\upsilon'''}{\twoheadrightarrow}a'_1\overset{\upsilon'''}{\twoheadrightarrow}a'_2\overset{\upsilon'''}{\twoheadrightarrow}\ldots\overset{\upsilon'''}{\twoheadrightarrow}a'_n\overset{\upsilon'''}{\twoheadrightarrow}\ldots$
\end{itemize}
by labelling all subterms of the forms $\ulam a$ and $a[s]$ by their weights. See~\cite{Zantema} (Theorem 81) for details.\\
\end{proof}
\begin{theorem}$\sigma\cup\{W,\alpha\}$ is strongly normalizing on the set of well-formed terms.
\end{theorem}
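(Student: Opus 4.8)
The plan is to argue by contradiction, transporting a hypothetical infinite $\sigma\cup\{\alpha\}$-reduction into an infinite $\upsilon''$-reduction, which is forbidden by Theorem~\ref{SN''}. Suppose some well-formed term $A_0$ admits an infinite reduction
$$A_0 \overset{\sigma\cup\{\alpha\}}{\to} A_1 \overset{\sigma\cup\{\alpha\}}{\to} \cdots \overset{\sigma\cup\{\alpha\}}{\to} A_n \overset{\sigma\cup\{\alpha\}}{\to} \cdots$$
Since $A_0$ is well-formed, $\Gamma\vdash A_0$ holds for some $\Gamma$, and by subject reduction (Theorem~\ref{subbred}) the same judgement $\Gamma\vdash A_n$ is derivable for every $n$. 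Each such derivable judgement has a well-defined $\upsilon''$-correspondent, so I obtain terms $a_n$ with $\Gamma\vdash A_n\bo a_n$, and Theorem~\ref{reduce} yields $a_0\overset{\upsilon''}{\twoheadrightarrow} a_1\overset{\upsilon''}{\twoheadrightarrow}\cdots$.

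The delicate point is that this chain of multistep reductions need not itself be infinite, since some individual arrows may be empty. The rule $W$ is the only rule of $\sigma\cup\{\alpha\}$ with no $\upsilon''$-counterpart, and a $W$-step leaves the correspondent unchanged (if $\Gamma\vdash A\bo a$ and $A\overset{W}{\to}B$ then $\Gamma\vdash B\bo a$), so it contributes zero $\upsilon''$-steps. Every other rule of $\sigma\cup\{\alpha\}$ corresponds to a genuine rewrite rule of $\upsilon''$ and hence contributes at least one step; this is precisely the step-by-step content underlying Theorem~\ref{reduce}. Thus, to force an infinite $\upsilon''$-reduction it suffices to know that the original sequence contains infinitely many non-$W$ steps.

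This last fact follows because $W$ alone is strongly normalizing: the rule $W\x\circ\z\to\z$ strictly decreases the number of symbols of a term, so no term admits an infinite reduction using only $W$. Consequently, between any two consecutive non-$W$ steps there can be only finitely many $W$-steps, and if the overall sequence were infinite it would have to contain infinitely many non-$W$ steps. Each of these produces at least one $\upsilon''$-step, so $a_0$ would admit an infinite $\upsilon''$-reduction, contradicting Theorem~\ref{SN''}. Hence no well-formed term admits an infinite $\sigma\cup\{\alpha\}$-reduction.

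I expect the only genuine obstacle to be the bookkeeping around the $W$-rule: one must verify both that $W$-steps are the unique steps invisible to the translation and that $W$ by itself cannot run forever. Both are straightforward, the first from the correspondence tables (Figure~\ref{correspondence} and Figure~\ref{NewCorr}) together with the $W$-lemma above, and the second from the size decrease just noted. The remaining content is supplied directly by Theorem~\ref{reduce} and Theorem~\ref{SN''}, so the argument is essentially a counting argument layered on top of the established translation.
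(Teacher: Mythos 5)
Your proof is correct and follows the same route as the paper, which simply cites Theorem~\ref{reduce} and Theorem~\ref{SN''}. The one thing you add explicitly --- that $W$-steps translate to empty $\upsilon''$-reductions but cannot occur infinitely often in a row since they strictly decrease term size, so an infinite $\sigma\cup\{\alpha\}$-sequence still yields an infinite $\upsilon''$-sequence --- is a genuine gap in the paper's one-line proof that you are right to close.
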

\begin{proof}By Theorem~\ref{reduce} and Theorem~\ref{SN''}.\\
\end{proof}

\end{document}